\pgfplotsset{compat=1.5}
\newenvironment{proof}{\noindent{\bf Proof : \ }}{\hfill$\Box$\par\medskip}
\newtheorem{theorem}{Theorem}[section]
\newtheorem{corollary}[theorem]{Corollary}
\newtheorem{lemma}[theorem]{Lemma}
\newtheorem{definition}[theorem]{Definition}
\newtheorem{problem}[theorem]{Problem}
\newenvironment{proofof}[1]{\begin{trivlist} \item {\bf Proof
#1:~~}}
  {\qed\end{trivlist}}
\newcommand{\namedref}[2]{\hyperref[#2]{#1~\ref*{#2}}}
\newcommand{\thmlab}[1]{\label{thm:#1}}
\newcommand{\thmref}[1]{\namedref{Theorem}{thm:#1}}
\newcommand{\lemlab}[1]{\label{lem:#1}}
\newcommand{\lemref}[1]{\namedref{Lemma}{lem:#1}}
\newcommand{\corlab}[1]{\label{cor:#1}}
\newcommand{\corref}[1]{\namedref{Corollary}{cor:#1}}
\newcommand{\seclab}[1]{\label{sec:#1}}
\newcommand{\secref}[1]{\namedref{Section}{sec:#1}}
\newcommand{\figlab}[1]{\label{fig:#1}}
\newcommand{\figref}[1]{\namedref{Figure}{fig:#1}}
\newcommand{\deflab}[1]{\label{def:#1}}
\newcommand{\defref}[1]{\namedref{Definition}{def:#1}}
\newcommand{\PPr}[1]{\ensuremath{\mathbf{Pr}\left[#1\right]}}
\newcommand{\Ex}[1]{\ensuremath{\mathbb{E}\left[#1\right]}}
\newcommand{\EEx}[2]{\ensuremath{\underset{#1}{\mathbb{E}}\left[#2\right]}}
\renewcommand{\O}[1]{\ensuremath{{O}\left(#1\right)}}
\newcommand{\D}{\mathcal{D}}
\newcommand{\eps}{\epsilon}
\DeclareMathOperator*{\polylog}{polylog}
\DeclareMathOperator*{\poly}{poly}
\DeclareMathOperator*{\Bin}{Bin}
\newcommand{\ignore}[1]{}
\newcommand{\kl}[2]{D_\text{KL}\left(#1\|#2\right)}
\newcommand{\Ber}[1]{\text{Ber}\left(#1\right)}
\newif\ifnotes\notestrue 
\newcommand{\samson}[1]{\textcolor{purple}{{\bf (Samson:} {#1}{\bf ) }} \marginpar{\tiny\bf
             \begin{minipage}[t]{0.5in}
               \raggedright S:
            \end{minipage}}}
\newcommand{\jakab}[1]{\textcolor{blue}{{\bf (Jakab:} {#1}{\bf ) }} \marginpar{\tiny\bf
             \begin{minipage}[t]{0.5in}
               \raggedright
            \end{minipage}}}  
\newcommand{\samson}[1]{}
\newcommand{\jakab}[1]{}
\renewcommand*{\@fnsymbol}[1]{\textcolor{blue}{\ensuremath{\ifcase#1\or *\or \dagger\or \ddagger\or
 \mathsection\or \mathparagraph\or \triangledown\or \|\or **\or \dagger\dagger
   \or \ddagger\ddagger \else\@ctrerr\fi}}}
\providecommand{\email}[1]{\href{mailto:#1}{\nolinkurl{#1}\xspace}}
\definecolor{darkcerulean}{rgb}{0.03, 0.27, 0.49}
\definecolor{mahogany}{rgb}{0.75, 0.25, 0.0}
\definecolor{darkblue}{rgb}{0.0, 0.0, 0.55}
\definecolor{darkpastelgreen}{rgb}{0.01, 0.75, 0.24}
\definecolor{darkgreen}{rgb}{0.0, 0.2, 0.13}
\definecolor{darkgoldenrod}{rgb}{0.72, 0.53, 0.04}
\definecolor{darkred}{rgb}{0.55, 0.0, 0.0}
\title{Noisy Boolean Hidden Matching with Applications}
\date{}
\author{
\hspace{0.22in}
Michael Kapralov\thanks{School of Computer and Communication Sciences, EPFL. 
E-mail: \email{michael.kapralov@epfl.ch}}
\and
Amulya Musipatla\thanks{School of Computer Science, Carnegie Mellon University.
E-mail: \email{amusipat@andrew.cmu.edu}}
\and
Jakab Tardos\thanks{School of Computer and Communication Sciences, EPFL. 
E-mail: \email{jakab.tardos@epfl.ch}}
\hspace{0.22in}
\and
David P. Woodruff\thanks{School of Computer Science, Carnegie Mellon University. 
E-mail: \email{dwoodruf@cs.cmu.edu}}
\and
Samson Zhou\thanks{School of Computer Science, Carnegie Mellon University. 
E-mail: \email{samsonzhou@gmail.com}}
}
	\gdef\xxxmark{%
		\expandafter\ifx\csname @mpargs\endcsname\relax 
		\expandafter\ifx\csname @captype\endcsname\relax 
		\marginpar{xxx}
		\else
		xxx 
		\fi
		\else
		xxx 
		\fi}
	\gdef\xxx{\@ifnextchar[\xxx@lab\xxx@nolab}
	\long\gdef\xxx@lab[#1]#2{{\bf [\xxxmark #2 ---{\sc #1}]}}
	\long\gdef\xxx@nolab#1{{\bf [\xxxmark #1]}}
\newcommand{\YES}{{\bf YES}~}
\newcommand{\NO}{{\bf NO}~}
\begin{document}
\maketitle
\begin{abstract}
The Boolean Hidden Matching (BHM) problem, introduced in a seminal paper of Gavinsky et. al. [STOC'07],  has played an important role in the streaming lower bounds for graph problems such as triangle and subgraph counting, maximum matching, MAX-CUT, Schatten $p$-norm approximation, maximum acyclic subgraph, testing bipartiteness, $k$-connectivity, and cycle-freeness. The one-way communication complexity of the Boolean Hidden Matching problem on a universe of size $n$ is $\Theta(\sqrt{n})$, resulting in $\Omega(\sqrt{n})$ lower bounds for constant factor approximations to several of the aforementioned graph problems.  The related (and, in fact, more general) Boolean Hidden Hypermatching (BHH) problem introduced by Verbin and Yu [SODA'11] provides an approach to proving higher lower bounds of $\Omega(n^{1-1/t})$ for integer $t\geq 2$. Reductions based on Boolean Hidden Hypermatching generate distributions on graphs with connected components of diameter about $t$, and basically show that long range exploration is hard in the streaming model of computation with adversarial arrivals. 

In this paper we introduce a natural variant of the BHM problem, called noisy BHM (and its natural noisy BHH variant), that we use to obtain higher than $\Omega(\sqrt{n})$ lower bounds for approximating several of the aforementioned problems in graph streams when the input graphs consist only of components of diameter bounded by a fixed constant. 
We also use the noisy BHM problem to show that the problem of classifying whether an underlying graph is isomorphic to a complete binary tree in insertion-only streams requires $\Omega(n)$ space, which seems challenging to show using BHM or BHH alone. 

\if 0
The Boolean Hidden Matching (BHM) problem is an important communication complexity problem that is widely used to prove lower bounds in the streaming model for a variety of problems, such as triangle and subgraph counting, maximum matching, max-cut, Schatten $p$-norm approximation, maximum acyclic subgraph, testing bipartiteness, $k$-connectivity, and cycle-freeness. 
However, one major weakness of BHM is that its communication complexity is $\Omega(\sqrt{n})$, which often prevents the proof of stronger lower bounds. 
On the other hand, the generalization of Boolean Hidden Hypermatching (BHH) has communication complexity $\Omega(n^{1-1/t})$ for hyperedges with $t$ vertices, but the structure of hyperedges hinders reductions for certain problems, e.g., triangle counting. 

In this paper, we introduce the $p$-Noisy Boolean Hidden Matching problem as a novel generalization on BHM/BHH that can simultaneously achieve $\Omega(n)$ lower bounds unobtainable by BHM and avoid the hyperedge structure required by BHH. 
We show that the communication complexity of the $p$-Noisy BHM is $\Omega\left(\sqrt{\frac{n}{p}}\right)$ and the complexity of its generalization, the $p$-Noisy BHH is $\Omega\left(\frac{n^{1-1/t}}{p^{1/t}}\right)$, thus offering tradeoffs between the noise $p$, the complexity of the protocol, and the size of hyperedges $t$. 

We use the flexibility of the $p$-Noisy BHH to show nearly linear lower bounds for graph problems on a parametrized family of inputs. 
We show hardness of approximation for max-cut and maximum matching in the streaming model on graphs parametrized by connected component size. 
We then show that the problem of classifying whether an underlying graph is isomorphic to a complete binary tree on dynamic streams requires $\Omega(n)$ space, which seems challenging for either BHM or BHH alone. 
\fi
\end{abstract}
\setcounter{page}{0}
\newpage
\section{Introduction}
The \emph{streaming model} of computation has emerged as a popular model for processing large datasets. 
In \emph{insertion-only} streams, sequential updates to an underlying dataset arrive over time and are permanent, while in \emph{dynamic} or \emph{turnstile} streams, the sequential updates to the dataset may be subsequently reversed by future updates.  
As many modern large datasets are most naturally represented by graphs  (e.g., social networks, protein interaction networks, or communication graphs in network monitoring) there has been a substantial amount of recent interest in graph algorithms on data streams, both on insertion-only streams, e.g.,~\cite{FeigenbaumKMSZ05,FeigenbaumKMSZ08,VerbinY11,Kapralov13,KapralovKS14,CrouchS14,KapralovKS15,PazS17,BuryGMMSVZ19} and dynamic streams, e.g.,~\cite{AhnGM12a,AhnGM12b,KapralovW14,BhattacharyaHNT15,AssadiKLY16,ChitnisCHM15,ChitnisCEHMMV16,KapralovLMMS17,NelsonY19}. 

The Boolean Hidden Matching (BHM) problem~\cite{KerenidisR06,GavinskyKW06,Bar-YossefJK08, GavinskyKKRW08} is an important problem in communication complexity that has been a major tool for showing hardness of approximation in the streaming model for a variety of graph problems, such as triangle counting~\cite{KallaugherP17,KallaugherKP18}, maximum matching~\cite{AssadiKL17,EsfandiariHLMO18,BuryGMMSVZ19}, MAX-CUT~\cite{KapralovKS15,KoganK15,KapralovKSV17}, and maximum acyclic subgraph~\cite{GuruswamiVV17}. 
In this problem, Alice is given a binary vector $x$ of length $n$ and Bob is given a matching $M$ on $[n] = \{1, 2, \ldots, n\}$ of size $\alpha n/2$ for a small positive constant $\alpha\le 1$, as well as a binary vector $w$ of length $\alpha n/2$ of labels for the edges of $M$. 
Under the promise that either $Mx\oplus w=0^{\alpha n/2}$ or $Mx\oplus w=1^{\alpha n/2}$, the goal is for Alice to send a message of minimal length, so that Bob can determine which of the two cases holds, with probability at least $\frac{2}{3}$.  Here, and in the rest of the paper, we use $M$ to denote both the matching and its edge incidence matrix, so that $Mx$ is a vector indexed by edges $e=(u, v)\in M$ such that $(Mx)_e=x_u \oplus x_v$.

Observe that if Bob determines the parity $(Mx)_e$ of any edge $e$ in the matching, then Bob can check whether $(Mx)_e\oplus w_e=0$ or $(Mx)_e\oplus w_e=1$. 
Thus, it suffices for Alice to send the parities of enough vertices so that with probability at least $\frac{2}{3}$, the parities of both vertices of some edge are revealed to Bob. 
Through a straightforward birthday paradox argument, it follows that $\O{\sqrt{n}}$ communication suffices.   
\cite{KerenidisR06,GavinskyKW06,GavinskyKKRW08} showed that this plain text protocol is essentially tight: 
\begin{theorem}
\thmlab{thm:bhm:old}
\cite{KerenidisR06,GavinskyKW06,GavinskyKKRW08}
Any randomized one-way protocol for the Boolean Hidden Matching problem that succeeds with probability at least $\frac{2}{3}$ requires $\Omega(\sqrt{n})$ bits of communication.
\end{theorem}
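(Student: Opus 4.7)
The plan is to use the standard Fourier-analytic approach on the Boolean cube, following the template of Gavinsky et al. Fix the natural hard distribution: $x$ is uniform in $\{0,1\}^n$, $M$ is a uniformly random matching of size $\alpha n/2$, and with probability $1/2$ each we set $w = Mx$ (YES case) or $w = Mx \oplus 1^{\alpha n/2}$ (NO case). A deterministic one-way protocol with $c$ bits partitions $\{0,1\}^n$ into at most $2^c$ message fibers $\{A_m\}$, and Yao's principle together with an averaging argument reduces the task to showing that for a $1-o(1)$ fraction of $x$'s, the fiber $A = A_{a(x)}$ induces posterior distributions $\mu_0^A, \mu_1^A$ on $(M,w)$ whose total variation distance is $o(1)$.

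Next, I would fix a fiber $A$ and compute the two posteriors explicitly. Writing $f_A(M,w) = \frac{1}{|A|}\sum_{x\in A} \mathbbm{1}[Mx = w]$ for the YES-conditional, and the analogous $g_A$ for the NO-conditional, one expands $f_A - g_A$ in the Fourier basis on $\{0,1\}^n$. Since $w$ only encodes parities $x_u \oplus x_v$ for $(u,v)\in M$, only characters $\chi_S$ supported on an edge-subset of $M$ contribute, i.e.\ only $S \subseteq [n]$ that are exactly a union of edges of $M$ (so $|S|$ is even). A direct Parseval-style computation gives, after averaging over $M$, a bound of the form
\[
\EEx{M}{\|\mu_0^A - \mu_1^A\|_{TV}^2} \;\le\; \sum_{k \ge 1,\; k \text{ even}} p_k \cdot \sum_{|S|=k} \widehat{1_A}(S)^2 \Big/ \left(|A|/2^n\right)^2,
\]
where $p_k = \Pr[S \text{ is a union of } k/2 \text{ edges of } M]$ is roughly $(\alpha/n)^{k/2}$ up to a $\binom{k}{k/2}/(k/2)!$ combinatorial factor.

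The crucial input is then the level-$k$ Fourier bound from hypercontractivity (the Bonami--Beckner/KKL inequality): for $A$ with $|A|/2^n \ge 2^{-c}$,
\[
\sum_{|S|=k} \widehat{1_A}(S)^2 \;\le\; \left(\frac{Ck}{n}\right)^{k} \cdot |A|^2 / 2^{2n}
\]
for an absolute constant $C$ and all even $k$ up to some threshold (this is exactly the regime where the density bound $2^{-c}$ controls the weight-$k$ Fourier mass). Plugging in $p_k \lesssim (\alpha/n)^{k/2}$ and the level-$k$ bound gives a per-level contribution of order $(C'c/\sqrt{n})^{k}$, and summing the geometric series in $k$ yields $o(1)$ whenever $c = o(\sqrt{n})$. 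Since a protocol succeeding with probability $2/3$ requires the expected TV distance to be $\Omega(1)$, this forces $c = \Omega(\sqrt{n})$.

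The main obstacle, and where the argument must be executed carefully, is the interplay in the second step between the combinatorics of the random matching (which characters $\chi_S$ are reachable, and with what probability) and the Fourier mass of $1_A$ at those very characters. A naive Parseval bound would over-count because generic weight-$k$ sets $S$ need not correspond to unions of matching edges; one must therefore use the random matching's symmetry to reduce to a clean level-$k$ sum, after which hypercontractivity finishes the proof. The rest is bookkeeping of constants and a standard reduction from private-coin randomized protocols to the fixed hard distribution via Yao's minimax principle.
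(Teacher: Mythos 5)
The statement you are proving is one the paper itself does not reprove: Theorem~1.1 is quoted from \cite{KerenidisR06,GavinskyKW06,GavinskyKKRW08}, and the paper's overview only sketches the argument (bound the distance between Bob's two posteriors by an $\ell_2$/Fourier computation and finish with Cauchy--Schwarz). Your proposal reconstructs exactly that classical route: hard distribution via Yao, message fibers $A$ of density at least $2^{-c}$, Fourier expansion in which only characters supported on unions of matching edges survive, the bound $p_k\le (Ck/n)^{k/2}$ on the probability that a fixed $S$ with $|S|=k$ is covered by $k/2$ edges of the random matching, and a level-$k$ (KKL/hypercontractive) inequality to control $\sum_{|S|=k}\widehat{1_A}(S)^2$. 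So in structure this is the same proof as in the cited works, not a new route.

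Two points need repair before the sketch is a proof. First, the level-$k$ inequality is mis-stated: as displayed, $\sum_{|S|=k}\widehat{1_A}(S)^2\le (Ck/n)^k\,|A|^2/2^{2n}$ has no dependence on the co-density $c$ and is simply false --- for the codimension-one subcube $A=\{x: x_1=0\}$ the left side at $k=1$ equals $(|A|/2^n)^2$ while the right side is $O(1/n)\cdot(|A|/2^n)^2$. The correct statement is $\sum_{|S|=k}\widehat{1_A}(S)^2\le \left(\tfrac{Cc}{k}\right)^k\left(\tfrac{|A|}{2^n}\right)^2$ for $|A|\ge 2^{n-c}$, valid only in the range $k=O(c)$; combined with $p_k\le (Ck/n)^{k/2}$ this gives the per-level contribution $\left(\tfrac{C'c}{\sqrt{nk}}\right)^k$, which is what your summation step implicitly uses, so the slip is recoverable but the lemma must be fixed. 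Second, because the hypercontractive bound only covers $k=O(c)$, the levels above that threshold need a separate (standard) argument --- e.g.\ the trivial Parseval bound $\sum_{|S|=k}\widehat{1_A}(S)^2\le |A|/2^n$ together with the rapid decay of $p_k$, using $|A|/2^n\ge 2^{-c}$ --- and one should also note only even $k\le \alpha n$ contribute and make explicit the Cauchy--Schwarz step converting the TV distance on Bob's label vector (support size $2^{\alpha n/2}$) into the $\ell_2$ quantity, since that is where the $2^{\alpha n/2}$ factor must cancel. With those repairs your outline matches the known proof and yields $c=\Omega(\sqrt n)$.
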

One of the main weaknesses of the Boolean Hidden Matching problem is that its $\Omega(\sqrt{n})$ communication complexity is not strong enough to characterize the complexity of more difficult problems. 

To address this shortcoming, Verbin and Yu proposed the Boolean Hidden Hypermatching problem (BHH)~\cite{VerbinY11}, in which Alice is given a binary vector $x$ of length $n=kt$ and Bob is given a hypermatching $M$ on $[n]$ in which all hyperedges contain $t$ vertices, as well as a binary vector $w$ of length $k$. 
Under the promise that either $Mx\oplus w=0^k$ or $Mx\oplus w=1^k$, the goal is for Alice to send some message of minimal length, so that Bob can determine which of the two cases holds, with probability at least $\frac{2}{3}$. 
Whereas BHM has complexity $\Omega(\sqrt{n})$, \cite{VerbinY11} showed that BHH has complexity $\Omega(n^{1-1/t})$:
\begin{theorem}
\thmlab{thm:BHH}
\cite{VerbinY11}
Any randomized protocol that succeeds with probability at least $\frac{2}{3}$ for the Boolean Hidden Hypermatching problem with hyperedges that contain $t$ vertices requires Alice to send $\Omega(n^{1-1/t})$ bits of communication.
\end{theorem}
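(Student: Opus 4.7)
The plan is to follow the Fourier-analytic approach that yielded the BHM lower bound (Kerenidis-Raz, Gavinsky et al.) and extend it to handle the larger hyperedge size $t$. I first invoke Yao's minimax principle and fix the following hard distribution: $x \in \{0,1\}^n$ is uniform, $M$ is a uniformly random perfect $t$-hypermatching on $[n]$ (assume $t \mid n$ and set $k = n/t$), and $w \in \{0,1\}^k$ is either $Mx$ (the YES case) or $Mx \oplus 1^k$ (the NO case) with equal probability. Without loss of generality Alice's message is a deterministic function $\mu : \{0,1\}^n \to \{0,1\}^s$, and by averaging over messages I condition on one message $m$ whose preimage $A = \mu^{-1}(m)$ has density at least $2^{-s}$.

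Next, I upper bound Bob's distinguishing advantage by the total variation distance between the posterior distributions of $(M,w)$ under YES and NO, conditioned on $x \in A$. Setting $g = 2^n \mathbbm{1}_A / |A|$ we have $\mathbb{E}[g]=1$ and $\mathbb{E}[g^2] \leq 2^s$. Since the YES/NO difference reduces to flipping the label vector $w$, expanding in the Fourier basis on $\{0,1\}^n$ turns the variation distance into an expression of the form
\begin{equation*}
    \mathbb{E}_M\!\left[\, \Big| \sum_{\emptyset \neq S \subseteq M} \prod_{e \in S} \widehat{g}(\chi_e) \Big| \,\right],
\end{equation*}
where $\chi_e \in \{0,1\}^n$ is the indicator of the hyperedge $e$ and $\widehat{g}(\chi_e)$ is the Fourier coefficient of $g$ at the level-$t$ character corresponding to $e$.

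The crucial ingredient is the level-$t$ inequality, a consequence of the Bonami-Beckner hypercontractive inequality on $\{0,1\}^n$: since $\|g\|_2^2 \leq 2^s$, the total Fourier mass of $g$ at level $t$ satisfies
\begin{equation*}
    \sum_{|T|=t} \widehat{g}(T)^2 \;\lesssim\; \binom{n}{t} \cdot \left(\frac{c\,s}{n\,t}\right)^{t}
\end{equation*}
for an absolute constant $c$, after standard normalization. Taking expectation over a uniformly random perfect $t$-hypermatching $M$ and applying Cauchy-Schwarz, the dominant contribution to the variation distance comes from the level-$t$ terms and is bounded by a quantity of order $\bigl(s / n^{1-1/t}\bigr)^{\Theta(1)}$. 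For this to be $\Omega(1)$, as required for a protocol with constant advantage, we must have $s = \Omega(n^{1-1/t})$.

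The main technical obstacle is controlling the higher-order terms in the Fourier expansion, which correspond to unions of $\ell \geq 2$ disjoint hyperedges and live at levels $2t, 3t, \ldots, n$. One must argue, using the orthogonality of characters together with the combinatorics of sampling disjoint hyperedges from a random hypermatching, that these terms do not overwhelm the level-$t$ contribution. This is handled by a moment calculation (or equivalently an induction on the number of ``revealed'' hyperedges) that tracks how random sampling of $M$ concentrates the sum, so that the $\ell$-th order term contributes at most an $\ell$-th power of the level-$t$ estimate and the geometric series converges, closing the proof.
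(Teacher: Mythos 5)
The paper does not prove this statement at all: Theorem~\ref{thm:BHH} is quoted from Verbin and Yu \cite{VerbinY11}, so there is no internal proof to compare against. Your proposal follows the standard Fourier-analytic route of Gavinsky et al.\ and Verbin--Yu, which is indeed how the cited result is established, so the overall strategy (hard distribution, conditioning on a typical message $A$, bounding the YES/NO total variation distance via Fourier coefficients of the normalized indicator $g$, level inequalities from hypercontractivity) is the right one. However, the sketch has genuine gaps exactly at the steps that carry the proof. First, your displayed expression is wrong: for a fixed hypermatching $M$, the Fourier coefficient of Bob's induced distribution on the label vector at a set $S$ of hyperedges is the single coefficient $\widehat{g}\bigl(\bigcup_{e\in S}e\bigr)$ of $g$ at the union of those hyperedges, i.e.\ a level-$|S|\,t$ coefficient --- it does \emph{not} factor as $\prod_{e\in S}\widehat{g}(\chi_e)$, and orthogonality gives no such factorization. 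Consequently your closing claim that ``the $\ell$-th order term contributes at most an $\ell$-th power of the level-$t$ estimate'' has no justification; the actual argument must invoke the level-$k$ inequality at every level $k=\ell t$, namely $\sum_{|T|=\ell t}\widehat{g}(T)^2\le (c s/(\ell t))^{\ell t}$ (which is only valid for $\ell t=O(s)$, so the very high levels need a separate treatment via Parseval together with the tiny probability that a random hypermatching realizes a fixed large set as a union of edges), combined with a count of how many hypermatchings contain a given $\ell t$-set as a union of $\ell$ hyperedges; summing this over $\ell$ is precisely where the threshold $s=\Theta(n^{1-1/t})$ emerges. Relatedly, the level-$t$ bound you quote, $\binom{n}{t}\bigl(cs/(nt)\bigr)^t$, is not the hypercontractive level-$t$ inequality (for $g$ normalized from a set of density $2^{-s}$ it reads $(cs/t)^t$), so as stated it is an overclaim.

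Second, there is a parity issue your sketch never addresses. In the difference between the YES and NO posteriors only the odd-size sets $S$ survive, and the top term $S=[k]$ with $k=n/t$ corresponds to $\widehat{g}([n])=\E[x\in A]{(-1)^{x_1+\cdots+x_n}}$, which a one-bit message (the parity of $x$) can drive to $\pm1$. Hence if $k$ is odd the statement as you set it up is simply false --- a one-bit protocol distinguishes $Mx$ from $Mx\oplus 1^{k}$ --- and the proof must use that $n/t$ is even so that this term cancels. This is the same subtlety the present paper has to engineer around in its noisy variant by forcing the noise vector to have even Hamming weight (the distribution $\mathcal Z_p^n$ of Definition~\ref{def:ZPN}), so any complete writeup of the Verbin--Yu bound needs to make the parity hypothesis explicit and use it.
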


Boolean Hidden Hypermatching has been used to show stronger lower bounds for cycle counting~\cite{VerbinY11}, maximum matching~\cite{AssadiKL17,EsfandiariHLMO18,BuryGMMSVZ19}, Schatten $p$-norm approximation~\cite{LiW16}, testing biconnectivity, cycle-freeness and bipartiteness~\cite{HuangP19}, MAX-CUT~\cite{KapralovKS15,KoganK15,KapralovKSV17}. The hard distributions on  input graphs that are generated by these reductions typically produce a union of connected components of diameter $\Theta(t)$, where distinguishing between the \YES and \NO cases of the input distribution intuitively requires exploring rather long paths (of length comparable to the diameter of these components). In this work we give  higher than $\Omega(\sqrt{n})$ lower bounds  that at the same time generate graphs with connected components of bounded diameter, and therefore exploit a different source of hardness  (similarly, we prove higher than $\Omega(n^{1-1/t})$ lower bounds for several of the above problems on graphs with components of diameter bounded by $O(t)$). 

In particular, we note that a major weakness of BHH is that this problem only yields lower bounds of $\Omega(n)$ when hyperedge size satisfies $t=\Omega(\log n)$. 
Consequently in the resulting reductions, quantities such as the diameter of the graph or the size of the largest clique often also grow as $\Omega(\log n)$, which prevents the usage of BHH in showing hardness of approximation for specific classes of graphs, such as graphs with bounded diameter or bounded clique number.


\subsection{Our Contributions}
We first introduce a natural parametrization of  BHM/BHH.
\begin{definition}[Noisy Boolean Hidden Matching]
The $p$-Noisy Boolean Hidden Matching problem is a two party communication problem, with parameters $p\in[0,1]$, $\alpha\in(0,1]$, and $n$.
\begin{itemize}
\item
Alice receives a binary string $x\in\{0,1\}^{n}$ of length $n$. 
\item
Bob receives a matching $M$ of size $\alpha n/2$ on $[n]$, as well as a binary vector of labels of length $\alpha n/2$. In the \YES instance Bob's labels are the true parities of the matching edges, that is $Mx$. In the \NO instance Bob's labels are $Mx$ plus some independent random noise $(\Ber{p})$ in each coordinate.
\end{itemize}
Then the goal is for Alice to send a message of minimal length, so that Bob can distinguish between the \YES and \NO cases with probability at least $2/3$.
\end{definition}
Observe that setting $p=1$ recovers the original Boolean Hidden Matching problem. 
For significantly smaller values of $p$, the Hamming distance between \YES and \NO instance labels decreases correspondingly. 
Thus while BHM can be viewed as a gap promise problem, the Noisy Boolean Hidden Matching problem essentially parametrizes the gap size. 
Now it should be apparent that the previously discussed plain text protocol of Alice sending the parities of $\Theta(\sqrt{n})$ vertices should fail for sufficiently small $p$. 
By birthday paradox arguments, the parities for $\Theta(\sqrt{n})$ vertices correspond to the observation of parities for roughly $\Theta(1)$ edges. 
However for $p=o(1)$, Alice and Bob already know the parities of most of the edges, because the vectors generated in the $\YES$ and $\NO$ cases corresponding to the possible edge labels only differ in $o(n)$ coordinates. 
Thus any message of length $\O{\sqrt{n}}$ sent by Alice is unlikely to be helpful to Bob. 
Indeed, we show that the communication complexity of the $p$-Noisy Boolean Hidden Matching problem is generally $\Omega\left(\sqrt{\frac{n}{p}}\right)$. 
%
%
More generally, we define the $p$-Noisy Boolean Hidden Hypermatching problem as a means to find tradeoffs between the noise $p$, the complexity of the problem, and the size of hyperedges $t$ (see \secref{sec:noisy:bhh}). 

\begin{theorem}
For $p=\Omega\left(\frac{1}{n}\right)$, any randomized one-way protocol that succeeds with probability at least $\frac{2}{3}$ for the $p$-Noisy Boolean Hidden Hypermatching problem on hyperedges with $t$ vertices requires $\Omega\left(n^{1-1/t}p^{-1/t}\right)$ bits of communication.
\end{theorem}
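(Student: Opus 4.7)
The plan is to extend the Fourier-analytic template for BHM/BHH developed by Gavinsky--Kempe--Kerenidis--Raz--de~Wolf and Verbin--Yu, carefully tracking how the independent Bernoulli noise dampens the relevant Fourier coefficients. By Yao's minimax principle it suffices to lower-bound deterministic protocols under the uniform prior on $x\in\{0,1\}^n$; write $a=\sigma(x)$ for Alice's $s$-bit message and $\mu_a$ for the induced posterior on $x$, which is uniform on $\sigma^{-1}(a)$.

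For each fixed message $a$ and uniformly random hypermatching $M$, I would bound the total variation distance between Bob's posterior distributions $P_Y$ and $P_N$ on the label vector $w\in\{0,1\}^k$ (with $k=\alpha n/t$ hyperedges) in the \YES and \NO cases. A short character-sum computation gives, for every $T\subseteq M$,
\[\widehat{P_Y}(T)=2^{-k}\,\hat\mu_a(S_T),\qquad \widehat{P_N}(T)=(1-2p)^{|T|}\cdot 2^{-k}\,\hat\mu_a(S_T),\]
where $S_T:=\bigcup_{e\in T}e\subseteq[n]$ and $\hat\mu_a(S):=\mathbb{E}_{x\sim\mu_a}[\chi_S(x)]$. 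Parseval and Cauchy--Schwarz then yield
\[4\,d_{\mathrm{TV}}(P_Y,P_N)^2\leq\sum_{\emptyset\neq T\subseteq M}\hat\mu_a(S_T)^2\cdot\bigl(1-(1-2p)^{|T|}\bigr)^2.\]
The key qualitative difference from the standard BHH analysis is the damping factor $\bigl(1-(1-2p)^{|T|}\bigr)^2\leq\min(1,4p^2|T|^2)$: in noiseless BHH this factor equals $4\cdot\mathbf{1}[|T|\text{ odd}]$ and so the bound is driven by $|T|=1$, whereas in the noisy setting every $|T|$ contributes but small $|T|$ is strongly suppressed once $p$ is small.

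Next I would take expectations over $M$ and $a$ and regroup the sum by $\ell:=|S_T|/t$. For each $S\subseteq[n]$ of size $t\ell$, the contribution is $\mathbb{E}_a\hat\mu_a(S)^2$ multiplied by $\Pr_M[S\text{ is a union of }\ell\text{ edges of }M]$ and by the damping factor at $|T|=\ell$. A standard partitioning argument (number of partitions of $S$ into $t$-tuples, times the probability that each specific $t$-tuple is a hyperedge of $M$) bounds the covering probability by roughly $(Ct\ell/n)^{(t-1)\ell}$ up to $\alpha$-dependent factors, while the Bonami--Beckner hypercontractive inequality (in the one-way form of the Gavinsky et al.\ Fourier-mass bound) controls $\mathbb{E}_a\sum_{|S|=t\ell}\hat\mu_a(S)^2$ as a function of the communication budget $s$.

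The final step is to optimize the resulting sum over $\ell$. In the noiseless BHH the dominant contribution comes from $\ell=1$ and yields the classical $\Omega(n^{1-1/t})$ bound. In the noisy setting the $\min(1,p^2\ell^2)$ factor annihilates small-$\ell$ contributions, so the dominant $\ell$ shifts to $\ell^*\approx 1/p$, which the hypothesis $p=\Omega(1/n)$ keeps safely below the number of hyperedges $k$. Requiring the total to be $\Omega(1)$ at this level then forces $s=\Omega(n^{1-1/t}p^{-1/t})$. The main technical obstacle will be tracking the competing combinatorial factors --- the factorials arising from the hypermatching-covering probability against those from the hypercontractive Fourier bound --- carefully enough to verify that their interplay at $\ell^*\approx 1/p$ produces exactly the claimed $p^{-1/t}$ factor in the final bound rather than some weaker power of $p$.
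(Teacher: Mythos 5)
Your proposal is exactly the Fourier-analytic route that the paper explicitly considers and rejects in its overview: compare Bob's label distribution in the \YES case to its noisy counterpart in the \NO case, pass to Fourier coefficients damped by the noise operator, and bound total variation through Parseval and Cauchy--Schwarz. The setup and the identity $\widehat{P_N}(T)=(1-2p)^{|T|}\,\widehat{P_Y}(T)$ are fine; the gap is in the final optimization step, and it is not a bookkeeping issue that more careful tracking of combinatorial factors can repair. The damping factor $\bigl(1-(1-2p)^{|T|}\bigr)^2\le\min(1,4p^2|T|^2)$ grows quadratically in $|T|$, while the product of the hypercontractive Fourier-mass bound and the hypermatching-covering probability at level $\ell$ only behaves like $\bigl(c\,s^{t}/(n^{t-1}\ell^{\,t-1})\bigr)^{\ell}$ up to constants. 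At the target budget $s\approx \epsilon\, n^{1-1/t}p^{-1/t}$ the base of this geometric factor is of order $\epsilon^{t}/(p\,\ell^{\,t-1})$, which is much larger than $1$ for all $\ell\ll p^{-1/(t-1)}$; the damping only restores a factor $p^2\ell^2$. Concretely, already for $t=2$ and $\ell=3$ the level-$\ell$ term is of order $\epsilon^{6}/p$, which diverges as $p\to 0$ for any fixed $\epsilon$, and similar blow-ups occur throughout the intermediate range of $\ell$. So the right-hand side of your Cauchy--Schwarz bound is not $o(1)$ at the communication level you need to rule out --- the paper's footnote states precisely this: the factor $1-(1-2p)^{|s|}\approx p|s|$ is exactly right at weight $1$ but the higher-weight contributions are much larger, so the expected squared $\ell_2$ distance is genuinely too large and the $\ell_2$ relaxation of TV is too lossy. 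The dominant contributions sit at intermediate weights, not at $\ell^*\approx 1/p$ as you hope, and requiring only the $\ell^*$ term to be small does not control them.

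What the paper does instead is information-theoretic. It views a $p$-noisy (perfect) instance as essentially $Q=1/(2p)$ independent copies of standard BHH whose sizes are distributed like $t$ times an (even-weight-conditioned) binomial of mean $pn/t$, with at most one copy in the \NO case (the problem $\overline{\text{VBHH}}$). A direct-sum argument over the $Q$ copies, combined with a conditional information-cost lower bound of $\Omega((pn)^{1-1/t})$ per good copy --- obtained from the known distributional BHH communication lower bound via the message-compression result of Jain, Pant, and Yao --- gives internal information cost $\Omega\bigl(Q\,(pn)^{1-1/t}\bigr)=\Omega\bigl(n^{1-1/t}p^{-1/t}\bigr)$, and information cost lower-bounds communication. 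If you want to salvage a direct analytic attack, you would have to avoid the $\ell_2$ relaxation altogether; note the authors also report that a KL-divergence variant of the analytic approach from an earlier version contained an error and was withdrawn, which is further evidence that this direction is delicate.
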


Through the flexibility of the $p$-Noisy Boolean Hidden Hypermatching problem, we show hardness of approximation for graph problems on a parametrized family of inputs. 
We first use $p$-Noisy Boolean Hidden Hypermatching to show hardness of approximation for MAX-CUT in the streaming model on graphs whose connected components have bounded size, which is a significant obstacle for reductions to BHH. 
Unlike reductions from BHM~\cite{KapralovKS15}, our methods can show nearly linear lower bounds for approximation close to $1$ even in this setting:
\begin{restatable}{theorem}{thmmaxcutbhhparam}
\thmlab{thm:maxcut:bhh:param}
Let $2\le t\le n/10$ be an integer. 
For $p\in\left[\frac{128t}{n},\frac{1}{2}\right]$, any one-pass streaming algorithm that outputs a $\left(1+\frac{p}{14t}\right)$-approximation to the MAX-CUT with probability at least $\frac{2}{3}$ requires $\Omega\left(n^{1-1/t}p^{-1/t}\right)$ space, even for graphs with components of size bounded by $4t$. 
\end{restatable}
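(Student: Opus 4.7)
The plan is to reduce from the $p$-Noisy Boolean Hidden Hypermatching problem on hyperedges of size $t$ and invoke the $\Omega\left(n^{1-1/t}p^{-1/t}\right)$ lower bound stated earlier. Suppose $\mathcal{A}$ is a one-pass streaming algorithm for $\left(1+\frac{p}{14t}\right)$-approximate MAX-CUT using $s$ bits of memory. I will build from $\mathcal{A}$ a one-way protocol for $p$-Noisy BHH with communication $O(s)$, so that the BHH bound forces $s=\Omega\left(n^{1-1/t}p^{-1/t}\right)$.

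\textbf{Reduction.} Alice and Bob jointly stream a graph $G$ into $\mathcal{A}$. Alice owns $n$ ``input vertices'' $v_1,\ldots,v_n$; for each $i$, she streams a small \emph{anchor gadget} attached to $v_i$ that strictly prefers $v_i$ on the side $x_i$ in any near-optimal bipartition. She then transmits $\mathcal{A}$'s memory state to Bob. For each hyperedge $e=(u_1,\ldots,u_t)$ of Bob's hypermatching with label $w_e$, Bob streams a \emph{parity gadget} on the $t$ input vertices $v_{u_1},\ldots,v_{u_t}$ together with $O(t)$ fresh auxiliary vertices. The parity gadget can be built by chaining $t-1$ pairwise XOR-testing subgadgets of the Kapralov--Khanna--Sudan--Velingker type, so that an auxiliary chain vertex encodes the running parity $\bigoplus_{i\le j}x_{u_i}$; the label $w_e$ selects between two mirror-symmetric variants of the terminal subgadget, arranged so that the full gadget's contribution to MAX-CUT is some fixed $V=\Theta(t)$ when $(Mx)_e\oplus w_e=0$, and exactly $V-1$ when $(Mx)_e\oplus w_e=1$, conditional on the anchors being respected.

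\textbf{Gap analysis.} Write $k=n/t$ for the number of hyperedges. Conditional on the anchoring holding at the optimum, $\mathrm{MAXCUT}(G)=C_{\mathrm{anchor}}+Vk-\Delta$, where $\Delta$ counts the hyperedges on which $(Mx)_e\ne w_e$. In the \YES case $\Delta=0$. In the \NO case, $\Delta$ is a sum of $k$ independent $\Ber{p}$ variables with mean $pk$; the assumption $p\ge 128t/n$ gives $pk\ge 128$, and a Chernoff bound yields $\Delta\ge pk/2$ with probability at least $5/6$. With $C_{\mathrm{anchor}}+Vk=\Theta(n)$ (controlled by the edge count of the anchor and parity gadgets), the additive gap $\Omega(pk)=\Omega(pn/t)$ between the two cases translates into a multiplicative gap of $1+\Omega(p/t)$. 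Tracking constants through the calibration of $V$, $C_{\mathrm{anchor}}$, and the Chernoff slack yields the stated $1+\frac{p}{14t}$ threshold.

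\textbf{Completion and main difficulty.} Each hyperedge's parity gadget together with the anchors of its $t$ incident input vertices lies in a single connected component of size at most $t+O(t)\le 4t$, as required. A $\left(1+\frac{p}{14t}\right)$-approximation to $\mathrm{MAXCUT}(G)$ therefore separates the \YES and \NO regimes with constant probability, and Bob recovers the answer to $p$-Noisy BHH by thresholding $\mathcal{A}$'s output, using only the $O(s)$ bits of $\mathcal{A}$'s state transmitted from Alice. The BHH lower bound then forces $s=\Omega\left(n^{1-1/t}p^{-1/t}\right)$. The main obstacle in making this precise is the simultaneous design of the parity gadget and the anchor gadget: the parity gadget must cleanly encode $(Mx)_e\oplus w_e$ as a unit MAX-CUT decrement while fitting inside at most $3t$ auxiliary vertices, and the anchor gadget must be heavy enough to dominate every near-optimal cut (so that anchoring is respected) yet light enough that the baseline MAX-CUT does not swamp the $\Omega(p/t)$ relative gap. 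Calibrating these two components against each other is the bookkeeping that fixes the $1+\frac{p}{14t}$ constant.
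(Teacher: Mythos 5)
Your high-level strategy matches the paper's: reduce from $p$-Noisy BHH, build one bounded-size component per hyperedge whose MAX-CUT contribution drops by exactly one when the (noisy) label disagrees with the true parity, apply a Chernoff bound to get $\Omega(pn/t)$ flipped labels in the \NO case, and convert the additive gap into a $1+\Theta(p/t)$ multiplicative gap. However, the proof has a genuine gap exactly where you flag ``the main obstacle'': the gadget is never constructed, and the mechanism you propose for it is questionable. MAX-CUT is invariant under exchanging the two sides of the bipartition (independently in each connected component), so an ``anchor gadget that strictly prefers $v_i$ on the side $x_i$'' is not a well-defined notion --- there is no absolute side to anchor to, only relative orientations within a component. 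Moreover, in an \emph{unweighted} graph whose components must have size at most $4t$, there is no room to make an anchor ``heavy enough to dominate every near-optimal cut'' while keeping the baseline small enough that a single-edge decrement per flipped hyperedge survives as a $1+p/(14t)$ relative gap; this calibration is precisely the content of the theorem, not bookkeeping that can be deferred.

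The paper avoids anchors entirely. Alice encodes $x_i$ into the \emph{parity of a path}: for each coordinate she creates $a_i,b_i,c_i,d_i$ and three edges so that the $a_i$--$d_i$ connection has length $1$ if $x_i=0$ and length $2$ if $x_i=1$ (with pendant edges keeping the edge count fixed at three). Bob chains the $t$ coordinate gadgets of a hyperedge into a single cycle, choosing the closing edge according to $w_i$. The resulting component is bipartite iff $(Mx)_i\oplus w_i=0$, so its max cut is $4t$ in that case and at most $4t-1$ otherwise --- an orientation-free criterion, which is why no anchoring is needed. With $\alpha=1/2$ the unmatched coordinates contribute $3$ each, giving $\tfrac{7n}{2}$ in the \YES case versus at most $\tfrac{7n}{2}-\tfrac{np}{4t}$ in the \NO case (with probability $\ge 99/100$ since $p\ge 128t/n$), which yields exactly the $1+\tfrac{p}{14t}$ threshold and components of size at most $4t$. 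To repair your argument you would need to either supply a concrete parity gadget with these properties (at which point the anchors are superfluous) or adopt the paper's cycle-parity construction.
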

Similarly, we show hardness of approximation for maximum matching in the streaming model better than $\Omega(\sqrt{n})$ on graphs whose connected components have bounded size, which is again challenging for reductions from either BHH or BHM. 
\begin{restatable}{theorem}{thmmatchbhhparam}
\thmlab{thm:match:bhh:param}
Let $2\le t\le n/10$ be some integer. 
For $p\in\left[\frac{128t}{n},\frac{1}{2}\right]$, any one-pass streaming algorithm that outputs a $\left(1+\frac{p}{6t}\right)$-approximation to the maximum matching with probability at least $\frac{2}{3}$ requires $\Omega\left(n^{1-1/t}p^{-1/t}\right)$ space, even for graphs with connected components of size bounded by $\O{t}$.
\end{restatable}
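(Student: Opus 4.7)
The plan is to give a reduction from the $p$-Noisy Boolean Hidden Hypermatching problem with hyperedges of size $t$ to the streaming maximum matching problem on graphs with components of size $\O{t}$, thereby invoking the $\Omega(n^{1-1/t} p^{-1/t})$ one-way communication lower bound stated in the theorem immediately preceding \thmref{thm:maxcut:bhh:param}.

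Given a $p$-Noisy BHH instance $(x, M, w)$ where Alice holds $x \in \{0,1\}^n$ and Bob holds a hypermatching $M$ of size $k = \Theta(n/t)$ with labels $w$, Alice and Bob jointly simulate an insertion-only stream on a graph $G$ that is the disjoint union of $k$ small hyperedge gadgets. For each coordinate $i \in [n]$ Alice first streams a constant-size gadget on a fixed vertex set $V_i$ whose internal matching structure encodes $x_i$; for each hyperedge $e = \{i_1, \ldots, i_t\} \in M$ with label $w_e$, Bob then streams a connector that links the coordinate gadgets on $V_{i_1}, \ldots, V_{i_t}$ into a single connected component on $\Theta(t)$ vertices, adapting the standard matching gadgets used in prior BHH-based matching lower bounds~\cite{AssadiKL17,EsfandiariHLMO18,BuryGMMSVZ19}. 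The gadget is engineered so that the maximum matching restricted to the $e$-th component equals some value $m_e$ depending only on the restriction $x|_e$ when $(Mx)_e \oplus w_e = 0$, and equals $m_e + 1$ when $(Mx)_e \oplus w_e = 1$.

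Because the hyperedges of $M$ are pairwise disjoint, the components of $G$ share no vertices, and the maximum matching of $G$ equals $\sum_e m_e + N$, where $N$ is the number of hyperedges $e$ for which $(Mx)_e \oplus w_e = 1$. In the \YES case $N = 0$ deterministically, while in the \NO case $N$ is a sum of $k$ independent $\Ber{p}$ variables with mean $pk = \Theta(pn/t)$; the assumption $p \geq 128t/n$ makes this mean at least a suitably large constant, so a Chernoff bound gives $N \geq pk/2$ with probability at least $9/10$. Since $\sum_e m_e$ is at most a constant times $n$ (each gadget has $O(1)$ vertices per coordinate), the multiplicative gap between \YES and \NO matchings is at least $1 + \Omega(p/t)$, which strictly exceeds $1 + p/(6t)$ for suitable constants. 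Hence any one-pass streaming algorithm achieving a $(1 + p/(6t))$-approximation distinguishes \YES from \NO, and its internal state translates directly into a one-way communication protocol of the same size for $p$-Noisy BHH, yielding the claimed $\Omega(n^{1-1/t} p^{-1/t})$ lower bound.

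The principal technical obstacle is designing the connector so that the parity bit $(Mx)_e \oplus w_e$ flips the $e$-th component's maximum matching by exactly one, independently of the specific pattern of $x$ on the coordinates of $e$. A naive chain of coordinate gadgets fails because the local matching then depends on where in the chain the bits $0$ and $1$ appear rather than only on their parity; resolving this requires a symmetric connector that can be viewed as composing two operations, one controlled by the parity of $x|_e$ and one controlled by $w_e$. A minor bookkeeping step confirms that the resulting component has $O(t)$ vertices and that the range $p \in [128t/n, 1/2]$ is exactly what is needed to invoke both the Noisy BHH communication bound and the Chernoff concentration on $N$.
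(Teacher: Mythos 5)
Your high-level plan (reduce from $p$-Noisy BHH with $\Theta(n/t)$ hyperedges, one gadget per hyperedge, Chernoff on the number of flipped labels, then the $\Omega(n^{1-1/t}p^{-1/t})$ communication bound) matches the paper's, but there is a genuine gap: the proof's entire technical content is the gadget, and you neither construct it nor postulate a strong enough property for it. You require only that the component for hyperedge $e$ have maximum matching $m_e(x|_e)$ when $(Mx)_e\oplus w_e=0$ and $m_e(x|_e)+1$ otherwise, with $m_e$ allowed to depend on the pattern $x|_e$. That is too weak for the distinguishing step. Bob does not know $x$, so he cannot compute the baseline $\sum_e m_e(x|_e)$; if the $m_e$ genuinely vary with $x|_e$, this baseline fluctuates across instances on the order of $\sqrt{n/t}$, whereas the shift you separate the two cases by is $N\approx pn/(4t)$, which near the boundary $p=\Theta(t/n)$ is only a constant. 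Then the distributions of the (even exact) matching value under \YES and \NO overlap almost entirely, and no threshold test on the algorithm's single approximate output can succeed with probability $2/3$. What is actually needed — and what the paper's construction delivers — is that each hyperedge component's contribution depends \emph{only} on the parity $(Mx)_e\oplus w_e$ and not on $x|_e$ at all, so that the \YES-case optimum is a fixed, Bob-computable number (in the paper, $\frac{n}{2t}\cdot\frac{3t+1}{2}+\frac{n}{2}$ for odd $t$) against which the approximation can be compared.

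Concretely, the paper achieves this by having Alice encode $x_i$ with a single edge ($(a_i,b_i)$ if $x_i=0$, $(c_i,d_i)$ if $x_i=1$) and having Bob place two cliques per hyperedge, on the $b$-side and the $d$-side, with an auxiliary vertex $e_i$ (and $f_i$ for even $t$) attached to one clique or the other according to $w_i$; the two resulting components have sizes $2t-s$ and $t+s+1$ (with $s$ the number of ones in $x|_e$), so both are even and perfectly matchable exactly when the parity is $0$, giving contribution $(3t+1)/2$ versus $(3t-1)/2$ regardless of which coordinates carry the ones. Your proposal flags "designing the connector so that the parity bit flips the matching by exactly one, independently of the specific pattern of $x$" as the principal obstacle and then defers it; since that is precisely the nontrivial part of the theorem, and the property you do assert (pattern-dependent baseline) would break the reduction, the proposal as written does not constitute a proof. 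The remaining scaffolding (disjointness of components, Chernoff with $p\ge 128t/n$, translation of the streaming algorithm's state into a one-way protocol, $O(t)$-size components) is fine and mirrors the paper; also note the direction of your gap is reversed relative to the paper (your \NO case gains matching edges, the paper's loses them), which is harmless in itself but is another sign the gadget would have to be designed from scratch rather than borrowed.
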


By comparison for graphs with connected components with constant bounded sizes, existing lower bounds for both MAX-CUT and maximum matching only show that $\Omega(n^C)$ space is required, for some constant $C\in(0,1)$ bounded away from $1$.

Our third graph streaming lower bound proves hardness of approximation for maximum acyclic subgraph in the streaming model. 
\cite{GuruswamiVV17} showed that an $\frac{8}{7}$-approximation requires $\Omega(\sqrt{n})$ space through a reduction from BHM, but it was not evident how their reduction could be generalized to BHH, due to its hyperedge structure. 
Instead, we use our $p$-Noisy Boolean Hidden Matching communication problem to show a fine-grained lower bound for the maximum acyclic subgraph problem with tradeoffs between approximation guarantee and space. 
Independently, \cite{assadi2020multipass} showed a lower bound that $(1-\eps)$-approximation requires $\Omega(n^{1-O(\eps^c)})$ space for a fixed constant $c>0$ through a reduction from their one-or-many cycles communication problem. 

\begin{restatable}{theorem}{thmmas}
\thmlab{thm:mas}
Let $2\le t\le n/10$ be some integer. For $p\in\left[\frac{128}{n},\frac{1}{2}\right]$, any one-pass streaming algorithm that outputs a $\left(1+\frac{p}{22}\right)$-approximation to the maximum acyclic subgraph problem with probability at least $\frac{2}{3}$ requires $\Omega\left(\sqrt{\frac{n}{p}}\right)$ space. 
\end{restatable}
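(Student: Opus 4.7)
My plan is to reduce from the $p$-Noisy Boolean Hidden Matching problem, which is the $t = 2$ case of the noisy BHH lower bound already stated and hence requires $\Omega(\sqrt{n/p})$ communication. Given an instance $(x, M, w)$ of $p$-Noisy BHM with matching size $N = \alpha n/2$ for the constant $\alpha = 1/2$, Alice and Bob jointly stream a directed graph on vertex set $\{a_i^k, b_i^k : i \in [n], k \in \{1, 2\}\}$. For every $i$ and each $k \in \{1,2\}$, Alice streams $a_i^k \to b_i^k$ if $x_i = 0$ and $b_i^k \to a_i^k$ otherwise. For each matching edge $\{i, j\} \in M$ with label $w_{ij}$, Bob streams two disjoint 4-cycle gadgets, one on $\{a_i^k, b_i^k, a_j^k, b_j^k\}$ for each $k \in \{1,2\}$; he picks the orientations of his two edges in each gadget (a choice depending only on $w_{ij}$) so that gadget~$1$ is a directed 4-cycle iff $(x_i, x_j) = (0, w_{ij})$, and gadget~$2$ is a directed 4-cycle iff $(x_i, x_j) = (1, 1 \oplus w_{ij})$. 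The combined gadget for $\{i, j\}$ then contains exactly one directed cycle iff $x_i \oplus x_j = w_{ij}$ and none otherwise.

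Because the matching gadgets are vertex-disjoint and contain every directed cycle in the graph, the maximum acyclic subgraph equals the total edge count $2n + 4N = 3n$ minus the number of cyclic gadgets. In the \YES instance this count is deterministically $N = n/4$, giving MAS $= 11n/4$. In the \NO instance, the number of missing cycles is distributed as $\mathrm{Bin}(N, p)$ with mean $Np \ge 32$ (using $p \ge 128/n$); a standard lower-tail Chernoff bound yields that with probability at least $9/10$, the \NO MAS is at least $11n/4 + pn/8$. An arithmetic check confirms $11n/4 + pn/8 = (1 + p/22) \cdot 11n/4$, so a one-pass streaming $(1 + p/22)$-approximation distinguishes the two cases with constant probability. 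Composing with the reduction converts the algorithm's working memory into a $p$-Noisy BHM protocol, and the $\Omega(\sqrt{n/p})$ communication lower bound transfers.

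The step I expect to require most care is verifying that Bob can realize the prescribed cyclic-configuration behavior using only $w_{ij}$: a direct case analysis over the four orientations of Alice's two edges in a 4-cycle and the four possible choices of Bob's two edges shows that each of the four $(x_i, x_j)$ assignments is the cyclic configuration for exactly one Bob-choice, so picking complementary choices across the two gadgets for a given matching edge realizes the desired XOR-parity dependence. The remainder is a routine Chernoff concentration argument tuned to $\alpha = 1/2$.
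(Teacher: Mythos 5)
Your proposal is correct and follows essentially the same route as the paper: a reduction from $p$-Noisy BHM with $\alpha=1/2$ using two vertex-disjoint $4$-cycle gadgets per matching edge whose orientation consistency encodes the parity $x_i\oplus x_j\oplus w_{ij}$, so the maximum acyclic subgraph equals the edge count minus the number of consistent gadgets, followed by the same Chernoff bound on the number of flipped labels and the ratio computation giving $1+\tfrac{p}{22}$ (the paper's $a_i,b_i,c_i,d_i$ construction is your $k=1,2$ copies). The only point to state carefully is that Bob must vary \emph{which} pairs of vertices his two edges connect (not merely their orientations) as a function of $w_{ij}$, since with a fixed pairing only two of the four $(x_i,x_j)$ patterns can ever be the cyclic one --- exactly as in the paper's explicit edge lists for $w_i=0$ versus $w_i=1$.
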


Finally, we introduce and study the {\em graph classification problem} in data streams. 
Here the goal is for the streaming algorithm to output \YES if the input graph belongs to a specified isomorphism class, and output \NO otherwise.  
Graph isomorphism is one of the most fundamental problems in computer science; it asks whether there exists an isomorphism between two given graphs, or more specifically, whether there exists a bijection between the vertices of the two graphs that preserves edges, i.e., the image of adjacent vertices remain adjacent. A special important case of this problem is graph isomorphism on tree graphs. 

We show hardness of graph classification for complete binary trees on insertion-only streams. This is a class of tree graphs for which we do not know how to prove hardness using any other technique, showing our communication problem may be useful for ultimately resolving the general graph classification problem on data streams. 
In this setting, a stream of insertions and deletions of edges in an underlying graph with $n$ vertices arrives sequentially and the task is to classify whether the resulting graph is isomorphic to a complete binary tree on $n$ vertices. 
Although it is possible to produce a lower bound of $\Omega(\sqrt{n})$ space from BHM, it is not evident that reductions from BHH can produce stronger lower bounds. 
Instead, we use our $p$-Noisy Boolean Hidden Matching communication problem to show a lower bound of $\Omega(n)$ space for graph classification of complete binary trees even on insertion-only streams. 

\begin{restatable}{theorem}{thmtreenoisybhm}
\thmlab{thm:tree:noisy:bhm}
Any randomized algorithm on insertion-only streams that correctly classifies with probability at least $\frac{3}{4}$ whether an underlying graph is a complete binary tree uses $\Omega(n)$ space.
\end{restatable}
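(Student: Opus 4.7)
The plan is to reduce the $p$-noisy Boolean Hidden Matching problem with noise rate $p=c/n$, for a sufficiently large constant $c$, to the complete binary tree classification problem. For this choice of $p$ the lower bound $\Omega(\sqrt{n/p})$ from noisy BHM evaluates to $\Omega(n)$, so a reduction that preserves the universe size up to a constant factor and distinguishes the \YES and \NO cases with constant advantage yields the desired $\Omega(n)$ space lower bound. The constant $c$ is chosen so that in the \NO case at least one noise bit is flipped with probability $1-(1-p)^{\alpha n/2}\geq 1-e^{-c\alpha/2}$, which can be made arbitrarily close to $1$.

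For the reduction I would build a graph on $N=\Theta(n)$ vertices using a ``backbone plus gadget slots'' construction. For each BHM index $i\in[n]$ Alice maintains a constant-size gadget containing bit vertices $a_i^0, a_i^1$ (together with a few auxiliary vertices), and streams edges that encode $x_i$ by designating $a_i^{x_i}$ as the ``active'' choice, for example by attaching a private pendant to $a_i^{x_i}$ only. For each matching edge $e=(i,j)\in M$ with label $w_e$, Bob streams the backbone edges that attach a slot to the skeleton of the target complete binary tree $T^*$, together with edges between the bit vertices $\{a_i^0, a_i^1, a_j^0, a_j^1\}$ in a ``straight'' or ``crossed'' pattern chosen according to $w_e$. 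The gadget is designed so that when $w_e=x_i\oplus x_j$ (Bob's label is correct) Alice's and Bob's edges combine into exactly the prescribed subtree of $T^*$ at that slot, while when $w_e\neq x_i\oplus x_j$ (the label is flipped by noise) the combined structure has a local topological defect, such as a vertex whose degree exceeds what can appear in $T^*$, a disconnection, or a cycle. Consequently the \YES case of noisy BHM produces exactly $T^*$, and the \NO case with at least one flipped label produces a graph that is provably not isomorphic to $T^*$.

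The lower bound then follows from a standard simulation: any streaming algorithm for tree classification using $s$ bits of space yields a one-way noisy BHM protocol in which Alice runs the algorithm on her portion of the stream, sends the $s$-bit memory state to Bob, and Bob completes the simulation and outputs the classification. Since the algorithm succeeds with probability at least $3/4$ on every input and the \NO instance of BHM produces a non-tree with probability at least $1-e^{-c\alpha/2}$, the protocol correctly decides BHM with probability at least $\tfrac{3}{4}-\tfrac{1}{2}e^{-c\alpha/2}\geq 2/3$ for sufficiently large $c$, and the noisy BHM communication lower bound $s=\Omega(\sqrt{n/p})=\Omega(n)$ applies.

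The main obstacle is the design of the gadget slots and backbone so that the correct configuration assembles exactly into the prescribed subtree of $T^*$, with the correct degree sequence, while any single mismatched label forces a globally detectable defect. Because complete binary tree isomorphism is a global property, local discrepancies must not be ``repairable'' by any vertex relabeling. The cleanest way to rule this out is to force one of (i) a vertex of degree different from what appears in a complete binary tree, (ii) a disconnection from the rest of the graph, or (iii) a cycle; keeping the total vertex count equal to $N$ and the \YES edge count equal to $N-1$ makes any such defect trivially detectable. Correctness of the reduction then reduces to a direct case analysis over the four possible values of $(x_i\oplus x_j, w_e)$ at each matching edge.
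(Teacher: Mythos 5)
Your high-level framing matches the paper's: reduce from $p$-noisy Boolean Hidden Matching with $p=\Theta(1/n)$ so that the $\Omega(\sqrt{n/p})$ bound becomes $\Omega(n)$, argue that in the \NO case at least one label is flipped with large constant probability, and convert a small-space streaming classifier into a one-way protocol by having Alice send her memory state. All of that is fine and is exactly how the paper proceeds (it invokes \thmref{thm:pbhh} with $t=2$ and $p=4/n$). The problem is that the entire content of the theorem lies in the gadget that turns ``all labels correct'' into ``the stream is exactly a complete binary tree'' and ``some label flipped'' into ``provably not a complete binary tree,'' and your proposal does not supply it: you state the properties the gadget must have and explicitly defer its construction as ``the main obstacle.'' As written this is a proof plan, not a proof.

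Worse, the division of labor you sketch is flawed in a way that matters. You have Bob stream ``the backbone edges that attach a slot to the skeleton of $T^*$,'' but the attachment point inside slot $i$ must be the \emph{active} bit vertex $a_i^{x_i}$, which only Alice knows. If Bob attaches the backbone to a fixed vertex, say $a_i^0$, and adds the straight/crossed pair edges according to $w_e$, then already a correct-parity instance breaks: take $w_e=0$ (crossed edges $(a_i^0,a_j^1)$ and $(a_i^1,a_j^0)$) with $x_i=x_j=1$, so Alice's pendants sit on $a_i^1,a_j^1$. Then the backbone leaf $a_i^0$ has a single child $a_j^1$, which in turn has the single child given by Alice's pendant; the slot assembles into a path of length three hanging off a skeleton leaf rather than two children, so the \YES case is not a complete binary tree. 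The paper resolves this by letting Alice build the skeleton: she creates a complete binary tree whose $2n$ leaves are the vertices $v_{i,x_i}$, attaches the pendant $v_i$ to $v_{i,x_i}$, and Bob adds, for each matching edge $(y,z)$ with label $w$, the edges $(v_{y,0},v_{z,w\oplus 1})$ and $(v_{y,1},v_{z,w})$. Then whenever $x_y\oplus x_z\oplus w=0$ each of $v_{y,x_y}$ and $v_{z,x_z}$ is the center of a path of length two, i.e.\ every skeleton leaf gains exactly two children and the graph is a complete binary tree with one more layer; whenever the parity is wrong, Bob's edge joins the two skeleton leaves $v_{y,x_y}$ and $v_{z,x_z}$ directly, creating a cycle. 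To repair your argument you need either this construction (backbone built by Alice, attached at the $x$-dependent active vertices) or an explicit alternative gadget with a verified case analysis over $(x_i\oplus x_j,w_e)$; without one of these the reduction is not established.
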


In fact, we show more general parametrized space lower bounds for testing on streams whether an underlying graph is a complete binary tree or $\eps$-far from being a complete binary tree, where $\eps$-far is defined as follows:

\begin{definition}
We say that a graph $G=(V,E_1)$ is $\epsilon$ far from another graph $H=(V,E_2)$ on the same vertex set $V$ if at least $\epsilon\cdot|V|$ edge insertions or deletions are required to get from $G$ to $H$, i.e., $|(E_1\setminus E_2)\cup(E_2\setminus E_1)|\ge\epsilon\cdot|V|$. 
\end{definition}

\begin{restatable}{theorem}{thmtreenoisybhmpt}
\thmlab{thm:tree:noisy:bhm:pt}
For $\epsilon\in[\tfrac{512}n,\tfrac12]$, any randomized algorithm on insertion-only streams that classifies, with probability at least $\frac{3}{4}$, whether an underlying graph is a complete binary tree or $\epsilon/16$-far from a complete binary tree uses $\Omega\left(\sqrt{n/\epsilon}\right)$ space.
\end{restatable}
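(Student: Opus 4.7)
The plan is to reduce from the $p$-Noisy Boolean Hidden Matching problem with noise parameter $p = \Theta(\epsilon)$ and invoke the $\Omega(\sqrt{n/p}) = \Omega(\sqrt{n/\epsilon})$ communication lower bound established earlier in the paper. Given an instance $(x, M, w)$ of $p$-Noisy BHM on universe $[n]$, Alice and Bob will jointly produce a graph $G$ on $N = \Theta(n)$ vertices by inserting edges into a single stream---Alice first, then Bob---so that any insertion-only streaming algorithm using $s$ bits of space yields a one-way protocol of cost $s$, and hence $s = \Omega(\sqrt{n/\epsilon})$.

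The graph $G$ is assembled from $\alpha n / 2$ constant-size \emph{parity gadgets}, one per edge $e=(i,j)\in M$, attached to a small deterministic complete-binary-tree backbone inserted by Bob. Since Alice does not know $M$, she inserts only a tiny per-index sub-gadget at vertices reserved for index $i$, depending solely on $x_i$. Bob then inserts a constant number of edges depending on $w_e$ and on the pair $(i,j)$ that glue the $i$- and $j$-sub-gadgets together into a single pair-gadget and attach it to the backbone. The design is chosen so that the pair-gadget for $e=(i,j)$ is isomorphic to its designated subtree of the complete binary tree precisely when $w_e \oplus x_i \oplus x_j = 0$, whereas a wrong parity creates a localized defect---for instance, a short cycle or a vertex of wrong degree---that requires $\Omega(1)$ edge modifications to repair. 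Thus in the \YES case $G$ is isomorphic to a complete binary tree on $N$ vertices, while in the \NO case each gadget is independently defective with probability exactly $p$.

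A Chernoff bound then shows that with probability at least $9/10$ at least $\Omega(pn) = \Omega(\epsilon N)$ gadgets are defective in the \NO case. Because the gadgets are vertex-disjoint and each defect is confined to a constant-size neighborhood, a single edge modification cannot simultaneously repair defects in two distinct gadgets; coupled with the structural rigidity of the gadget---chosen so that its correct and defective configurations differ in local isomorphism type and cannot be reassigned to one another by a global vertex relabeling of $V(G)$ without incurring an equivalent edit cost elsewhere---this forces every complete binary tree on $V(G)$ to differ from $G$ in $\Omega(\epsilon N)$ edges. Tuning the hidden constants makes this at least $\epsilon N /16$, and combining with the communication lower bound and standard success-probability amplification yields the theorem.

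The main obstacle is the joint design of the parity gadget and the backbone so that all three requirements hold simultaneously: (i) the construction is insertion-only and cleanly splits between Alice's per-bit contribution and Bob's per-matching-edge contribution; (ii) it produces exactly a complete binary tree in the \YES case; and (iii) it is structurally rigid enough that in the \NO case the $\Omega(\epsilon n)$ defects cannot be absorbed by any vertex relabeling. Requirement (iii) is what goes beyond the plain BHM-based $\Omega(\sqrt{n})$ argument and demands a careful local case analysis, since the farness criterion is measured against the isomorphism class of complete binary trees rather than against any single labeled copy.
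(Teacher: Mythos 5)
Your outline follows the same route as the paper (reduce from $p$-Noisy BHM with $p=\Theta(\epsilon)$, have Alice and Bob stream per-index and per-matching-edge gadgets hung on a binary-tree backbone so that the YES case is exactly a complete binary tree, use a Chernoff bound to get $\Omega(\epsilon n)$ defective gadgets in the NO case, and invoke the $\Omega(\sqrt{n/\epsilon})$ communication bound of \thmref{thm:pbhh}), but it stops short at precisely the point you yourself flag as the crux. You never exhibit the gadget, and the rigidity claim --- that $\Omega(\epsilon n)$ local defects cannot be absorbed by a global relabeling, so the NO-case graph is $\Omega(\epsilon)$-far from the \emph{isomorphism class} of complete binary trees --- is asserted, not proved. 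As stated it is genuinely delicate for the defect types you allow: if the defect is merely ``a vertex of wrong degree,'' a single inserted edge can in principle repair degree deficiencies in two different gadgets at once, and ``correct and defective configurations differ in local isomorphism type'' does not by itself bound the edit distance to an isomorphic copy. So the proposal has a real gap at its central step.

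The paper closes this gap by choosing the defect to be a cycle, which makes the rigidity argument one line rather than a case analysis. It reuses the EHLMO gadget from \thmref{thm:tree:noisy:bhm}: Alice builds a complete binary tree whose leaves are $v_{i,x_i}$ together with pendant edges $(v_i,v_{i,x_i})$, and Bob, for each matching edge, wires the six associated vertices so that parity $0$ yields two paths of length two (each backbone leaf gains exactly two children), while parity $1$ yields an edge between $v_{y_i,x_{y_i}}$ and $v_{z_i,x_{z_i}}$ and hence a cycle. In the NO case with $p=\epsilon\ge 512/n$, at least $\epsilon n/2$ labels are flipped with probability $99/100$, producing $\epsilon n/2$ vertex-disjoint cycles; since \emph{every} tree is acyclic, each such cycle forces at least one edge deletion no matter how the vertices are relabeled, so the edit distance to any complete binary tree is at least $\epsilon n/2\ge (\epsilon/16)\,|V|$ with $|V|=8n-1$. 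Note also that this requires Bob's matching to be perfect (the $\alpha=1$ setting of \thmref{thm:pbhh}); with $\alpha<1$, as in your ``$\alpha n/2$ gadgets,'' unmatched indices leave components detached from the backbone and the YES case is not a complete binary tree. If you instantiate your gadget as the cycle-creating EHLMO construction and take a perfect matching, your argument becomes exactly the paper's proof.
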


\paragraph{Related work.} 
Several communication problems inspired by the Boolean Hidden (Hyper)Matching problem have recently been used in the literature to prove tight lower bounds for the single pass or sketching complexity of several graph problems (e.g., ~\cite{KapralovKSV17,KapralovK19} for the MAX-CUT problem,~\cite{KallaugherKP18} for subgraph counting, in~\cite{GuruswamiVV17,GuruswamiT19,CGV20,ChouGSV21} for general CSPs). 
The recent work of~\cite{assadi2020multipass} gives multipass streaming lower bounds for the space complexity of the aforementioned one-or-many cycles communication problem, which is tightly connected to BHH, extending many of the abovementioned single pass lower bounds to the multipass setting. 
Although (to the best of our knowledge) property testing for graph isomorphism on streams have not been previously studied, there is an active line of work, e.g.~\cite{ FeigenbaumKMSZ05, FeigenbaumKMSZ08, SunW15, GuhaMT15, MonemizadehMPS17, HuangP19, CzumajF0S20, AssadiN21} studying property testing on graphs implicitly defined through various streaming models.

\subsection{Overview}
We now outline the analysis of the Boolean Hidden Matching problem~\cite{KerenidisR06,GavinskyKW06,GavinskyKKRW08}, and describe the key differences in our approach.

\paragraph{A natural extension of BHM analysis and why it fails.} Recall that in BHM, Alice receives a binary vector $x\in \{0,1\}^n$, and sends Bob a message $m$ of $c$ bits.
Letting $A\subseteq \{0, 1\}^n$ denote the indicator of a `typical' message, one shows that for a `typical' matching $M$ of size $\alpha n$, $\alpha\in (0, 1/2)$,  Bob's posterior distribution $q$ on $Mx$ conditioned on the message received from Alice is close to uniform, which in turn implies that Bob cannot distinguish between $w=Mx$ and $w=Mx\oplus 1^n$. The approach of~\cite{GavinskyKW06} upper bounds the total variation distance from $q$ to the uniform distribution via the $\ell_2$ distance: this lets one upper bound the $\ell_2$ distance to uniform in Fourier domain, and then use Cauchy-Schwarz to obtain the required bound of $\Omega(\sqrt{n})$ on the size $c$ of Alice's message.

Since we are trying to obtain an $\Omega(\sqrt{n/p})$ lower bound for the $p$-noisy Boolean Hidden Matching problem, it becomes clear that one can no longer compare Bob's posterior to the uniform distribution (the bound of $\Omega(\sqrt{n})$ is tight here).
Instead, one would like to compare the distribution of Bob's labels in the \YES case to the same distribution in the \NO case.  
A natural approach here is to compare Bob's posterior distribution $q$ to the noisy version of the posterior, and relating these two distributions to the Fourier transform of Alice's message using the noise operator $T_\rho$ for an appropriate choice of the parameter $\rho$.  Interestingly, however, this approach fails: one can verify that the expected $\ell_2^2$ (over the randomness of the matching $M$) distance is too large\footnote{If $f:\{0, 1\}^n \to \{0, 1\}$ is the characteristic function of a typical message from Alice, the $\ell_2^2$ distance between the two distributions is, up to appropriate scaling factors, equal to the sum of squares of Fourier coefficients $\widehat{f}(s)$, scaled by a $(1-(1-2p)^{|s|})\approx p\cdot |s|$ factor. While this factor gives us exactly the required $p$ contribution for Fourier coefficients $s$ of  Hamming weight $1$, the contribution of higher weight terms is much larger, precluding the analysis.}.

\paragraph{Our approach.}  We provide an information theoretic proof for the complexity of $p$-Noisy Boolean Hidden Hypermatching\footnote{In an earlier version of the paper we claimed that the difficulties with the standard $\ell_2$-based Fourier analytic approach can be circumvented using a related method that relies on KL divergence instead of the $\ell_2$ norm. We discovered an error in that proof and it is therefore omitted from the present version.}.
Informally, we create a product distribution over $\frac{1}{p}$ instances of BHH, where the $i$-th instance has size roughly distributed as $\Bin\left(n,p\right)$ and is a \YES instance with probability $\frac{1}{2}$, and a \NO instance with probability $\frac{1}{2}$. 
The distribution of the sizes for each instance allows us to match the distribution of ``flipped'' edges in the $p$-Noisy BHH distribution. 
We would like to use the product distribution to claim that any protocol that solves $p$-Noisy BHH on a graph with $n$ vertices must also solve $\frac{1}{p}$ instances of BHH on graphs with $np$ vertices; however, communication complexity is not additive so we use (conditional) information complexity. 

We bound the $1$-way conditional information cost of any protocol, correct on our distribution of interest, by first using a message compression result of \cite{JainPY12} to relate the $1$-way conditional information cost of a single instance to the $1$-way distributional communication complexity of the instance. Our conditional information cost is conditioned on Bob's inputs. We note that recent work \cite{AssadiKL17} only bounds external information cost and it does not seem immediate how to derive the same lower bound for conditional information cost from their work \cite{assadi}. 
We then observe that any protocol which solves the $p$-Noisy BHH must also solve a constant fraction of the $\frac{1}{p}$ instances of BHH with size $\Omega(np)$, by distributional correctness (note it need not solve all $\frac{1}{p}$ instances).  
Using the conditional information cost in a direct sum argument, the communication complexity of the protocol must be at least $\frac{1}{p}\cdot\Omega\left((np)^{1-1/t}\right)=\Omega\left(n^{1-1/t}p^{-1/t}\right)$, which lower bounds the communication.  

\paragraph{Lower bound applications.}
The remaining lower bounds are quite simple; they result from natural generalizations of existing BHM or BHH reductions to the $p$-Noisy BHM or BHH. 
For the MAX-CUT problem, \cite{KapralovKS15} give a reduction from BHM that creates a connected component with eight edges for each edge $m_i$ in the matching $M$. 
In the case $(Mx)_i\oplus w_i=0$, the resulting graph is bipartite, so that the max cut induced by the connected component is $8$. 
In the case $(Mx)_i\oplus w_i=1$, the resulting graph is not bipartite, so that the max cut induced by the connected component is at most $7$ (see \figref{fig:maxcut:reduction}). 
Thus the max cut for $Mx\oplus w=0^n$ has size $4n$ and the max cut for $Mx\oplus w=1^n$ has size at most $7n/2$, so any sufficiently small constant factor approximation algorithm to the max cut requires $\Omega(\sqrt{n})$ space. 
Observe that the same reduction from $p$-Noisy BHM also works, although since we set $\alpha=1/2<1$, we also have to consider components corresponding to unmatched vertices. Due to the fact that only a $p$ fraction of the components change their contribution from $8$ to $7$ in the \NO case, our reduction works for $\left(1+\Theta(p)\right)$-approximation.
We give a similar argument for $p$-Noisy BHH, which allows parametrization of the connected component size. 

To show hardness of approximation for the maximum matching problem, we use a reduction similar to that of~\cite{BuryGMMSVZ19}. However, we reduce from $p$-noisy BHH. We represent each coordinate of Alice's input with a single edge, and represent each hyperedge of Bob with two cdisjoint cliques. Supports of the cliques are defined in such a way that the resulting connected components have even size exactly if $(Mx)_i\oplus w_i=0$, in which case they are perfectly matchable (see \figref{fig:maxmatching:reduction}). In the noisy ({\bf NO}) case, however, a $p$ fraction of these components of size $O(t)$ will have odd size, which leads to an overall $(1+\Theta(p/t))$ factor loss in the size of the maximum matching.

To show hardness of approximation for maximum acyclic subgraph, we use a reduction by~\cite{GuruswamiVV17}.  
For each $i\in[n]$, the case $(Mx)_i\oplus w_i=1$ corresponds to an isolated subgraph with eight edges that contains no cycle, so that its maximum acyclic subgraph has size eight. 
However, the case $(Mx)_i\oplus w_i=0$ creates an isolated subgraph with eight edges that contains a cycle, so that its maximum acyclic subgraph has size seven (see \figref{fig:mas:reduction}). 
Thus if $Mx\oplus w=0^n$ (the \YES case), then all subgraphs corresponding to matching edges contribute only $7$ to the maximum acyclic subgraph. However, in the \NO case some of these contribute $8$, increasing the total size of the maximum acyclic subgraph by a factor $(1+\Theta(p))$.

To show hardness of classifying whether an underlying graph is a complete binary tree, we use a gadget by~\cite{EsfandiariHLMO18} that embeds BHM into the bottom layer of a binary tree. 
For each $i\in[n]$ where $n$ is assumed to be a power of two, the case $(Mx)_i\oplus w_i=0$ creates two paths of length two, which can be used to extend the binary tree to an additional layer at two different nodes. 
However, the case $(Mx)_i\oplus w_i=1$ creates a path of length one and a path of length three, which results in a non-root node having degree two (see \figref{fig:bhh:paths}). 
Thus the resulting graph is a complete binary tree if and only if $Mx\oplus w=0^n$. 
While BHM requires $\Omega(\sqrt{n})$ space to distinguish whether $Mx\oplus w=0^n$ or $Mx\oplus w=1^n$, the $p$-Noisy version of BHM demands $\Omega\left(\sqrt{\frac{n}{p}}\right)$ space to distinguish between the \YES and \NO cases.
Now for $p=\Theta\left(\frac{1}{n}\right)$, we have $Mx\oplus w\neq 0^n$ with high probability in the \NO case.
Hence, the graph classification problem also solves $p$-Noisy BHM in this regime of $p$ and requires $\Omega(n)$ space.


\section{Preliminaries}
We use the notation $[n]$ to denote the set $\{1,2,\ldots,n\}$. 
We use $\poly(n)$ to denote a fixed constant degree polynomial in $n$ and $\frac{1}{\poly(n)}$ to denote some arbitrary degree polynomial in $n$ corresponding to the choice of constants in the algorithms. 
We use $\polylog(n)$ to denote polylogarithmic factors of $n$. 
We say an event occurs with high probability if it occurs with probability at least $1-\frac{1}{\poly(n)}$. 
For $x,y\in\{0,1\}$, we use $x\oplus y$ to denote the sum of $x$ and $y$ modulo $2$.

We define $\alpha$-approximation for maximization problems for one-sided error (as opposed to two-sided errors):
\begin{definition}[$\alpha$-Approximation for Maximization Problems]
For a parameter $\alpha\ge 1$, we say that an algorithm $\mathcal{A}$ is an $\alpha$-approximation algorithm for a maximization problem with optimal value $\mathsf{OPT}$ if $\mathcal{A}$ outputs some value $X$ with $X\le\mathsf{OPT}\le\alpha X$.  
\end{definition}

\if 0
\begin{definition}\label{def:tvd}
The total variation distance between two distributions $p$ and $q$ on finite support $\mathcal{X}$ is defined as $D_\text{TV}(p,q)=\frac{1}{2}\sum_{x\in\mathcal{X}}|p(x)-q(x)|$. 
\end{definition}
	
\fi

	

	\section{Noisy Boolean Hidden Hypermatching}
	\seclab{sec:noisy:bhh}
	We start with the definition of the $p$-noisy Boolean Hidden Hypermatching problem ($p$-noisy BHH), then given a simple protocol for it and show that this protocol is asympototically tight.
	\if 0
	
	\begin{definition}\label{def:BHM}
		For $p\in[0,1]$ and $\alpha\in(0,1/2)$ the $p$-noisy Boolean Hidden Matching is a one-way two party communication problem:
		\begin{itemize}
			\item Alice gets $X\in\{0,1\}^n$,
			\item Bob gets $M$, a $\alpha n/2$-sized matching of $[n]$. $M$ is considered to be the $n\alpha/2\times n$ incidence matrix of the matching edges.
			\item Bob also receives edge labels for each edge of the matching. In the \YES case bob receives $W=M\cdot X$, the true parities of the matching edges with respect to $X$. In the \NO case, Bob receives the noisy parities $W\oplus Z$, where $Z_i$ are independent $\Ber{p}$ variables.
			\item Bob must decide whether the communication game is in the \YES or \NO case.
		\end{itemize}
		Here $M\cdot X$ denotes the set of correct edge labels: the sum of the endpoints in $X$ for each edge of $M$.
	\end{definition}

	We will use the distributional version of the $p$-noisy BHM problem, where Alice gets $X\in \{0, 1\}^n$ sampled uniformly at random, and Bob's matching $M$ is sampled uniformly at random, independently of $X$.
	
	\paragraph{A protocol for $p$-noisy BHM.} Alice sends $X^c$, the first $c$ bits of her input, to Bob. Each edge of $M$ has an approximate probability of $c^2/n^2$ of being supported on $[c]$. In expectation, about $\alpha c^2/2n$ edges of $M$ will be supported on $[c]$. In the \YES case, all of these edges' labels will be consistent with Alice's message. In the \NO case, if $c=\Omega\sqrt{n/\alpha p}$, there will be $\gg1/p$ edges and some of their labels will be inconsistent with Alice's message, with high constant probability. This allows Bob to distinguish between the two cases.
	
	\begin{theorem}
		\thmlab{thm:bhm}
		For $\alpha\le1/2$ and $p\in[1/p,1/2]$, the communication complexity of $p$-noisy Boolean hidden matching is $\Omega(\sqrt{n/\alpha p})$ bits. 
		
	\end{theorem}

\begin{proof}
			Suppose there exists a protocol solving $p$-noisy Boolean hidden matching with probability $2/3$ in $c$ bits of communication. Let $m:\{0,1\}^n\to\{0,1\}^c$ denote the function that Alice uses to generate her message to Bob as a function of her input $x$. We show that if Alice's message is small, then the distributions of Bob's input in the \YES and in the \NO case are close. In order to argue this, we bound the KL divergence of these distributions. As we will be conditioning on Alice's message and Bob's matching a lot, we shorten notation by using $\mathcal E(x,\mu)$ to denote the event where $m(X)=m(x)$ and $M=\mu$. 
			We have
		\begin{align*}
			D_\text{KL}&(m,W,M\|m,W\oplus Z,M)\\
			=&\mathbb E_{x\sim X}\mathbb E_{\mu\sim M}\kl{W|\mathcal E(x,\mu)}{W\oplus Z|\mathcal E(x,\mu)}\\
			=&\mathbb E_{x\sim X}\mathbb E_{\mu\sim M}\sum_{i=1}^{\alpha n/2}\mathbb E^*_{w_{<i}}\kl{W_i|W_{<i}=w_{<i},\mathcal E(x,\mu)}{W_i\oplus Z_i|W_{<i}\oplus Z_{<i}=w^{<i},\mathcal E(x,\mu)},
		\end{align*}
		by Lemma~\ref{lem:chain-rule}. Here we have further shortened notation by using $\mathbb E^*_{<w_i}$ in place of $\mathbb E_{w_{<i}\sim W_{<i}|\mathcal E(x,\mu)}$ as prescribed by Lemma~\ref{lem:chain-rule}.
		
		With slight abuse of notation, we denote the $i^\text{th}$ edge of $M$ as $M_i$ and let $X(M_i)$ be the values of $X$ at the coordinates corresponding to the two endpoints of $M_i$. Similarly, $X(M_{<i})$ denotes all values of $X$ at corresponding to the endpoints of any of the first $i-1$ edges in $M$. Then, we can upper bound each term of the sum by conditioning on the $X$-values of $(M_{<i})$ by Lemma~\ref{lem:overconditioning}.
		\begin{align*}
			&\mathbb E_{w_{<i}}^*\kl{W_i|W_{<i}=w_{<i},\mathcal E(x,\mu)}{W_i\oplus Z_i|W_{<i}\oplus Z_{<i}=w_{<i},\mathcal E(x,\mu)}\\
			\le&\mathbb E_{w_{<i}}^*\kl{W_i|W_{<i}=w_{<i},\mathcal E(x,\mu),X(M_{<i})=x(\mu_{<i})}{W_i\oplus Z_i|W_{<i}\oplus Z_{<i}=w_{<i},\mathcal E(x,\mu),X(M_i)=x(\mu_{<i})}
		\end{align*}
		
		Note that we can drop several of the conditions. We are interested in the distributions of $W_i$ and $W_i\oplus Z_i$. These are naturally independent of $M_{>i}$ when conditioned on $M_i$. Furthermore, when conditioned on $X(M_{<i})$ they are also independent of $W_{<i}$ and $Z_{<i}$.
		\begin{align*}
			&\mathbb E_{w_{<i}}^*\kl{W_i|W_{<i}=w_{<i},\mathcal E(x,\mu)}{W_i\oplus Z_i|W_{<i}\oplus Z_{<i}=w_{<i},\mathcal E(x,\mu)}\\
			\le&\kl{W_i|\mathcal E_{\le i}(x,\mu),X(M_{<i})=x(\mu_{<i})}{W_i\oplus Z_i|\mathcal E_{\le i}(x,\mu),X(M_{<i})=x(\mu_{<i})},
		\end{align*}
		where $\mathcal E_{\le i}(x,\mu)$ denotes the event that $m(X)=m(x)$ and $M_{\le i}=\mu_{\le i}$.
		
		Since the above formula refers to the KL-divergence of $W_i$ and $W_i\oplus Z_i$ under the {\it average} setting of the variables conditioned on, it suffices to upper bound for an arbitrary setting of some of these variables. Consider an arbitrary fixing of $M_{<i}$ and $X(M_{<i})$, and call this event $\mathcal E$. Let $n'=n-2i+2$ and call the unfixed coordinates of $X$ $X'\in\{0,1\}^{n'}$. We can now reduce the function $m:\{0,1\}^n\to\{0,1\}^c$, which determines Alice's message based on he 	r input, to a similar function depending only on $X'$:
		\begin{align*}
			&m':\{0,1\}^{n'}\to\{0,1\}^c;\\
			&x'\mapsto m\Big(X(M_{<i}),x'\Big).
		\end{align*}
		It now suffices to upper bound
		\begin{align*}
			&\mathbb E_{x'\sim X'}\mathbb E_{\mu_i\sim M_i|\mathcal E}\kl{W_i|\mathcal E,m'(X')=m'(x'),M_i=\mu_i}{W_i\oplus Z_i|\mathcal E,m'(X')=m'(x'),M_i=\mu_i}.
		\end{align*}

		Note that since $W_i$ and $Z_i$ are binary, this is the KL-divergence of two Bernoulli variables, and we can upper bound it using Lemma~\ref{bbound}. Let the bias of $W_i$ be $q\in[-1,1]$, that is let $W_i\sim\Ber{1/2+q/2}$. Note that $q$ is a random variable whose outcome is dependent on both $m'(X')$ and $M_i$. Since $Z_i\sim\Ber{p}$  independently of everything else, this makes
		\begin{align*}
			W_i\oplus Z_i\sim\Ber{(1/2+q/2)(1-p)+(1/2-q/2)p}=\Ber{1/2+q/2-pq}.
		\end{align*}
		We can now apply Lemma~\ref{bbound} to get
		\begin{align*}
			&\mathbb E_{x'\sim X'}\mathbb E_{\mu_i\sim M_i|\mathcal E}\kl{W_i|\mathcal E,m'(X')=m'(x'),M_i=\mu_i}{W_i\oplus Z_i|\mathcal E,m'(X')=m'(x'),M_i=\mu_i}\\
			\le&\mathbb E_{x'\sim X'}\mathbb E_{\mu_i\sim M_i|\mathcal E}\left[\frac{16p^2q^2}{(1/2+q/2-pq)(1/2-q/2+pq)}\right]\\
			\le&\mathbb E_{x'\sim X'}\mathbb E_{\mu_i\sim M_i|\mathcal E}\left[\frac{16p^2q^2}{p/2}\right]\\
			=&32p\cdot\mathbb E_{x'\sim X'}\mathbb E_{\mu_i\sim M_i|\mathcal E}(q^2).
		\end{align*}
		Here we have lower bounded $(1/2+q/2-pq)(1/2-q/2+pq)$ by $p/2$. Indeed, we may assume without loss of generality (for the purposes of this lower bound) that $q>0$. In this case $1/2+q/2-pq\ge1/2$ and $1/2-q/2+pq$ is minimized when $q=1$, so it is at least $p$.
		
		Note that the distribution of $\mu_i\sim M_i|\mathcal E$ is uniform on the size $2$ subsets of $X'$. To bound the expectation $\mathbb E_{\mu_i\sim M_i|\mathcal E}(q^2)$ we use a result of Talagrand~\cite{Talagrand96}, which states:
		\begin{restatable}{theorem}{Talagrand}{\label{thm:talagrand-level-2}}
			Let $A$ be an arbitrary subset of $\{0,1\}^n$. Then
			$$\sum_{i\neq j}^n\left(\mathbb E_x\mathbbm1(x\in A)(-1)^{x_i+x_j}\right)^2\le K\mathbb P(x\in A)^2\log^2\left(1/\mathbb P(x\in A)\right),$$
			where  $x$ is drawn from a uniform distribution over $\{0,1\}^n$, and $K$ is an absolute constant.
		\end{restatable}
		We restate the proof for convenience in Appendix~\ref{sec:talagrand}.
	
		Note that given the message $m'(x')$, the distribution of $X'$ becomes uniform on the preimage $m'^{-1}(m'(x'))$. Therefore, we can apply the above lemma with $n=n'$ and $A=m'^{-1}(m'(x'))$.
		\begin{equation}\label{eq:9g8gyagsd7fg}
		\begin{split}
			&\mathbb E_{x'\sim X'}\mathbb E_{\mu_i\sim M_i|\mathcal E}\kl{W_i|\mathcal E,m'(X')=m'(x'),M_i=\mu_i}{W_i\oplus Z_i|\mathcal E,m'(X')=m'(x'),M_i=\mu_i}\\
			\le&32p\cdot\mathbb E_{x'\sim X'}\mathbb E_{\mu_i\sim M_i|\mathcal E}(q^2)\\
			\le&32p\cdot\mathbb E_{x'\sim X'}\frac{K}{\binom{n'}{2}}\left(\log\frac{2^{n'}}{|A|}\right)^2,
		\end{split}
		\end{equation}
		where $K$ is the constant from Theorem~\ref{thm:talagrand-level-2}.
		
		For intuition, recall that $m'$ maps $\{0,1\}^{n'}$ to $\{0,1\}^c$, and so, if the sizes of the preimages $m'^{-1}$ are the same, then $|A|$ is deterministically $2^{n'-c}$. In this case, informally, $\kl{W_i}{W_i\oplus Z_i}=O(c^2p/n'^2)=O(c^2p/n^2)$. Therefore, $\kl{W}{W\oplus Z}=O(c^2p\alpha/n)$ and $c$ must be $\Omega(\sqrt{n/\alpha p})$ for Bob to be able to distinguish between the two cases with $2/3$ probability.
		
		To prove the same fact in full generality, one needs to be somewhat more technical. 		Suppose for contradiction, that $c\le\gamma\sqrt{n/\alpha p}$ for some small constant $\gamma\in (0, \sqrt{1/8000K})$ (recall that $K$ is the constant from Theorem~\ref{thm:talagrand-level-2}). We first note that $\log\frac{2^{n'}}{|A|}\leq n'$ with probability one, and therefore
		\begin{align*}
			\mathbb E_{x'\sim X'}\left(\log\frac{2^{n'}}{|A|}\right)^2\le&\frac{n'}{2000K\alpha p}+\mathbb P_{x'\sim X'}\left(\left(\log\frac{2^{n'}}{|A|}\right)^2\ge\frac{n'}{2000K\alpha p}\right)\cdot n'^2\\
			=&\frac{n'}{2000K\alpha p}+\mathbb P_{x'\sim X'}\left(|A|\le 2^{n'-\sqrt{n'/2000K\alpha p}}\right)\cdot n'^2\\
			\le&\frac{n'}{2000K\alpha p}+2^{-n'}\cdot2^c\cdot 2^{n'-\sqrt{n'/2000K\alpha p}}\cdot n'^2\\
			=&\frac{n'}{2000K\alpha p}+2^{\left(\gamma-\sqrt{1/2000K}\right)\sqrt{n'/\alpha p}}\cdot n'^2\\
			\leq &\frac{n'}{2000K\alpha p}+2^{-\sqrt{1/8000K}\sqrt{n'}}\cdot n'^2\\			
			\leq &\frac{n'}{1600K\alpha p},\\
		\end{align*}
		where the second to last transition uses the assumption that $\gamma<\sqrt{1/8000K}$ as well as that fact that $\alpha\leq 1$ and $p\leq 1$, and the last transition uses the assumption that $n$ (and hence $n'\geq n/2$), is sufficiently large. Substituting the above bound into~\eqref{eq:9g8gyagsd7fg}, we get

		\begin{align*}
			&\mathbb E_{x'\sim X'}\mathbb E_{\mu_i\sim M_i|\mathcal E}\kl{W_i|\mathcal E,m'(X')=m'(x'),M_i=\mu_i}{W_i\oplus Z_i|\mathcal E,m'(X')=m'(x'),M_i=\mu_i}\\
			\le&\frac{80Kp}{n'^2}\mathbb E_{x'\sim X'}\left(\log\frac{2^{n'}}{|A|}\right)^2\\
			\le&\frac{80Kp}{n'^2}\cdot \frac{n'}{1600K\alpha p}\\
			&\le\frac{1}{20\alpha n'}\\
			&\le\frac{1}{10\alpha n},
		\end{align*}
		where the last transition uses the fact that $n'\geq n-|M|\geq (1-\alpha)n\geq n/2$.
		
		From here we can complete the upper bound via KL-divergence chain rule:
		$$\kl{m,W,M}{m,W\oplus Z,M}\le\sum_{i=1}^{\alpha n/2}\frac1{10\alpha n}=\frac1{20}.$$
		By Pinsker's inequality (Lemma~\ref{lem:pinskers}) this implies that the total variation distance between the distributions of Bob's input between the noisy and non-noisy is at most $1/6$. This is a contradiction, since we assumed that the protocol solves $p$-biased Boolean hidden matching with probability $2/3$.
	\end{proof}
	
	
	
	\seclab{sec:noisy:bhh}
\fi
	
	
	\begin{definition}
	A $t$-hypermatching on $[n]$ is a collection of disjoint subsets of $[n]$, each of size $t$, which we call hyperedges.
	\end{definition}
The noisy Boolean Hidden Hypermatching problem is:
	\begin{definition}
	\deflab{def:bhh}
		For $p\in [0, 1]$ and integer $t\geq 2$ the $p$-Noisy Boolean Hidden $t$-Hypermatching ($p$-noisy BHH) is a one way two party communication problem:
		\begin{itemize}
			\item Alice gets $x\in\{0,1\}^n$ uniformly at random.
			\item Bob gets $M$, a $t$-hypermatching of $[n]$ with $\alpha n/t$ hyperedges, for a constant $\alpha\in (0, 1]$. $M$ is considered to be the $\alpha n/t\times n$ incidence matrix of the matching hyperedges. That is, the $i^\text{th}$ row of $M$ has ones corresponding to the vertices of the $i^\text{th}$ matching hyperedge, and zeros elsewhere.
			  
			 \item Bob also receives labels for each hyperedge $m_i$. In the \YES case  Bob receives edge labels $w=Mx$ (the true parities of each the matching hyperedges with respect to $x$). In the \NO case (the noisy case) Bob receives labels $w\oplus z\in \{0, 1\}^M$, where each $z_i$ is an independent $\Ber{p}$ variable.
			 
			 
			\item Output: Bob must determine whether the communication game is in the \YES or \NO case.
		\end{itemize}
	\end{definition}
	
	
		\paragraph{A protocol for $p$-noisy BHH.} We begin by presenting a simple protocol for solving Boolean hidden hypermatching, which turns out to be nearly asymptotically optimal: for some $c\ge t$ Alice sends a set $S$ of $c$ random bits of $x$ to Bob. Bob then takes each hyperedge that is fully supported on $S$, and verifies that his label is the true parity. If all such labels reflect the true parity, Bob guesses the \YES case, while if there is a discrepancy Bob guesses the \NO case.
		
		It is clear that the only way the protocol can fail is if in the \NO case no hyperedge is simultaneously supported on $S$ and mislabeled. Let us call the event that the $i^\text{th}$ hyperedge, $m_i$ is supported on $S$ and mislabeled $\mathcal E_i$. For each hyperedge of $M$, the probability of being supported on $S$ is $\binom ct/\binom nt \ge c^t/e^tn^t$, while the probability of being mislabeled is independently $p$. Therefore,
		$$\mathbb P(\mathcal E_i)\ge p\cdot\left(\frac{c}{en}\right)^t.$$
		Since the events $\mathcal E_i$ are negatively correlated we have that
		\begin{align*}
		    \mathbb P\left(\bigcup_{i=1}^{\alpha n/t}\mathcal E_i\right)&\ge1-\prod_{i=1}^{\alpha n/t}\left(1-\mathbb P(\mathcal E_i)\right)\\
		    &\ge1-\left(1-p\cdot\left(\frac{c}{en}\right)^t\right)^{\alpha n/t}\\
		    &\ge1-\exp\left(-\frac{p\alpha n}{t}\cdot\left(\frac{c}{en}\right)^t\right)
		\end{align*}
	Therefore, if $t\le n/10$ and $c=\Omega(n^{1-1/t}(p\alpha)^{-1/t})$, this probability is an arbitrarily large constant, and Bob can distinguish between the \YES and \NO cases with high constant probability.

\subsection{Communication Complexity of Noisy BHH}

In this section, we establish our lower bound on the communication complexity of Noisy BHH. 
The core of the proof is a reduction to Boolean Hidden Hypermatching, whose complexity we defined in \thmref{thm:BHH}. The advantage of this approach is that we are able to extend the result of the previous section to the case of $\alpha > 1/2$, and in particular to $\alpha = 1$ -- where Bob receives a perfect hypermatching. However, in this setting we need to be careful about the parity of the noise applied to the labels in the \NO case.

\begin{definition}\label{def:ZPN}
We define the distribution $\mathcal Z_p^n$, as identical to the independent vector of $\Ber{p}$ variables of length $n$, but with the condition that the number of ones is even:
$$V\sim\Ber{p}^n\ \Longrightarrow\ V\Big|2|w_H(V)\sim\mathcal Z_p^n$$
where $w_H$ denotes the Hamming weight. More formally, if $Z\sim\mathcal Z_p^n$ we have
$$\mathbb P(Z=z)=\frac{\mathbbm1(2|w_H(z))p^{w_H(z)}(1-p)^{n-w_H(z)}}{\sum_{z'\in\{0,1\}^n}\mathbbm1(2|w_H(z'))p^{w_H(z')}(1-p)^{n-w_H(z')}}.$$

Furthermore, let $|\mathcal Z_p^n|$ denote the distribution of the Hamming weight of a variable from $\mathcal Z_p^n$.
\end{definition}

We are now able to define the distributional communication problem of $p$-Noisy Perfect Boolean Hidden Hypermatching:

\begin{definition}\label{def:PBHH}
For $p\in[0,1]$ and integer $t\ge 2$ the $p$-Noisy Perfect Boolean Hidden $t$-Hypermatching ($p$-Noisy PBHH) is a one-way two-party communication problem:
\begin{itemize}
    \item Alice gets $x\in\{0,1\}^n$.
    \item Bob gets $M$, a perfect $t$-hypermatching of $[n]$ (that is $n/t$ hyperedges).
    \item Bob also receives edge labels $w\in\{0,1\}^{n/t}$. With probability $1/2$, we are in the \YES case, and the edge labels satisfy $Mx= w$. With probability $1/2$, we are in the \NO case and the edge labels satisfy $Mx\oplus z = w$, where $Z\sim\mathcal Z_p^{n/t}$.
    \item Output: Bob must determine whether the communication game is in the \YES or \NO case.
\end{itemize}

\end{definition}

We prove a lower bound on the communication and information complexity of $p$-Noisy BHH by considering the uniform input distribution: We are in the \YES and \NO cases with probability $1/2$ each; $x$ is sampled uniformly at random from all $n$-length bit strings and $M$ is sampled uniformly at random from all $t$-hypermatchings of size $\alpha n/t$ on $[n]$. We replace the variables $x$, $w$, and $z$ with the {\emph random} variables $X$, $W$ and $Z$ respectively.
%
%
We first define quantities from information theory in order to show a direct sum theorem for internal information. 
\begin{definition}[Entropy and conditional entropy]
The \emph{entropy} of a random variable $X$ is defined as 
\[H(X):=\sum_x p(x)\log\frac{1}{p(x)},\]
where $p(x)=\PPr{X=x}$. 
The \emph{conditional entropy} of $X$ with respect to a random variable $Y$ is defined as 
\[H(X|Y)=\EEx{y}{H(X|Y=y)}.\]
\end{definition}

\begin{definition}[Mutual information and conditional mutual information]
The \emph{mutual information} between random variables $A$ and $B$ is defined as 
\[I(A;B)=H(A)-H(A|B)=H(B)-H(B|A).\]
The \emph{conditional mutual information} between $A$ and $B$ conditioned on a random variable $C$ is defined as 
\[I(A;B|C)=H(A|C)-H(A|B,C).\]
\end{definition}

\begin{definition}[Communication Complexity]
Given a distributional one-way communication problem on inputs in $\mathcal X\times\mathcal Y$ distributed according to $\mathcal D$, and a one-way protocol $\Pi$, let $\Pi(X)$ denote the message of Alice under input $X$. Then the \emph{communication complexity} of $\Pi$ is the maximum length of $\Pi(X)$, over all inputs $X$ and private randomness $r$:
\[\mathsf{CC}(\Pi):=\max_{X\in\text{supp}(\mathcal{D}), \textrm{ private randomness } r}|\Pi_r(X)|,\]
where $\Pi_r(X)$ is the length of Alice's message on input $X$ with private randomness $r$. 
\end{definition}

\begin{definition}[Internal Information Cost]
Given a distributional one-way communication problem on inputs in $\mathcal X\times\mathcal Y$ distributed according to $\mathcal D$, and a one-way protocol $\Pi$, let $\Pi(X)$ denote the message of Alice under input $X$.
Then we define the \emph{internal information cost} of $\Pi$ with respect to some other distribution $\mathcal D'$ to be 
\[\mathsf{IC}_{\mathcal D'}(\Pi):=I_{\mathcal{D'}}(\Pi(X);X|Y, R),\]
where $\Pi(X)$ denotes the message sent from Alice to Bob in the protocol $\Pi$ on input $X$. Here $R$ is the public randomness. 
\end{definition}
Here $\mathcal{D}$ is the correctness distribution for the protocol $\Pi$ and $\mathcal{D'}$ is a distribution for measuring information.




The following well-known theorem (e.g., Lemma 3.14 in \cite{BravermanR14}) shows that the communication complexity of any protocol is at least the (internal) information cost of the protocol. 
\begin{theorem}
\label{thm:info:cc}
Given a distributional one-way communication problem on inputs in $\mathcal X\times\mathcal Y$ distributed according to $\mathcal D$, and a one-way protocol $\Pi$, let $\Pi(X)$ denote the message of Alice under input $X$.
Then, if $\text{supp}(\mathcal D')\subseteq\text{supp}(\mathcal D)$,
\[\mathsf{CC}(\Pi)\ge\mathsf{IC}_{\mathcal D'}(\Pi).\]
\end{theorem}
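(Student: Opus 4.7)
The plan is to bound the internal information cost from above by the entropy of Alice's message and then bound that entropy by the communication complexity. Explicitly, I would chain the inequalities
\[\mathsf{IC}_{\mathcal{D}'}(\Pi) \;=\; I_{\mathcal{D}'}(\Pi(X); X \mid Y, R) \;\le\; H_{\mathcal{D}'}(\Pi(X) \mid Y, R) \;\le\; H_{\mathcal{D}'}(\Pi(X)) \;\le\; \mathsf{CC}(\Pi),\]
so the argument splits into three short steps, each of which is a standard information-theoretic identity or inequality.

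For the first inequality, I would expand using the definition of conditional mutual information, $I_{\mathcal{D}'}(\Pi(X); X \mid Y, R) = H_{\mathcal{D}'}(\Pi(X) \mid Y, R) - H_{\mathcal{D}'}(\Pi(X) \mid X, Y, R)$, and drop the subtracted term by non-negativity of conditional (discrete) entropy. The second inequality is the fact that conditioning does not increase entropy, applied with conditioning variables $(Y, R)$. The key third inequality is the one that invokes the hypothesis $\text{supp}(\mathcal{D}') \subseteq \text{supp}(\mathcal{D})$: for every realization of the public randomness $r$ and every input in $\text{supp}(\mathcal{D})$, Alice's message $\Pi_r(X)$ is by definition a bit string of length at most $\mathsf{CC}(\Pi)$, and this length bound therefore also holds on all inputs drawn from $\mathcal{D}'$. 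Padding each message to a string of length exactly $\mathsf{CC}(\Pi)$ in a canonical (prefix-free) way, the number of distinct possible messages is at most $2^{\mathsf{CC}(\Pi)}$, and so the entropy of $\Pi(X)$ under $\mathcal{D}'$ is at most $\log_2 2^{\mathsf{CC}(\Pi)} = \mathsf{CC}(\Pi)$.

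No step here is delicate; the containment of supports is used solely to transfer the worst-case length guarantee from $\mathcal{D}$ to $\mathcal{D}'$ so the final counting bound is valid. Since the result is a well-known lemma of Braverman and Rao (and is cited as such in the excerpt), I do not anticipate any real obstacle; the only point to be mindful of is keeping track of whether public randomness $R$ is treated as a conditioning variable throughout, which is why each of the entropies above is written as conditioned on $R$ wherever appropriate.
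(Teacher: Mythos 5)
Your proof is correct and follows essentially the same route as the paper's own justification, which chains $I_{\mathcal D'}(\Pi(X);X\mid Y)\le H(\Pi(X)\mid Y)\le H(\Pi(X))\le \mathbb{E}_{X\sim\mathcal D'}|\Pi(X)|\le \max_{x\in\mathrm{supp}(\mathcal D)}|\Pi_r(x)|=\mathsf{CC}(\Pi)$, using the support containment exactly where you do. The only cosmetic difference is that you bound the final entropy by counting messages after padding (which, as you note, needs the message set to be prefix-free to keep the padding injective), whereas the paper bounds entropy by the expected message length; both are standard and equivalent here.
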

Note that we include the condition $\text{supp}(\mathcal D')\subseteq\text{supp}(\mathcal D)$ since our distributional communication problem is defined as the maximum message length over inputs in the support of $\mathcal D$, and thus $I_{\mathcal{D'}}(\Pi(X);X|Y) \leq H(\Pi(X) | Y) \leq H(\Pi(X)) \leq {\bf E}_{X \sim \mathcal D'}|\Pi(X)|$, where $|\Pi(X)|$ is the length of Alice's message on random input $X$, which in turn is at most $\max_{x \in \text{supp}(\mathcal D'), \textrm{ private randomness } r} |\Pi_r(x)| \leq \max_{x \in \text{supp}(\mathcal D), \textrm{ private randomness }r} |\Pi_r(x)| = \mathsf{CC}(\Pi)$. Here $|\Pi_r(x)|$ denotes the length of Alice's message on input $x$ with private randomness $r$.  

The core of our proof is to show that any protocol that succeeds in solving the $p$-Noisy PBHH with high constant probability has high internal information cost. This statement is then reduced to a statement about the information cost of a distributional version of the standard BHH problem, through a method similar to that of \cite{Bar-YossefJKS04}.

For completeness we state this distributional version of BHH:

\begin{definition}\label{def:BHH}
For an {\emph{even}} integer $n$, and integer $t\ge2$, Boolean Hidden $t$-Hypermatching (BHH) is a one-way two-party communication problem:
\begin{itemize}
    \item Alice gets $X\in\{0,1\}^n$ uniformly at random.
    \item Bob gets $M$, a perfect $t$-hypermatching of $[n]$ (that is, $n/t$ hyperedges) uniformly at random.
    \item Bob also receives edge labels $W\in\{0,1\}^{n/t}$. With probability $1/2$, we are in the \YES case, and Bob's edge labels satisfy $MX=W$. With probability $1/2$, we are in the \NO case and Bob's edge labels satisfy $MX\oplus W=1^{n/t}$.
    \item Output: Bob must determine whether the communication game is in the \YES or \NO case.
\end{itemize}
 We call this input distribution of BHH $\mathcal D$, and call the distributions when conditioning on the \YES and \NO cases $\mathcal D_{\bf YES}$ and $\mathcal D_{\bf NO}$, respectively.
\end{definition}

\begin{lemma}[Lemma 3.4 in~\cite{JainPY12} with $t=1$]
\label{lem:info:comm}
Given a distributional one-way communication problem on inputs in $\mathcal X\times\mathcal Y$ distributed according to $\mathcal D'$. Suppose there exists a one-way communication protocol $\Pi'$ succeeding with probability $1-\delta$ with
$$IC_{\mathcal D'}(\Pi')\le c.$$
Then for any $\epsilon>0$ there exists some other one-way communication protocol $\Pi$ with
$$CC(\Pi)\le\frac{c+5}{\epsilon}+O\left(\log\frac1\epsilon\right),$$
and succeeding with probability $1-\delta-6\epsilon$.
\end{lemma}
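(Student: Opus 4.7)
The plan is to compress $\Pi'$ into a low-communication protocol $\Pi$ via a correlated-sampling scheme driven by public randomness. Writing $M = \Pi'(X)$ for Alice's message and $R$ for the public randomness, the hypothesis $\mathsf{IC}_{\mathcal{D}'}(\Pi') \le c$ unfolds, via the standard identity between conditional mutual information and averaged KL divergence, as
\[\mathbb{E}_{(X,Y)\sim\mathcal{D}',\,R}\bigl[\kl{P_{M\mid X,R}}{P_{M\mid Y,R}}\bigr] \le c,\]
where $P_{M\mid X,R}$ and $P_{M\mid Y,R}$ denote the conditional distributions of the message given $(X,R)$ and given $(Y,R)$, respectively.

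First I would apply Markov's inequality to isolate a ``good'' event
\[\mathcal{G} := \bigl\{(X,Y,R):\kl{P_{M\mid X,R}}{P_{M\mid Y,R}} \le c/\epsilon\bigr\},\]
which has probability at least $1-\epsilon$ under $\mathcal{D}'$. On the complement of $\mathcal{G}$ the new protocol $\Pi$ simply aborts (say, by outputting a default bit), contributing at most $\epsilon$ additional error on top of the $\delta$ error that $\Pi'$ itself commits.

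Conditioned on $\mathcal{G}$, I would invoke a rejection-sampling protocol of the Harsha--Jain--McAllester--Radhakrishnan type: using $R$, both parties generate a shared i.i.d.\ sequence of candidate messages $m_1, m_2, \ldots$ drawn from a proposal distribution computable from $R$ alone; Alice then locates the first index $i$ whose candidate passes a likelihood-ratio filter targeting $P_{M\mid X,R}$, rounds and truncates this index to a bit-string of length $(c+5)/\epsilon + O(\log(1/\epsilon))$, and transmits it to Bob; Bob reads $m_i$ off the shared randomness and runs his portion of $\Pi'$ on it. A standard analysis shows that, conditioned on $\mathcal{G}$, the truncated index correctly identifies a sample with distribution within total variation $O(\epsilon)$ of $P_{M\mid X,R}$.

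The main technical obstacle is that Alice does not know $Y$, so she cannot directly reject-sample against the ``ideal'' reference distribution $P_{M\mid Y,R}$; yet the target communication bound is charged to the $Y$-conditional KL divergence rather than, say, $\kl{P_{M\mid X,R}}{P_{M\mid R}}$ (which would only yield the external information cost). The resolution, which is the technical heart of the compression lemma of~\cite{JainPY12}, is to sample from the $R$-marginal $P_{M\mid R}$ while applying several auxiliary Markov truncations on the likelihood ratios $P_{M\mid X,R}/P_{M\mid R}$ and $P_{M\mid Y,R}/P_{M\mid R}$; these restrict attention to a ``typical set'' of messages on which the effective communication cost can be charged to $\kl{P_{M\mid X,R}}{P_{M\mid Y,R}}$, up to an $O(\log(1/\epsilon))$ additive slack. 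Accumulating the five additional $\epsilon$ losses from (i)--(ii) the two likelihood-ratio truncations, (iii) the truncation to $\mathcal{G}$, (iv) the rejection-sampler overflow (Alice's index exceeding the transmitted length), and (v) the rounding/quantization of the index, together with the $\delta$ error of $\Pi'$, yields the claimed bound of $\delta + 6\epsilon$ on the total failure probability of $\Pi$.
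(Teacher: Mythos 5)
First, a point of reference: the paper does not prove this lemma at all --- it is imported verbatim as Lemma~3.4 of~\cite{JainPY12} (with $t=1$), so there is no internal proof to compare against; your attempt has to be judged as a reconstruction of that known one-way compression argument. Your skeleton is the right one: for a one-way protocol the internal information cost equals $\mathbb{E}_{(X,Y)\sim\mathcal D',R}\left[D_{\mathrm{KL}}\left(P_{M\mid X,R}\,\middle\|\,P_{M\mid Y,R}\right)\right]$, one truncates via Markov, simulates the single message by correlated sampling from public randomness, and books a constant number of $\epsilon$-losses. You also correctly identify the crux, namely that Alice cannot sample against $P_{M\mid Y,R}$ because she does not know $Y$.

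However, the mechanism you describe does not close that crux, and this is a genuine gap rather than a presentational one. In your protocol Alice rejection-samples against a proposal computable from $R$ alone (essentially $P_{M\mid R}$), truncates the \emph{index} to $(c+5)/\epsilon+O(\log(1/\epsilon))$ bits, and ``Bob reads $m_i$ off the shared randomness.'' The index of a rejection sampler against $P_{M\mid R}$ is geometric with rate governed by the ratio $P_{M\mid X,R}/P_{M\mid R}$, i.e., by the \emph{external} information, and no amount of truncating the likelihood ratios $P_{M\mid X,R}/P_{M\mid R}$ and $P_{M\mid Y,R}/P_{M\mid R}$ changes the fact that a bare index only identifies the candidate if it is sent in full. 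What actually makes the cost scale with the internal divergence --- and what is the technical content of Lemma~3.4 in~\cite{JainPY12} and of Braverman--Rao-style one-shot simulation --- is that Alice sends only a random hash (fingerprint) of her sampled message or accepted point, of length roughly the positive part of $\log\bigl(P_{M\mid X,R}(m)/P_{M\mid Y,R}(m)\bigr)$ plus $O(\log(1/\epsilon))$, while \emph{Bob decodes using his own distribution} $P_{M\mid Y,R}$, searching among candidates with bounded likelihood ratio under his curve and using the hash to disambiguate; your description omits this Bob-side decoding entirely, and without it the scheme provably degrades to external information cost. Two smaller issues: (a) neither party can detect the event $\mathcal{G}$ (it depends on both $X$ and $Y$), so ``abort on the complement of $\mathcal{G}$'' is not implementable --- one simply charges $\Pr[\mathcal{G}^c]\le\epsilon$ as error; (b) applying Markov first to get $D_{\mathrm{KL}}\le c/\epsilon$ on $\mathcal{G}$ and then truncating the realized log-ratio inside $\mathcal{G}$ would yield a bound of order $c/\epsilon^2$; to land on $(c+O(1))/\epsilon$ one applies a single Markov bound to the positive part of the log-likelihood ratio under the joint distribution of $(X,Y,R)$ and the transmitted message.
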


\begin{corollary}\label{cor:BHH-D-IC-bound}
Any randomized protocol $\Pi$ that succeeds with probability at least $\frac{2}{3}$ over the distribution $\mathcal{D}$ for the Boolean Hidden Hypermatching problem with hyperedges that contain $t$ vertices requires internal information cost $\Omega(n^{1-1/t})$ when measured on the $D$ distribution.
\end{corollary}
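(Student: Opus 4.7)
The plan is to derive this corollary as a contrapositive, combining the communication lower bound of \thmref{thm:BHH} with the message-compression bound of \lemref{lem:info:comm}. Concretely, suppose toward contradiction that there is a randomized one-way protocol $\Pi'$ for BHH that succeeds on $\mathcal{D}$ with probability at least $2/3$ yet satisfies $\mathsf{IC}_{\mathcal{D}}(\Pi') = o(n^{1-1/t})$.

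I would then invoke \lemref{lem:info:comm} with $\mathcal{D}'=\mathcal{D}$, failure probability $\delta = 1/3$, and a sufficiently small constant $\epsilon$ (say $\epsilon = 1/100$). The lemma produces a new one-way protocol $\Pi$ with worst-case communication
\[
\mathsf{CC}(\Pi) \;\leq\; \frac{\mathsf{IC}_{\mathcal{D}}(\Pi') + 5}{\epsilon} + O\!\left(\log\frac{1}{\epsilon}\right) \;=\; o(n^{1-1/t}),
\]
that still succeeds on $\mathcal{D}$ with probability at least $1 - \delta - 6\epsilon = 2/3 - 6\epsilon$, which is a constant strictly above $1/2$. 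Since $\mathcal{D}$ is precisely the standard hard distribution for BHH used to prove \thmref{thm:BHH} in \cite{VerbinY11} (uniform $x$, uniform perfect $t$-hypermatching $M$, and a uniform \YES/\NO coin), Yao's minimax principle implies that any deterministic (and hence any randomized) protocol that succeeds on $\mathcal{D}$ with constant probability bounded away from $1/2$ requires $\Omega(n^{1-1/t})$ bits of communication. This contradicts the bound on $\mathsf{CC}(\Pi)$, completing the proof.

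The only mild obstacle is reconciling the $2/3$ success probability in the statement of \thmref{thm:BHH} with the degraded success probability $2/3 - 6\epsilon$ produced by \lemref{lem:info:comm}. This is handled by noting that the proof of \thmref{thm:BHH} in fact yields an $\Omega(n^{1-1/t})$ lower bound for any constant success probability exceeding $1/2$ (equivalently, by amplifying $\Pi$ via a constant number of independent parallel runs and a majority vote, which preserves the asymptotic communication bound). No additional information-theoretic machinery is required beyond what is already cited in the excerpt.
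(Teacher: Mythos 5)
Your proposal is correct and follows essentially the same route as the paper: assume a protocol with internal information cost $o(n^{1-1/t})$, apply the compression result of Lemma~\ref{lem:info:comm} with a small constant $\epsilon$ to get a protocol with communication $o(n^{1-1/t})$ and success probability still bounded away from $1/2$ over $\mathcal{D}$, and contradict the known $\Omega(n^{1-1/t})$ distributional lower bound. The only difference is bookkeeping on the degraded success probability: the paper resolves it by citing the distributional bound (e.g.\ \cite{AssadiKL17}) which holds for any constant advantage over $1/2$ --- your first justification, which is the right one, since your alternative ``amplify by majority vote'' does not in general work for distributional error (a protocol can fail deterministically on a constant fraction of inputs).
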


\begin{proof}
It is known, e.g.~\cite{AssadiKL17}, that the communication complexity of this distributional version of BHH is $\Omega(n^{1-1/t})$ for any protocol that succeeds with probability bounded away from $1/2$. However, any protocol with internal information cost $o(n^{1-1/t})$, when combined with Lemma~\ref{lem:info:comm} for a small constant $\epsilon$, would result in a better protocol for BHH; a contradiction.
\end{proof}


\begin{corollary}[Conditional information lower bounds for BHH]
\corlab{cor:info:bhh}
Any randomized protocol $\Pi$ that succeeds with probability at least $\frac{2}{3}$ over the distribution $\mathcal{D}$ for the Boolean Hidden Hypermatching problem with hyperedges that contain $t$ vertices requires internal information cost $\Omega(n^{1-1/t})$ when measured on the $\mathcal D_{\bf YES}$ distribution. That is:
$$I_{\mathcal D_{\bf YES}}(X;\Pi(X)|M,W)=\Omega(n^{1-1/t}).$$
\end{corollary}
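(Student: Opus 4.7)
The plan is to derive the bound on information cost measured on $\mathcal D_{\bf YES}$ from the corresponding bound on $\mathcal D$ (Corollary \ref{cor:BHH-D-IC-bound}) by exploiting the obvious symmetry between the \YES and \NO cases of BHH. Let $B\in\{0,1\}$ be the uniform random variable indicating which case $\mathcal D$ is in, so that $\mathcal D$ is the mixture $\tfrac12 \mathcal D_{\bf YES} + \tfrac12 \mathcal D_{\bf NO}$ (conditioned on $B$).

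The key structural observation is that the deterministic bijection $W\mapsto W\oplus 1^{n/t}$ carries $\mathcal D_{\bf NO}$ to $\mathcal D_{\bf YES}$ while fixing $X$, $M$, and Alice's message $\Pi(X)$ (as well as the public randomness). Since mutual information is invariant under deterministic relabelings of conditioned variables, this gives
$$I_{\mathcal D_{\bf NO}}(X;\Pi(X)\mid M,W) = I_{\mathcal D_{\bf YES}}(X;\Pi(X)\mid M,W),$$
and therefore, averaging over $B$,
$$I_{\mathcal D}(X;\Pi(X)\mid B,M,W) = I_{\mathcal D_{\bf YES}}(X;\Pi(X)\mid M,W).$$

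The next step is to relate the conditional-on-$B$ information cost to the un-conditional version via the chain rule. Expanding $I_{\mathcal D}(\Pi(X); X, B \mid M, W)$ in its two orderings yields
$$I_{\mathcal D}(X;\Pi(X)\mid M,W) + I_{\mathcal D}(\Pi(X); B \mid X, M, W) = I_{\mathcal D}(\Pi(X); B \mid M, W) + I_{\mathcal D}(X;\Pi(X)\mid B, M, W).$$
Since $\Pi(X)$ is a function of $X$, the public randomness, and Alice's private randomness — each independent of $B$, $M$, and $W$ — the second term on the left vanishes. Rearranging and using $I_{\mathcal D}(\Pi(X); B \mid M, W) \le H(B) \le 1$ gives
$$I_{\mathcal D_{\bf YES}}(X;\Pi(X)\mid M,W) \ge I_{\mathcal D}(X;\Pi(X)\mid M,W) - 1.$$

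By Corollary \ref{cor:BHH-D-IC-bound}, the right-hand side is $\Omega(n^{1-1/t}) - 1 = \Omega(n^{1-1/t})$, finishing the proof. There is no real obstacle: the argument is purely information-theoretic, and the only potential worry — the additive loss from conditioning on the one-bit variable $B$ — is a single bit and therefore negligible against the target asymptotic lower bound.
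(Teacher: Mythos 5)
Your proof is correct and follows essentially the same route as the paper's: decompose the information cost over the \YES/\NO indicator bit at an additive loss of at most one bit, use the deterministic label-flip symmetry ($W\mapsto W\oplus 1^{n/t}$ maps $\mathcal D_{\bf NO}$ to $\mathcal D_{\bf YES}$ while fixing $X$, $M$, $\Pi(X)$) to equate the two conditional terms, and then invoke Corollary~\ref{cor:BHH-D-IC-bound}. The only differences are presentational (your explicit chain-rule expansion and bijection argument versus the paper's statement that conditioning on $b$ costs at most $1$ and that revealing $b$ and $W$ reveals the true parities), and your conclusion that $I_{\mathcal D}(\Pi(X);B\mid X,M,W)=0$ is justified because, conditioned on $X$, the message depends only on randomness independent of $B$, even though $X$ itself is not independent of $W$.
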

\begin{proof}
By Corollary~\ref{cor:BHH-D-IC-bound} we know that
\begin{align*}
    I_{\mathcal D}(X;\Pi(X)|M,W)=\Omega(n^{1-1/t})
\end{align*}
We decompose this into information complexity with respect to the $\mathcal D_{\bf YES}$ and $\mathcal D_{\bf NO}$ distributions. Let $b$ be the single bit denoting whether the input is in the \YES or \NO case. Since $b$ has entropy $1$, adding or removing the conditioning on $b$ amounts to at most a change of $1$ in the mutual information.
\begin{align*}
    I_{\mathcal D}(X;\Pi(X)|M,W)&\le 1+I_{\mathcal D}(X;\Pi(X)|M,W,b)\\
    &=1+\tfrac12I_{\mathcal{D_{\bf YES}}}(X;\Pi(X)|M,W,b)+\tfrac12I_{\mathcal{D_\NO}}(X;\Pi(X)|M,W,b)
\end{align*}
Note that $X$ (and consequently $\Pi(X)$) follow the same distribution in both the \YES and \NO cases. Furthermore, note that BHH flips each bit of the edge labels deterministically in the \NO case. Therefore, in both the \YES and \NO cases, revealing $b$ and $W$ amounts to revealing the true edge parities. Thus, two terms ($I_{\mathcal D_{\bf YES}}$ and $I_{\mathcal{D}_{\textbf{NO}}}$) are equal:
\begin{align*}
    I_{\mathcal D}(X;\Pi(X)|M,W)&\le1+I_{\mathcal D_{\bf YES}}(X;\Pi(X)|M,W,b)\\
    &\le2+I_{\mathcal D_{\bf YES}}(X;\Pi(X)|M,W).
\end{align*}
The statement of the corollary then follows.
\end{proof}

We define one more distributional one-way communication problem to bridge the gap between BHH and $p$-Noisy PBHH. It consists of $1/(2p)$ instances of BHH, of which exactly one or none are in the \NO case. Furthermore all the instances of BHH are of variable size:

\begin{definition}\label{def:VBHH}
Let $p\in\left(\frac{t}{2n},\frac{1}{2}\right]$, where we assume for simplicity that $Q:=1/2p$ is an integer. For $t\ge 2$ integer, the Variable Size, $Q$-Copy Boolean Hidden $t$-Hypermatching ($\overline{\text{VBHH}}$) is a one-way two-party communication problem:
\begin{itemize}
    \item $S_1,\ldots,S_Q$ are drawn independently from $|\mathcal Z_p^{n/t}|$ (recall Definition~\ref{def:ZPN}), and $r$ is drawn independently, uniformly from $[Q]$.
    \item In the \YES case, Alice and Bob get $Q$ independent copies of BHH distributed according to $\mathcal D_{\bf YES}$, where the $i^{\text{th}}$ copy is on $t\cdot S_i$ vertices.
    \item In the \NO case, Alice and Bob again get $Q$ independent copies of BHH, where the $i^\text{th}$ copy has size $t\cdot S_i$. All copies are distributed according to $\mathcal D_{\bf YES}$, except the $r^\text{th}$ copy, which is distributed according to $\mathcal D_{\bf NO}$.
    \item Output: Bob must determine whether the communication game is in the \YES or \NO case.
\end{itemize}
We call this input distribution of $\overline{\text{VBHH}}$ $\overline{\mathcal D}$, and call the distributions when conditioning on the \YES and \NO cases $\overline{\mathcal D}_{\bf YES}$ and $\overline{\mathcal D}_{\bf NO}$ respectively.
\end{definition}

\begin{theorem}\label{thm:BHH-IC-lower-bound}
Any randomized protocol $\Pi$ that succeeds with probability at least $\frac{4}{5}$ over the distribution $\overline{\mathcal{D}}$ for the $\overline{\text{VBHH}}$ parameters $n$, $2\le t\le n/100$ and $p\in\left(\frac{t}{2n},\frac{1}{2}\right]$, has internal information cost $\Omega(n^{1-1/t}p^{-1/t})$ when measured on the $\overline{\mathcal D}_{\bf YES}$ distribution. That is:
$$I_{\overline{\mathcal D}_{\bf YES}}(X;\Pi(X)|M,W)=\Omega(n^{1-1/t}p^{-1/t}).$$
\end{theorem}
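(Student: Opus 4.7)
The plan is to prove the lower bound via a direct sum argument over the $Q=1/(2p)$ BHH copies that compose an $\overline{\text{VBHH}}$ instance. Under $\overline{\mathcal D}_{\bf YES}$ the copies are mutually independent $\mathcal D_{\bf YES}$ instances of sizes $tS_1,\ldots,tS_Q$, and the chain rule for conditional mutual information decomposes the internal information cost into per-copy contributions
\[I_{\overline{\mathcal D}_{\bf YES}}(X;\Pi|M,W,R)=\sum_{r=1}^{Q} I(X_r;\Pi|X_{<r},M,W,R).\]
Using independence of $(X_{<r},M_{-r},W_{-r})$ from $(X_r,M_r,W_r)$ under $\overline{\mathcal D}_{\bf YES}$, a short information-theoretic calculation gives $I(X_r;\Pi|X_{<r},M,W,R)\geq I(X_r;\Pi|M_r,W_r,R)$, so it suffices to show that for a constant fraction of $r$, the per-copy ``marginal'' quantity $I(X_r;\Pi|M_r,W_r,R)$ is $\Omega((pn)^{1-1/t})$.

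To identify the good indices and invoke the single-copy lower bound, I apply an averaging argument over the random position of the \NO copy in $\overline{\mathcal D}_{\bf NO}$: since $\Pi$ succeeds on $\overline{\mathcal D}$ with probability at least $4/5$, there is a set $R^*\subseteq[Q]$ of size at least $Q/2$ such that, for each $r\in R^*$, the induced sub-problem at copy $r$ (obtained by sampling the other $Q-1$ copies as independent $\mathcal D_{\bf YES}$ instances via public randomness, with sizes drawn from $|\mathcal Z_p^{n/t}|$) is solved by $\Pi$ with constant advantage. Standard concentration of $|\mathcal Z_p^{n/t}|$ around its mean $pn/t$ ensures $tS_r=\Theta(pn)$ with high probability, and Corollary~\ref{cor:info:bhh} applied to the resulting embedded BHH protocol $\Pi_r$ yields an information cost of $\Omega((pn)^{1-1/t})$ on the appropriate BHH-\YES distribution for each $r\in R^*$.

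Summing the per-copy lower bounds over $r\in R^*$ gives the desired bound
\[I_{\overline{\mathcal D}_{\bf YES}}(X;\Pi|M,W,R)\geq|R^*|\cdot\Omega\bigl((pn)^{1-1/t}\bigr)=\Omega\bigl(n^{1-1/t}p^{-1/t}\bigr).\]
The main technical hurdle in this plan is connecting the marginal quantity $I(X_r;\Pi|M_r,W_r,R)$ produced by the chain rule to the information cost of the embedded BHH protocol $\Pi_r$, whose public randomness includes the full $(X_{-r},M_{-r},W_{-r})$ and thus makes its internal information cost equal to $I(X_r;\Pi|M_r,W_r,X_{-r},M_{-r},W_{-r},R)$: the two quantities differ by whether the other copies are marginalized out of Alice's message or revealed to both parties through public randomness. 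I expect to close this gap either by applying the message compression result of Lemma~\ref{lem:info:comm} at the single-copy level---turning a hypothetically low marginal information cost into a low-communication BHH protocol and deriving a contradiction with the BHH communication lower bound---or by a direct information-theoretic manipulation tailored to the product structure of $\overline{\mathcal D}_{\bf YES}$.
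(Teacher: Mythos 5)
Your overall plan (chain rule over the $Q=1/(2p)$ copies, averaging over the location $r$ of the potential \NO copy, concentration of $|\mathcal Z_p^{n/t}|$, and an embedding that invokes Corollary~\ref{cor:info:bhh} per copy) is the same as the paper's, but the place you flag as the ``main technical hurdle'' is a genuine gap, and it is created by your own extra step of dropping the conditioning from the full $(M,W)$ down to $(M_r,W_r)$. The inequality $I(X_r;\Pi\mid X_{<r},M,W,R)\ge I(X_r;\Pi\mid M_r,W_r,R)$ is valid, but it moves you to a \emph{smaller} quantity, and the embedding argument cannot lower bound it. In any simulation of $\Pi$ from a single BHH instance at coordinate $r$, Bob must know $(M_{-r},W_{-r})$ in order to run $\Pi$'s decision rule (and Alice must know them to sample $X_{-r}$ consistently with the \YES promise $W_j=M_jX_j$), so $(M_{-r},W_{-r})$ are forced into the public randomness and hence into the conditioning of the embedded protocol's internal information cost. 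Since $(X_{-r},M_{-r},W_{-r})$ is independent of $X_r$ given $(M_r,W_r)$, adding this conditioning only \emph{increases} mutual information; thus Corollary~\ref{cor:info:bhh} gives a lower bound on $I(X_r;\Pi\mid M,W)$ (or on the even larger quantity with $X_{-r}$ also revealed, as in your version where $X_{-r}$ is public), which says nothing about your per-copy term $I(X_r;\Pi\mid M_r,W_r,R)$. Indeed the less-conditioned quantity can be genuinely smaller (think of a message like $X_1\oplus X_2$, which carries no information about $X_1$ marginally but full information once $X_2$ is conditioned on), so no generic manipulation will close the gap, and your fallback via Lemma~\ref{lem:info:comm} runs into the same obstruction: compressing $\Pi$ into a single-copy BHH protocol still requires Bob to hold $(M_{-r},W_{-r})$.

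The fix is exactly what the paper does: stop the dropping at $I(X_r;\Pi\mid M,W)$, i.e., remove only $X_{<r}$ (legitimate since $X_r\perp X_{<r}$ given $(M,W)$ under $\overline{\mathcal D}_{\bf YES}$), and design the embedding so that $(M_{-r},W_{-r})$ are public randomness while $X_{-r}$ is Alice's \emph{private} randomness (she can sample it because she sees the public $(M_{-r},W_{-r})$). Then the embedded protocol's internal information cost, conditioned on Bob's input and the public randomness, is exactly $I(X_r;\Pi\mid M,W)$, the chain-rule term you retained, and Corollary~\ref{cor:info:bhh} (applied after conditioning on a good size $S_r\ge pn/(2t)$ and a value of $r$ where the success probability is at least $3/4$) gives the $\Omega\left((pn)^{1-1/t}\right)$ per-copy bound; summing over the $\Omega(Q)$ good indices finishes the proof.
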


\begin{proof}
Note that in the context of $\overline{\text{VBHH}}$ $X$, $M$, and $W$ are vectors of length $Q$, where ($X_i$, $M_i$, $W_i$) are individual instances of inputs of BHH. Let $\Pi$ be a protocol that succeeds with probability $4/5$ on $\overline{\mathcal D}$. Then
\begin{align}
I_{\overline{\mathcal D}_{\bf YES}}(X;\Pi(X)|M, W)&=\sum_{i=1}^QI_{\overline{\mathcal D}_{\bf YES}}(X_i;\Pi(X)|M,W,X_{<i})\nonumber\\
&\ge\sum_{i=1}^QI_{\overline{\mathcal D}_{\bf YES}}(X_i;\Pi(X)|M,W),\label{eq:mutual-info-sum}
\end{align}
where the mutual information only decreases by removing the conditioning on $X_{<i}$ because $X_i$ is independent of $X_{<i}$.

Now, we know that $\Pi$ succeeds with probability at least $4/5$. Since $r$ is distributed uniformly in $[Q]$, on at least a constant fraction of the possible values of $r$, $\Pi$ must succeed with probability at least $3/4$ for that specific value of $r$. Let $R\subseteq[Q]$ be the set of such values of $r$, so that $|R|=\Omega(Q)$.

Consider a term corresponding to $i\in R$:
\begin{align}
    I_{\overline{\mathcal D}_{\bf YES}}(X_i;\Pi(X)|M,W).\label{eq:summand}
\end{align}
We will demonstrate that there exists a protocol for solving BHH as defined in Definition~\ref{def:BHH} with probability at least $3/4$, with internal information cost equal to the quantity in Equation~\ref{eq:summand}. By \corref{cor:info:bhh} this will allow us to lower bound Equation~\ref{eq:summand}.

The protocol is as follows: Alice and Bob are given inputs of BHH distributed according to $\mathcal D$, which we will call $X_i$, $M_i$, $W_i$. Alice and Bob will then generate $Q-1$ more instances of BHH from the distribution $\mathcal D_{\bf YES}$, called $(X_j,M_j,W_j)$, for $j=1,\ldots,i-1,i+1,\ldots,Q$. $M_j$ and $W_j$ are always generated using public randomness, while $X_j$ is generated using the private randomness of Alice. This allows Alice to have access to $M_j$ and $W_j$, which in turn allows $X_j$ to be generated in such a way that $(X_j,M_j,W_j)$ is indeed distributed according to $\mathcal D_{\bf YES}$. Alice then simply sends the message $\Pi(X_1,\ldots,X_Q)=\Pi(X)$. Since all instances of BHH other than the $i^\text{th}$ one are in the \YES case, the distribution $(X,M,W)$ is exactly $\overline{\mathcal D}$ conditioned on $r=i$; furthermore, it is in the \YES case exactly if $(X_i, M_i, W_i)$ is in the \YES case. Since $i\in R$, $\Pi$ succeeds with probability at least $3/4$, allowing Alice and Bob to succeed in solving BHH with the same probability.

Finally, we must address the fact that the above protocol succeeds with probability $3/4$ on an instance of BHH with variable size ($t\cdot S_i\sim t\cdot|\mathcal Z_p^{n/t}|$). One can easily verify that for large enough $n$ $\mathbb P(S_i<pn/(2t))$ is at most $1/100$. Therefore, the protocol must succeed with probability at least $3/4-1/100\ge2/3$ conditioned on $S_i\ge pn/(2t)$. Finally, this means that the protocol must succeed with probability at least $2/3$ for at least one specific such setting of $S_i$. Therefore, by \corref{cor:info:bhh},
$$I_{\overline{\mathcal D}_{\bf YES}}(X_i;\Pi(X)|M,W)\ge\Omega\left(\left(t\cdot\frac{pn}{2t}\right)^{1-1/t}\right)=\Omega\left((pn)^{1-1/t}\right).$$

\noindent
Therefore, returning to Equation~\ref{eq:mutual-info-sum}, we have that
\begin{align*}
I_{\overline{\mathcal D}_{\bf YES}}(X;\Pi(X)|M, W)&\ge\sum_{i=1}^QI_{\overline{\mathcal D}_{\bf YES}}(X_i;\Pi(X)|M,W)\ge\Omega\left(Q(pn)^{1-1/t}\right)=\Omega\left(n^{1-1/t}p^{-1/t}\right).
\end{align*}
\end{proof}

Finally, we lower bound the communication complexity of $p$-Noisy PBHH, by a reduction from $\overline{\text{VBHH}}$.

\begin{theorem}
\thmlab{thm:pbhh}
For $t\le n/10$, and $p\in\left(\frac{t}{2n},\frac{1}{2}\right]$, the communication complexity of $p$-Noisy Perfect Boolean Hidden Hypermatching with success probability $5/6$ is $\Omega(n^{1-1/t} p^{-1/t})$ bits.
\end{theorem}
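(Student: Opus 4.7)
The plan is to reduce $\overline{\text{VBHH}}$ (on a somewhat smaller vertex count) to $p$-Noisy PBHH on $n$ vertices, so that a protocol $\Pi$ for the latter yields a protocol $\Pi'$ for the former, and then to apply Theorem~\ref{thm:BHH-IC-lower-bound} together with Theorem~\ref{thm:info:cc} to $\Pi'$. The key observation is that the noise distribution $\mathcal{Z}_p^{n/t}$ admits the following two-step sampling: first draw a weight $k \sim |\mathcal{Z}_p^{n/t}|$ (automatically even), then output a uniformly random weight-$k$ vector in $\{0,1\}^{n/t}$. In the \NO case of $\overline{\text{VBHH}}$ the $S_r$ labels of the bad block $r$ are the only flipped labels, and $S_r \sim |\mathcal{Z}_p^{n/t}|$ (for $r$ uniform in $[Q]$), so a uniformly random embedding of the $Q$ blocks into the $n/t$ hyperedges of a PBHH instance, with the remaining hyperedges padded with correctly labeled ones, reproduces exactly the two-step sampling of $\mathcal{Z}_p^{n/t}$.

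Concretely, given a $p$-Noisy PBHH protocol $\Pi$ succeeding with probability $5/6$, the protocol $\Pi'$ runs as follows on an $\overline{\text{VBHH}}$ input on some vertex count $n'$ of order $n$ (chosen small enough, e.g. $n'=n/2$, so that $\sum_i S_i$ has mean safely below $n/t$). Alice and Bob use public randomness to sample a uniformly random partition of $[n]$ into sub-instance and padding vertex groups of sizes $tS_1, \dots, tS_Q$ and $n - t\sum_i S_i$, to sample a random perfect $t$-hypermatching on the padding vertices, and to sample Alice's bits on the padding vertices. If $\sum_i S_i > n/t$ they abort and Bob outputs a uniformly random bit; otherwise, Alice runs $\Pi$ on her extended length-$n$ input and sends its message, and Bob---who can locally compute the correct parities on the padding hyperedges from the public random bits---runs $\Pi$'s decoder on the extended hypermatching and extended label vector, outputting its answer. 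A Chernoff bound on $\sum_i S_i$, whose mean is $Q \cdot \mathbb{E}[|\mathcal{Z}_p^{n/t}|] \ll n/t$, gives an abort probability of $e^{-\Omega(n/t)}$, which is $o(1)$ for $n$ sufficiently large and $t \le n/10$.

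By the two-step sampling observation, conditional on no abort the constructed distribution equals $p$-Noisy PBHH on $n$ vertices \emph{exactly}, up to the small bias on the marginal of $S_r$ caused by conditioning on $\sum_j S_j \le n/t$, whose total variation cost is at most the abort probability. Hence $\Pi'$ succeeds on $\overline{\text{VBHH}}$ with probability at least $5/6 - o(1) \ge 4/5$, and since $\Pi'$ sends exactly the same bits as $\Pi$, we obtain $\mathsf{CC}(\Pi) \ge \mathsf{CC}(\Pi') \ge \mathsf{IC}_{\overline{\mathcal{D}}_{\bf YES}}(\Pi') = \Omega((n')^{1-1/t} p^{-1/t}) = \Omega(n^{1-1/t} p^{-1/t})$ by Theorem~\ref{thm:info:cc} and Theorem~\ref{thm:BHH-IC-lower-bound}. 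The main technical subtlety I anticipate is controlling the slight bias between the constructed noise distribution and $\mathcal{Z}_p^{n/t}$ and absorbing the abort probability into the success guarantee, both of which reduce to standard concentration / TV-distance estimates enabled by the two-step sampling observation.
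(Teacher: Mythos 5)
Your overall strategy is the same as the paper's: embed the $Q$ blocks of $\overline{\text{VBHH}}$ into a $p$-Noisy PBHH instance on $[n]$ via a public uniformly random embedding, pad the uncovered vertices with a publicly sampled hypermatching and correctly labeled bits, handle the overflow event, and then invoke Theorem~\ref{thm:BHH-IC-lower-bound} together with Theorem~\ref{thm:info:cc}. Your ``two-step sampling'' symmetry argument for why the flipped hyperedges form a uniformly random set of the given size is exactly the justification the paper uses.

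However, as written there is a genuine gap in the distributional claim. You run the reduction from $\overline{\text{VBHH}}$ with a \emph{smaller} parameter $n'$ (e.g.\ $n'=n/2$), but by Definition~\ref{def:VBHH} the block sizes of that instance are drawn from $\left|\mathcal Z_p^{n'/t}\right|$, so in the NO case the number of flipped labels in your constructed PBHH instance on $n$ vertices is distributed as $\left|\mathcal Z_p^{n'/t}\right|$, not as $\left|\mathcal Z_p^{n/t}\right|$ as required by Definition~\ref{def:PBHH}. These two distributions differ in mean by a factor of $2$ (roughly $pn/(2t)$ versus $pn/t$), and their total variation distance is a constant --- in fact close to $1$ when $pn/t$ is large --- so this is not a small bias that can be absorbed like the conditioning on the non-abort event. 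Since the assumed protocol $\Pi$ only comes with a distributional guarantee over the exact PBHH input distribution, its $5/6$ success probability does not transfer to your constructed inputs, and the claim ``conditional on no abort the constructed distribution equals $p$-Noisy PBHH on $n$ vertices exactly'' is false for $n'<n$. The fix is simply to take $n'=n$, as the paper does: the shrinking you introduce to keep $\sum_i S_i$ below $n/t$ is unnecessary, because $Q=1/(2p)$ already makes $\mathbb{E}\bigl[t\sum_i S_i\bigr]\approx n/2$, so the embedding exists with probability at least $99/100$ without any rescaling, and with $n'=n$ the flipped-count distribution matches $\left|\mathcal Z_p^{n/t}\right|$ exactly. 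With that correction your argument coincides with the paper's proof.
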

\begin{proof}
Suppose there exists a one-way communication protocol solving $p$-Noisy PBHH with probability at least $5/6$. We will construct a communication protocol for $\overline{\text{VBHH}}$ with the same parameters $n$, $t$ and $p$.

Let Alice and Bob receive $X, S$ and $M, W, S$ respectively, distributed according to $\overline{\mathcal D}$ as defined in Definition~\ref{def:VBHH}. One can easily verify that the total number of vertices in all subinstances of BHH, that is $\sum t\cdot S_i$, is at most $n$ with probability at least $99/100$ for large enough $n$. We call this event $\mathcal E$; if $\mathcal E$ is not true, Alice and Bob fail the protocol. From now on we condition on $\mathcal E$.

Let us label the vertices from all subinstances of BHH by
$$V=\{(i,j)|i\in[Q],\ j\in[t\cdot S_i]\}.$$
Let $\phi$ be an embedding of $V$ into $[n]$ chosen uniformly at random using public randomness. Since we assumed the event $\mathcal E$, such an embedding exists, and since we are using public randomness to generate it, both Alice and Bob have access to $\phi$.

Alice and Bob generate their inputs for $p$-Noisy PBHH, which they will call $(X^*, M^*, W^*)$. We call the elements of $[n]$ covered by $\phi$ (that is $\phi(V)$), $A^+$; we call the remaining elements, not covered by $\phi$, $A^-$. We define the inputs of Alice and Bob separately on the two parts of $[n]$.

Bob generates his input ($M^*$ and $W^*$) on $A^+$ as follows: For all $i\in[Q]$, and all hyperedges $\{j_1,j_2,\ldots,j_t\}\in M_i$, Bob generates a hyperedge $\{\phi(i,j_1),\ldots,\phi(i,j_t)\}\subseteq[n]$, which inherits the label of $\{j_1,\ldots,j_t\}\in M_i$. Alice generates her input as follows: For each $(i,j)\in V$, Alice sets $X^*({\phi(i,j)})$ to $X_i(j)$. 

On $A^-$, Alice and Bob use public randomness to generate both of their inputs, allowing them to be dependent. Alice sets $X^*$ uniformly at random on $A^-$, while Bob adds a uniformly random perfect hypermatching of $A^-$ to $M^*$. Finally, the labels of this perfect matching are set to reflect the true parities of $X^*$, that is $M^*_iX^*\oplus W^*_i=0$ for any $M^*_i$ matching hyperedge supported on $A^-$.

Since $\phi$ was chosen uniformly randomly, the distribution of $(X^*, M^*, W^*)$ conforms to what is prescribed in Definition~\ref{def:PBHH}. Indeed, $X^*$ and $M^*$ are uniformly random. Furthermore, if the original instance of $\overline{\text{VBHH}}$ is in the \YES case, the edge labels always reflect the true parity of $X^*$ on the hyperdege (and thus the constructed instance of $p$-Noisy PBHH is also in the \YES case) due to the parity assumptions on the edge labels $W$. When the original instance of $\overline{\text{VBHH}}$ is in the \NO case, a random subset of the edge labels are flipped. This subset has size $S_r\sim|\mathcal Z_p^{n/t}|$, exactly as in the \NO case of $p$-Noisy PBHH.

In conclusion, if Alice and Bob can solve their newly constructed instance of $p$-Noisy PBHH with probability $5/6$, they can solve the original instance of $\overline{\text{VBHH}}$ with probability at least
$$5/6\cdot\mathcal P(\mathcal E)\ge4/5.$$
Therefore, by Theorem~\ref{thm:BHH-IC-lower-bound}, and Theorem~\ref{thm:info:cc} the protocol requires space $\Omega\left(n^{1-1/t}p^{-1/t}\right)$.
\end{proof}

\if 0
Let $Q:=\frac{1}{p}$ and we suppose without loss of generality that $Q$ divides $n$. 
We define a number of distributions that each utilize $2Q$ random variables $T_1,\ldots,T_Q,S_1,\ldots,S_Q$ so that $T_i\sim\Bin\left(\frac{n}{t},p\right)$ and $S_i=t\cdot T_i$ for each $i\in[Q-1]$ and $T_Q=2k-(T_1+\ldots+T_{Q-1})$ and $S_Q=n-(S_1+\ldots+S_{Q-1})$. 
We have $\Ex{\sum_{i=1}^{Q-1} T_i}=\frac{n}{t}-\frac{np}{t}$ so that by standard Chernoff bounds for $t\le\frac{n}{16}$,
\[\PPr{\sum_{i=1}^{Q-1}T_i>\frac{n}{t}}\le\exp\left(\frac{(1/(1-p))^2\cdot n/t}{2+1/(1-p)}\right)<\frac{1}{e^4}<\frac{1}{50}.\]
Let $\mathcal{E}$ be the event that $\sum_{i=1}^{Q-1}S_i\le n$ so that 
\[\PPr{\mathcal{E}}:=\PPr{\sum_{i=1}^{Q-1}S_i\le n}=\PPr{\sum_{i=1}^{Q-1}T_i\le\frac{n}{t}}>0.98.\] 
If $\mathcal{E}$ does not hold, we declare that Alice and Bob have automatically failed in the communication problem. 

Conditioned on $\mathcal{E}$, then the variables $S_1+\ldots+S_Q=n$ and $S_i\ge 0$ for each $i\in[Q]$. 
Moreover, since each $S_i$ is divisible by $t$, each $S_i$ represents a collection of vertices $V_1,\ldots,V_Q$ that can be matched using $T_i$ hyperedges. 
We choose $V_1$ to be a random set of $S_1$ vertices from the $n$ total vertices and iteratively choose $V_i$ to be a random set of $S_i$ vertices from the remaining vertices, for each $i\in[Q]$, so that $V_1,\ldots,V_Q$ partitions the set of vertices. 

We define the distribution $\D_1$ as a mixture of the distributions $\D_{1,\YES}$ and $\D_{1,\NO}$, each with weight $\frac{1}{2}$. 
\begin{itemize}
\item
In the $\D_{1,\YES}$ distribution, Alice and Bob receive a uniformly random $\YES$ instance of BHH on the vertices $V_i$ for each $i\in[Q]$. 
\item
In the $\D_{1,\NO}$ distribution, a uniformly random index $j\in[Q]$ is selected. 
Alice and Bob receive a uniformly random $\NO$ instance of BHH on the vertices $V_j$ and uniformly random $\YES$ instances on the vertices $V_i$ for each $i\neq j$. 
\end{itemize}
We next define the $\D_2$ distribution as $Q$ independent instances of BHH, so that for each $i\in[Q]$: 
\begin{itemize}
\item
Alice and Bob receive a uniformly random $\YES$ instance on the vertices $V_i$ with probability $1-\frac{1}{Q}$.
\item
Alice and Bob receive a uniformly random $\NO$ instance on the vertices $V_i$ with probability $\frac{1}{Q}$.
\end{itemize}
We call $\D_{2,\YES}$ the distribution in which all instances received by Alice and Bob are $\YES$ instances and $\D_{2,\NO}$ the distribution in which at least one instance is a $\NO$ instance. 
\begin{lemma}
\lemlab{lem:d1:d2}
Any algorithm that distinguishes between $\D_{2,\YES}$ and $\D_{2,\NO}$ with probability at least $0.97$ also distinguishes between $\D_{1,\YES}$ and $\D_{1,\NO}$ with probability at least $0.94$. 
\end{lemma}
\begin{proof}
Observe that the probability there are at least two $\NO$ BHH instances among the vertices $V_i$ in an instance drawn from $\D_2$ is at most $\binom{Q}{2}p^2\le0.5$. 
Thus with probability at least $0.5$, an instance drawn from $\D_2$ has at most one $\NO$ BHH instance among the vertices of $V_i$. 
Any algorithm that succeeds on $\D_2$ with probability at least $0.98$ must also succeed on instances drawn from $\D_2$ with at most one $\NO$ BHH instance with probability at least $(0.97-0.5)/0.5=0.94$. 
Since $\D_1$ only includes at most one $\NO$ BHH instance among the vertices $V_i$, then the algorithm succeeds on $\D_1$ with probability at least $0.96$. 
\end{proof}
Finally, we define $\D_3$ to be a mixture of $\D_{3,\YES}$ and $\D_{3,\NO}$ each with weight $\frac{1}{2}$, where $\D_{3,\YES}$ is a uniformly random $\YES$ instance of $p$-noisy BHH and $\D_{3,\NO}$ is a uniformly random $\NO$ instance of $p$-noisy BHH. 
\begin{lemma}
\lemlab{lem:d2:d3}
Any algorithm that distinguishes between $\D_{3,\YES}$ and $\D_{3,\NO}$ with probability at least $0.999$ also must distinguish between $\D_{2,\YES}$ and $\D_{2,\NO}$ with probability at least $0.97$. 
\end{lemma}
\begin{proof}
Note that $\D_3\sim\D_{3,\YES}$ with probability $\frac{1}{2}$, which gives a $\YES$ instance of BHH of size $n$. 
Thus an algorithm $\mathcal{A}$ that successfully distinguishes between $\D_{3,\YES}$ and $\D_{3,\NO}$ with probability at least $0.999$ must successfully identify $\D_{3,\YES}$ (respectively, $\D_{3,\NO}$) instances with probability at least $0.99$. 
Thus $\mathcal{A}$  identifies a $\D_{2,\YES}$ instance with probability at least $0.99$. 
Note that $\D_2$ gives a $\YES $instance of BHH of size $n$ when all $Q$ instances are $\YES$, which occurs with probability $\left(1-\frac{1}{Q}\right)^{Q}>0.36$ for $Q=\frac{1}{p}$ with $p\le\frac{1}{32}$.  

In an instance of $\D_{2,\NO}$, the hyperedges $m_i$ that satisfy $m_ix\oplus w_i=1$ must originate from instances of $\NO$ from BHH. 
The expected number of such edges is $\frac{1}{p}\cdot p\cdot\left(\frac{np}{t}\right)=\frac{np}{t}$ and the variance is $\frac{1}{p}\cdot p^2\cdot\left(\frac{np(1-p)}{t}\right)=\frac{np^2(1-p)}{t}$. 
In an instance of $\D_{3,\NO}$, the expected number of such $m_i$ that satisfy $m_ix\oplus w_i=1$ is $\frac{np}{t}$ while the variance is $\frac{np(1-p)}{t}$. 
In other words, the symmetric interval around the expectation that contains $0.95$ fraction of the mass for the probability distribution of the number of flipped edges for $\D_{3,\NO}$ also contains at least $0.95$ fraction of the mass for the probability distribution of the number of flipped edges for $\D_{2,\NO}$. 
\samson{Need point-wise argument}
Finally, recall that $\mathcal{E}$ is the event that $\sum_{i=1}^{Q-1}S_i\le n$ and $\PPr{\mathcal{E}}\ge0.98$. 
By convention, an algorithm automatically fails to distinguish between $\D_{2,\YES}$ and $\D_{2,\NO}$ if the event $\mathcal{E}$ does not hold. 
\end{proof}

Consider an algorithm $\mathcal{A}$ that succeeds on $\D_1$ with probability at least $0.7$. 
For each $i\in[Q]$, we define the variable $L_i=1$ if $\mathcal{A}$ identifies $\D_{1,\NO}$ when the $i$-th BHH instance is the single $\NO$ instance with probability at least $0.6$ and $Q_i=0$ otherwise. 
For each $i\in[Q]$, we define the variable $K_i=1$ if $|V_i|\ge\frac{np}{100}$ and $K_i=0$ otherwise. 
Let $S_{good}=\{i\in[Q]\,:\,L_i=1\wedge K_i=1\}$ so that by an averaging argument, there exists a constant $C\in(0,1]$ such that $S_{good}\ge CQ$. 

\begin{lemma}[Direct sum for conditional information cost]
\lemlab{lem:direct:sum}
Any algorithm that distinguishes between $\D_{1,\YES}$ and $\D_{1,\NO}$ with probability at least $0.7$ must use $\Omega\left(\frac{n^{1-1/t}}{p^{1/t}}\right)$ communication. 
\end{lemma}
\begin{proof}
Let $\Pi$ be any protocol that distinguishes between $\D_{1,\YES}$ and $\D_{1,\NO}$ with probability at least $0.7$. 
Let $X_1,\ldots,X_Q$ and $Y_1,\ldots,Y_Q$ be the inputs to Alice and Bob for each of the $Q$ instances respectively. 
Note that given an index $i$, Alice and Bob can generate $\YES$ instances for the remaining coordinates by using $Y_1,\ldots,Y_Q$. 
Thus the distribution is collapsing, so that Alice and Bob can solve the $i$-th instance using $\Pi$. 
Since the instances are independently generated, then 
\begin{align*}
I(X_1,\ldots,X_Q; \Pi|Y_1,\ldots,Y_Q)&=\sum_{i=1}^Q I(X_i,\Pi|Y_i)\\
&\ge\sum_{i\in S_{good}} I(X_i,\Pi|Y_i).
\end{align*}
For any $i\in S_{good}$, we have that the algorithm succeeds with probability at least $0.6$ on the $i$-th instance. 
By \corref{cor:info:bhh}, each instance of BHH of size $\Omega(np)$ requires information cost $\Omega\left(n^{1-1/t}p^{1-1/t}\right)$. 
Since each instance of BHH in $S_{good}$ has size at least $\frac{np}{100}$ and $|S_{good}|\ge CQ$, then  
\[I(X_1,\ldots,X_Q; \Pi|Y_1,\ldots,Y_Q)\ge CQ\cdot\Omega\left((np)^{1-1/t}\right).\]
Since $Q=\frac{1}{p}$, then the total information cost of the protocol $\Pi$ must be at least $\Omega\left(\frac{(n)^{1-1/t}}{p^{1/t}}\right)$. 
Thus by \thmref{thm:info:cc}, the communication complexity of $\Pi$ must be $\Omega\left(\frac{n^{1-1/t}}{p^{1/t}}\right)$. 
\end{proof}

\begin{theorem}
\thmlab{thm:pnoisy:bhh:full}
Suppose $p\in\left[\frac{1}{n},\frac{1}{32}\right]$ is such that $\frac{1}{32p}$ is an integer. 
Then any protocol that solves $p$-Noisy Boolean Hidden Hypermatching with probability at least $0.99$ uses $\Omega\left(\frac{n^{1-1/t}}{p^{1/t}}\right)$ communication. 
\end{theorem}
\begin{proof}
Let $\Pi$ be a protocol that solves $p$-Noisy Boolean Hidden Hypermatching with probability at least $0.999$. 
Then by \lemref{lem:d2:d3} and \lemref{lem:d1:d2}, the probability that $\mathcal{A}$ distinguishes between $\D_{1,\YES}$ and $\D_{1,\NO}$ is at least $0.7$. 
Thus by \lemref{lem:direct:sum}, $\Pi$ requires $\Omega\left(\frac{n^{1-1/t}}{p^{1/t}}\right)$ communication. 
\end{proof}
\fi

\section{Applications}
In this section, we give a number of applications for the $p$-Noisy BHM and $p$-Noisy BHH problems. 
Through a reduction from $p$-Noisy BHH, we first show hardness of approximation for max cut in the streaming model on graphs whose connected components have bounded size, which is a significant obstacle for reductions to BHM. 
Unlike reductions from BHM, our reduction from $p$-Noisy BHM can still show nearly linear lower bounds in this setting. 

We then show hardness of approximation for maximum matching in the streaming model better than $\Omega(\sqrt{n})$ on graphs whose connected components have bounded size. 
Again our reduction displays flexibility beyond what is offered by either BHM or BHH for both parametrization of component size and tradeoffs between approximation guarantee and space complexity. 

Finally, we show hardness of approximation for maximum acyclic subgraph. 
\cite{GuruswamiVV17} show a $\Omega(\sqrt{n})$ lower bound for $\frac{8}{7}$-approximation, but the reduction does not readily translate into a stronger lower bound from BHH, due to the hyperedge structure of BHH. 
We use our $p$-Noisy BHM problem to give a fine-grained lower bound that provides tradeoffs between the approximation guarantee and the required space.

\subsection{MAX-CUT}
Recall the following definition of the MAX-CUT problem. 
\begin{problem}[MAX-CUT]
Given an unweighted graph $G=(V,E)$, the goal is to output the maximum of the number of edges of $G$ that cross a bipartition, over all bipartitions of $V$, i.e., $\max_{P\cup Q=V,P\cap Q=\emptyset}|E\cap(P\times Q)|$. 
\end{problem}
To show hardness of approximation of MAX-CUT in the streaming model, where the edges of the underlying graph $G$ arrive sequentially, we use a reduction similar to~\cite{KapralovKS15}, who gave a reduction from BHM that creates a connected component with $8$ edges for each edge $m_i$ in the input matching $M$ from BHM. 
The key property of the reduction of \cite{KapralovKS15} is that for $(Mx)_i\oplus w_i=0$, the connected component corresponding to $m_i$ is bipartite and its induced max cut has size $8$, but for $(Mx)_i\oplus w_i=1$, the connected component is not bipartite and its induced max cut has size at most $7$ (see \figref{fig:maxcut:reduction}). 
Therefore, the max cut for $Mx\oplus w=0^n$ has size $8n$ and the max cut for $Mx\oplus w=1^n$ has size at most $7n$, which gives the desired separation for a constant factor approximation algorithm. 

We instead reduce from Noisy Boolean Hidden Hypermatching. 
Suppose Alice is given a binary vector $x$ of length $n=2kt$ and Bob is a given a hypermatching $M$ of size $n/2t=k$ on $n$ vertices, where each edge contains $t$ vertices, so that $\alpha=\frac{1}{2}$. 
Bob also receives a vector $w$ of length $k$, generated according to the \YES or \NO case of the $p$-noisy BHH (see~\defref{def:bhh}). 
To distinguish between the two cases:
\begin{itemize}
\item 
Alice creates the four vertices $a_i$, $b_i$, $c_i$, and $d_i$ for each $i\in[n]$ corresponding to a coordinate of $x$. 
\item
If $x_i=0$, then Alice adds the edges $(a_i,b_i)$, $(c_i,d_i)$, and $(a_i,d_i)$, but if $x_i=1$, then Alice instead adds the edges $(a_i,b_i)$, $(c_i,d_i)$, and $(a_i,c_i)$. 
\item
For each hyperedge $m_i=(j_{i,1},\ldots,j_{i,t})$ of $M$, with $j_{i,s}\le j_{i,s+1}$, if $w_i=0$ Bob adds the edges $(d_{j_{i,s}},a_{j_{i,s+1}})$ for $s\in[t-1]$, and the edge $(d_{j_{i,t}},a_{j_{i,1}})$. 
Otherwise if $w_i=1$, then Bob instead adds the edges $(d_{j_{i,s}},a_{j_{i,s+1}})$ for $s\in[t-1]$ and the edge $(d_{j_{i,t}},b_{j_{i,1}})$. 
\end{itemize}
By design, the connected component of the graph corresponding to $m_i$ is bipartite if and only if $(Mx)_i\oplus w_i=0$. 
Hence, the max cut has size $4t$ if $(Mx)_i\oplus w_i=0$ and size at most $4t-1$ otherwise, i.e., when $(Mx)_i\oplus w_i=1$.

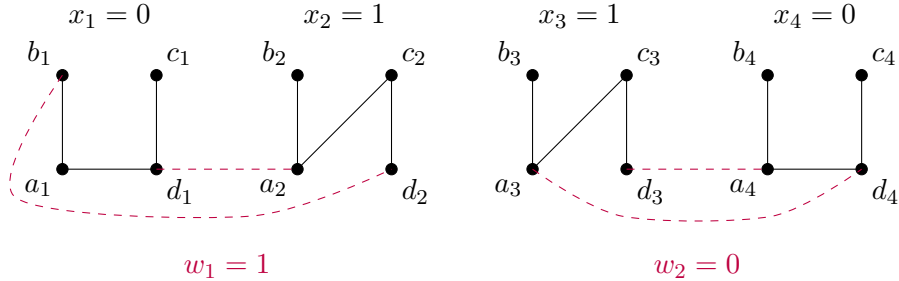
\begin{figure}[!htb]
\centering
\begin{tikzpicture}[scale=2.5]
\draw[fill] (0,0) circle (0.03);
\draw[fill] (0.5,0) circle (0.03);
\draw[fill] (0.5,0.5) circle (0.03);
\draw[fill] (0,0.5) circle (0.03);
\draw (0,0.5)--(0,0);
\draw (0,0)--(0.5,0);
\draw (0.5,0)--(0.5,0.5);
\node[below left] at (0,0){$a_1$};
\node[above left] at (0,0.5){$b_1$};
\node[below right] at (0.5,0){$d_1$};
\node[above right] at (0.5,0.5){$c_1$};
\node[above] at (0.25,0.7){$x_1=0$};

\draw[fill] (1.25+0,0) circle (0.03);
\draw[fill] (1.25+0.5,0) circle (0.03);
\draw[fill] (1.25+0.5,0.5) circle (0.03);
\draw[fill] (1.25+0,0.5) circle (0.03);
\draw (1.25+0,0.5)--(1.25+0,0);
\draw (1.25+0,0)--(1.25+0.5,0.5);
\draw (1.25+0.5,0)--(1.25+0.5,0.5);
\node[below left] at (1.25+0,0){$a_2$};
\node[above left] at (1.25+0,0.5){$b_2$};
\node[below right] at (1.25+0.5,0){$d_2$};
\node[above right] at (1.25+0.5,0.5){$c_2$};
\node[above] at (1.25+0.25,0.7){$x_2=1$};

\draw[fill] (2.5+0,0) circle (0.03);
\draw[fill] (2.5+0.5,0) circle (0.03);
\draw[fill] (2.5+0.5,0.5) circle (0.03);
\draw[fill] (2.5+0,0.5) circle (0.03);
\draw (2.5+0,0.5)--(2.5+0,0);
\draw (2.5+0,0)--(2.5+0.5,0.5);
\draw (2.5+0.5,0)--(2.5+0.5,0.5);
\node[below left] at (2.5+0,0){$a_3$};
\node[above left] at (2.5+0,0.5){$b_3$};
\node[below right] at (2.5+0.5,0){$d_3$};
\node[above right] at (2.5+0.5,0.5){$c_3$};
\node[above] at (2.5+0.25,0.7){$x_3=1$};

\draw[fill] (3.75+0,0) circle (0.03);
\draw[fill] (3.75+0.5,0) circle (0.03);
\draw[fill] (3.75+0.5,0.5) circle (0.03);
\draw[fill] (3.75+0,0.5) circle (0.03);
\draw (3.75+0,0.5)--(3.75+0,0);
\draw (3.75+0,0)--(3.75+0.5,0);
\draw (3.75+0.5,0)--(3.75+0.5,0.5);
\node[below left] at (3.75+0,0){$a_4$};
\node[above left] at (3.75+0,0.5){$b_4$};
\node[below right] at (3.75+0.5,0){$d_4$};
\node[above right] at (3.75+0.5,0.5){$c_4$};
\node[above] at (3.75+0.25,0.7){$x_4=0$};

\draw [purple, dashed] plot [smooth] coordinates {(0,0.5) (-0.25,-0.15) (1,-0.25) (1.25+0.5,0)};
\draw [purple, dashed] (0.5,0)--(1.25+0,0);
\node [below] at (0.875,-0.4){\color{purple}{$w_1=1$}};

\draw [purple, dashed] plot [smooth] coordinates {(2.5+0,0) (2.5+0.475,-0.25) (2.5+1.275,-0.25) (3.75+0.5,0)};
\draw [purple, dashed] (2.5+0.5,0)--(2.5+1.25+0,0);
\node [below] at (2.5+0.875,-0.4){\color{purple}{$w_2=0$}};
\end{tikzpicture}
\caption{Example of reduction from Noisy Boolean Hidden Matching (Noisy BHH with $t=2$) to MAX-CUT. 
Solid lines added by Alice, dashed purple lines added by Bob. 
Here we have $m_1=(1,2)$ and $m_2=(3,4)$. 
Note that $x_1\oplus x_2\oplus w_1=0$ and has a cycle of even length while $x_3\oplus x_4\oplus w_2=1$ has a cycle of odd length.}
\figlab{fig:maxcut:reduction}
\end{figure}

\thmmaxcutbhhparam*
\begin{proof}
Recall that by design, the connected component of the graph corresponding to $m_i$ is bipartite if and only if $(Mx)_i\oplus w_i=0$. 
Hence, the max cut has size $4t$ if and only if $(Mx)_i\oplus w_i=0$. 
Otherwise, the connected component has an odd cycle, so that the max cut has size at most $4t-1$. 
Connected components corresponding to unmatched elements of $[n]$ have max cut size $3$.


Since $p\ge\frac{128t}{n}$, the probability that at least $\frac{np}{4t}$ of the $\frac{n}{2t}$ edge labels are flipped is at least $\frac{99}{100}$. 
Thus, in the \NO case, the max cut has size at most 
\[(4t)\cdot\left(\frac{n}{2t}-\frac{np}{4t}\right)+(4t-1)\cdot\frac{np}{4t}+\frac{n}{2}\cdot3=\frac{7n}{2}-\frac{np}{4t},\]
with probability at least $\frac{99}{100}$. 
However, in the \YES case the max cut has size $\frac{n}{2t}(4t)+\frac{n}{2}\cdot3=\frac{7n}{2}$. 
Thus, since $p\le 1$, 
\[\frac{7n/2}{7n/2-\frac{np}{4t}}\ge 1+\frac{p}{14t}.\]
Therefore, any $\left(1+\frac{p}{14t}\right)$-approximation to the MAX-CUT requires $\Omega\left(n^{1-1/t}p^{-1/t}\right)$ space. 
Finally, recall that each of the $\O{\frac{n}{t}}$ connected components has at most $4t$ edges.
\end{proof}

\subsection{MAX-MATCHING}
Recall the following definition of the MAX-MATCHING problem. 
\begin{problem}[Maximum Matching]
Given an unweighted graph $G=(V,E)$, the goal is to output the maximum size of a matching, that is a set of disjoint edges.
\end{problem}
To show hardness of approximation of MAX-MATCHING in the streaming model, we use a reduction similar to~~\cite{BuryGMMSVZ19}. We use a reduction from noisy BHH. In the reduction, each matching hyperedge $m_i$ corresponds to a gadget consisting of two connected components of size $O(t)$. The key observation is that if $m_ix\oplus w_i=0$, then the two components are of even size, and can be perfectly matched, but if $m_ix\oplus w_i=1$, then the two components are of odd size, and cannot be perfectly matched.

We reduce from Noisy Boolean Hidden Hypermatching with $\alpha = 1/2$.
Suppose Alice is given a binary vector $x$ of length $n$ and Bob is a given a hypermatching $M$ of size $n/2t$ on the vertex-set $[n]$, where each edge contains $t$ vertices.
Bob also receives a vector $w$ of length $n/2t$, generated according to the \YES or \NO case of the $p$-noisy BHH (see~\defref{def:bhh}). 
The protocol to distinguish between the two cases slightly differs based on the parity of $t$, but the core idea is the same. We begin by considering odd $t$.
\begin{itemize}
\item 
Alice creates the four vertices $a_i$, $b_i$, $c_i$, and $d_i$ for each $i\in[n]$ corresponding to a coordinate of $x$. 
\item
If $x_i=0$, then Alice adds the single edges $(a_i,b_i)$, but if $x_i=1$, then Alice instead adds the edge $(c_i,d_i)$.
\item
For each hyperedge $m_i=(j_{i,1},\ldots,j_{i,t})$ of $M$, Bob adds a single vertex $e_i$. Then, if $w_i=0$ Bob adds a clique between $(b_{j_i,1}, b_{j_{i,2}}, \ldots, b_{j_{i,t}})$ and another clique between $(d_{j_i,1}, d_{j_{i,2}}, \ldots, d_{j_{i,t}}, e_i)$.
If $w_i=1$, Bob adds the same two cliques, but moving $e_i$ from the second clique to the first. Formally Bob adds the cliques $(b_{j_i,1}, b_{j_{i,2}}, \ldots, b_{j_{i,t}}, e_i)$ and $(d_{j_i,1}, d_{j_{i,2}}, \ldots, d_{j_{i,t}})$.
\end{itemize}

If $t$ is even, the protocol is very similar, but we state it here for completeness:

\begin{itemize}
\item 
Alice does the same thing as in the previous case: she creates the four vertices $a_i$, $b_i$, $c_i$, and $d_i$ for each $i\in[n]$ corresponding to a coordinate of $x$. 
\item
If $x_i=0$, then Alice adds the single edges $(a_i,b_i)$, but if $x_i=1$, then Alice instead adds the edge $(c_i,d_i)$.
\item
For each hyperedge $m_i=(j_{i,1},\ldots,j_{i,t})$ of $M$, Bob adds two new vertices $e_i$ and $f_i$. Then, if $w_i=0$ Bob adds a clique between $(b_{j_i,1}, b_{j_{i,2}}, \ldots, b_{j_{i,t}})$ and Bob adds another clique between $(d_{j_i,1}, d_{j_{i,2}}, \ldots, d_{j_{i,t}}, e_i, f_i)$.
If $w_i=1$, Bob adds the same two cliques, but moving $e_i$ from the second clique to the first. Formally Bob adds the cliques $(b_{j_i,1}, b_{j_{i,2}}, \ldots, b_{j_{i,t}}, e_i)$ and $(d_{j_i,1}, d_{j_{i,2}}, \ldots, d_{j_{i,t}}, f_i)$.
\end{itemize}

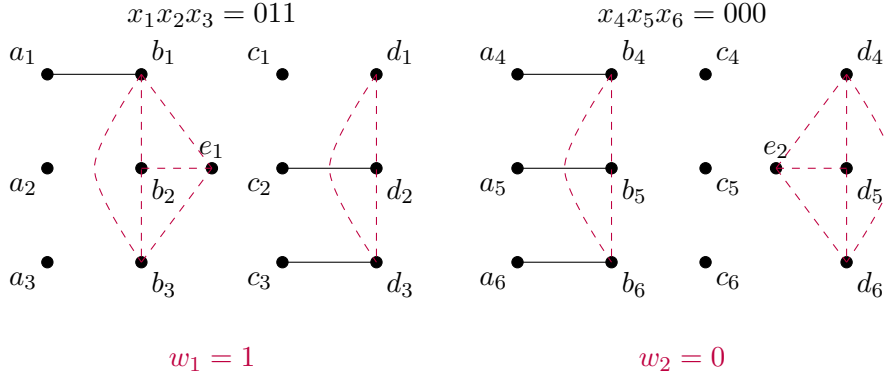
\begin{figure}[!htb]
\centering
\begin{tikzpicture}[scale=2.5]
\draw[fill] (0,0) circle (0.03);
\draw[fill] (0.5,0) circle (0.03);
\draw[fill] (0.5,0.5) circle (0.03);
\draw[fill] (0,0.5) circle (0.03);
\draw[fill] (0.5,-0.5) circle (0.03);
\draw[fill] (0,-0.5) circle (0.03);
\draw (0.5,0.5)--(0,0.5);
\node[below left] at (0,0){$a_2$};
\node[above left] at (0,0.5){$a_1$};
\node[below right] at (0.5,0){$b_2$};
\node[above right] at (0.5,0.5){$b_1$};
\node[below left] at (0,-0.5){$a_3$};
\node[below right] at (0.5,-0.5){$b_3$};

\node[above] at (0.875,0.7){$x_1x_2x_3=011$};

\draw[fill] (0.875, 0) circle(0.03);
\node[above] at (0.875, 0) {$e_1$};

\draw[fill] (1.25+0,0) circle (0.03);
\draw[fill] (1.25+0.5,0) circle (0.03);
\draw[fill] (1.25+0.5,0.5) circle (0.03);
\draw[fill] (1.25+0,0.5) circle (0.03);
\draw[fill] (1.25+0.5,-0.5) circle (0.03);
\draw[fill] (1.25+0,-0.5) circle (0.03);
\draw (1.25+0,0)--(1.25+0.5,0);
\draw (1.25+0,-0.5)--(1.25+0.5,-0.5);
\node[below left] at (1.25+0,0){$c_2$};
\node[above left] at (1.25+0,0.5){$c_1$};
\node[below right] at (1.25+0.5,0){$d_2$};
\node[above right] at (1.25+0.5,0.5){$d_1$};
\node[below left] at (1.25+0,-0.5){$c_3$};
\node[below right] at (1.25+0.5,-0.5){$d_3$};

\draw[fill] (2.5+0,0) circle (0.03);
\draw[fill] (2.5+0.5,0) circle (0.03);
\draw[fill] (2.5+0.5,0.5) circle (0.03);
\draw[fill] (2.5+0,0.5) circle (0.03);
\draw[fill] (2.5+0.5,-0.5) circle (0.03);
\draw[fill] (2.5+0,-0.5) circle (0.03);
\draw (2.5+0,0.5)--(2.5+0.5,0.5);
\draw (2.5+0,0)--(2.5+0.5,0);
\draw (2.5+0,-0.5)--(2.5+0.5,-0.5);
\node[below left] at (2.5+0,0){$a_5$};
\node[above left] at (2.5+0,0.5){$a_4$};
\node[below right] at (2.5+0.5,0){$b_5$};
\node[above right] at (2.5+0.5,0.5){$b_4$};
\node[below left] at (2.5+0,-0.5){$a_6$};
\node[below right] at (2.5+0.5,-0.5){$b_6$};

\draw[fill] (2.5+0.875+0.5,0) circle (0.03);
\node[above] at (2.5+0.875+0.5,0) {$e_2$};

\node[above] at (2.5+0.875,0.7){$x_4x_5x_6=000$};

\draw[fill] (3.5+0,0) circle (0.03);
\draw[fill] (3.75+0.5,0) circle (0.03);
\draw[fill] (3.75+0.5,0.5) circle (0.03);
\draw[fill] (3.5+0,0.5) circle (0.03);
\draw[fill] (3.75+0.5,-0.5) circle (0.03);
\draw[fill] (3.5+0,-0.5) circle (0.03);
\node[below right] at (3.5+0,0){$c_5$};
\node[above right] at (3.5+0,0.5){$c_4$};
\node[below right] at (3.75+0.5,0){$d_5$};
\node[above right] at (3.75+0.5,0.5){$d_4$};
\node[below right] at (3.5+0,-0.5){$c_6$};
\node[below right] at (3.75+0.5,-0.5){$d_6$};

\draw [purple, dashed] plot [smooth] coordinates {(0.5,0.5) (0.25,0) (0.5, -0.5)};
\draw [purple, dashed] (0.5,0.5)--(0.5,-0.5);
\draw [purple, dashed] (0.5,0.5)--(0.875, 0);
\draw [purple, dashed] (0.5,0)--(0.875, 0);
\draw [purple, dashed] (0.5,-0.5)--(0.875, 0);

\draw [purple, dashed] plot [smooth] coordinates {(1.75,0.5) (1.25 + 0.25,0) (1.75, -0.5)};
\draw [purple, dashed] (1.75,0.5)--(1.75,-0.5);

\node [below] at (0.875,-0.9){\color{purple}{$w_1=1$}};

\draw [purple, dashed] plot [smooth] coordinates {(2.5+0.5,0.5) (2.5 + 0.25, 0) (2.5+0.5,-0.5)};
\draw [purple, dashed] (2.5+0.5,0.5)--(2.5+0.5,-0.5);
\draw [purple, dashed] plot [smooth] coordinates {(3.75+0.5,0.5) (3.75+0.5 + 0.25, 0) (3.75+0.5,-0.5)};
\draw [purple, dashed] (3.75+0.5,0.5)--(3.75+0.5,-0.5);
\draw [purple, dashed] (2.5+0.875+0.5,0)--(3.75+0.5,0.5);
\draw [purple, dashed] (2.5+0.875+0.5,0)--(3.75+0.5,0);
\draw [purple, dashed] (2.5+0.875+0.5,0)--(3.75+0.5,-0.5);

\node [below] at (2.5+0.875,-0.9){\color{purple}{$w_2=0$}};
\end{tikzpicture}
\caption{An illustration of the graph construction for the maximum matching reduction for $t=3$. Solid lines added by Alice, dashed lines added by Bob. On the left side, we see the components corresponding to $m_1=\{1,2,3\}$. Since $m_1x\oplus w_1=1$ this subgraph contributes $4$ to the maximum matching. On the right side, we see the components corresponding to $m_2=\{4,5,6\}$. Since $m_2x\oplus w_2=0$ this subgraph contributes $5$ to the maximum matching.}
\figlab{fig:maxmatching:reduction}
\end{figure}

\thmmatchbhhparam*
\begin{proof}
In this proof we focus on the case where $t$ is odd. The case of even $t$ is similar.

Consider an edge of the hypermatching $M$: $m_i=(j_{i,1},\ldots,j_{i,t})$. Consider further all the vertices related to this hyperedge, specifically $a_j$, $b_j$, $c_j$, and $d_j$ for all $j\in\{j_{i,1},\ldots,j_{i,t}\}$ as well as $e_i$. Suppose among the values $x_{i,1},\ldots,x_{i,t}$ there are exactly $s$ ones (and correspondingly exactly $t-s$ zeros). Let vertices of the form $a_{j_{i,k}}$ corresponding $x_{j_{i,k}}=0$ make up $A_0^i$, and ones corresponding to $x_{j_{i,k}}=1$ make up $A_1^i$. Define $B_0^i$ and $B_1^i$, $C_0^i$ and $C_1^i$, as well as $D_0^i$ and $D_1^i$ analogously. Note that $|A_0^i|=|B_0^i|=|C_0^i|=|D_0^i|=t-s$ and $|A_1^i|=|B_1^i|=|C_1^i|=|D_1^i|=s$.

Consider the case when $w_i=0$. Then, by the above construction, Bob puts cliques between $B_0^\cup B_1^i$ and $D_0^i\cup D_1^i\cup\{e_i\}$. Furthermore, recall that Alice puts an edge between $a_j$ and $b_j$ when $x_j=0$ and an edge between $c_j$ and $d_j$ when $x_1$. Therefore, the connected components of this subgraph are $A_0^i\cup B_0^i\cup B_1^i$ and $C_1^i\cup D_0^i\cup D_1^i\cup\{e_i\}$, while vertices of $A_1^\cup C_0^i$ remain isolated.

The size of the first component, $A_0^i\cup B_0^i\cup B_1^i$, is $2t-s$. In the case when $s$ is even this has even size, and it is easy to verify that it can be perfectly matched, so it contributes $t-s/2$ to the total maximum matching size. On the other hand, if $s$ is odd, it contributes at most $t-(s+1)/2$. Similarly, the second component, $C_1^i\cup D_0^i\cup D_1^i\cup\{e_i\}$, has size $t+s+1$. When $s$ is even it can be perfectly matched, contributing $(t+s+1)/2$ to the maximum matching size. However, when $s$ is odd, it contributes only $(t+s)/2$. Overall, when $s$ is even (corresponding to $m_ix\oplus w_i = 0$) the maximum matching size of the subgraph corresponding to $m_i$ is $(3t+1)/2$; however, when $s$ is odd (corresponding to $m_ix\oplus w_i = 1$) it is only at most $(3t-1)/2$.

A nearly identical analysis shows that when $w_i=1$, the maximum matching size of the subgraph corresponding to $m_ix\oplus w_i = 0$ is $(3t+1)/2$ and the maximum matching size of the subgraph corresponding to $m_ix\oplus w_i = 1$ is $(3t-1)/2$. 
Finally, since $\alpha = 1/2 < 1$ there will also be vertices of $[n]$ that are not matched by any hyperedge in $M$. If some $j\in[n]$ does not participate in any matching hyperedge, the corresponding four vertices, $a_j$, $b_j$, $c_j$, and $d_j$ will have exactly one edge between them regardless of the value of $x_j$.

Let us now consider the overall maximum matching size of the graph built by Alice and Bob. In the \YES case, it is always true that $m_ix\oplus w_i=0$. Therefore, for $t$ odd, all the vertices corresponding to any hyperedge in $M$ will contribute exactly $(3t+1)/2$, while all remaining vertices will contribute $1$, for a total of
$$\frac{n}{2t}\cdot\frac{3t+1}{2}+\frac{n}{2}\cdot1=\frac{5n}{4}+\frac{n}{4t}.$$

In the \NO case, since $p\ge\frac{128t}{n}$, the probability that at least $\frac{np}{4t}$ of the $\frac{n}{2t}$ hyperedge labels are flipped is at least $\frac{99}{100}$. Each flipped label decreases the contribution of the corresponding component by $1$; thus, the maximum matching has size at most 
$$\frac{5n}{4}+\frac{n}{4t}-\frac{np}{4t},$$
with probability at least $\frac{99}{100}$. 

So the ratio between the two cases is at least
$$\frac{5n/4+n/(4t)}{5n/4+n/(4t)-np/(4t)}\ge\frac{3n/2}{3n/2-np/(4t)}\ge1+\frac{np}{6t}.$$

The analysis for even $t$ is very similar. However, since we add the extra vertex $f_i$ , all the vertices corresponding to any hyperedge $m_i$ satisfying $m_ix\oplus w_i=0$ will contribute exactly $(3t+2)/2$, while hyperedges satisfying $m_ix\oplus w_i=1$ will contribute $3t/2$, and all remaining vertices will contribute $1$. In the \YES case this is a total of
$$\frac{n}{2t}\cdot\frac{3t+2}{2}+\frac{n}{2}\cdot1=\frac{5n}{4}+\frac{n}{2t}.$$
In the \NO case, each flipped label again decreases the contribution of the corresponding component by $1$ so that the maximum matching has size at most 
$$\frac{5n}{4}+\frac{n}{2t}-\frac{np}{4t},$$
with probability at least $\frac{99}{100}$. 
Thus, the ratio between the two cases is at least
$$\frac{5n/4+n/(2t)}{5n/4+n/(2t)-np/(4t)}\ge\frac{3n/2}{3n/2-np/(4t)}\ge1+\frac{np}{6t}.$$

Therefore, any $\left(1+\frac{p}{6t}\right)$-approximation to MAX-MATCHING requires $\Omega\left(n^{1-1/t}p^{-1/t}\right)$ space. 
Finally, recall that each of the connected components has at most $2t+1$ vertices.
\end{proof}

\if 0 

\subsection{Maximum Matching}
Recall the following definition of the maximum matching problem. 
\begin{problem}[Maximum matching]
Given an unweighted graph $G=(V,E)$, the goal is to output the size of the maximum matching in $E$, i.e., maximize $|M|$, where $M$ is a pairwise non-adjacent subset of $E$ so that no two edges share a common vertex. 
\end{problem}
We use a similar reduction to~\cite{BuryGMMSVZ19}, who gave a reduction from BHM in which Alice and Bob each creates $n$ edges. 
The key property of the reduction is that for each $i\in[n]$, the case $(Mx)_i=0$ corresponds to either two edges in a path $P_i$ of length three being matched or a single edge in a path $P_i$ of length one being matched and the case $(Mx)_i=1$ corresponds to a single edge in a path $P_i$ of length two being matched (see \figref{fig:matching:reduction}). 
Informally, the matching is more efficient on the edges if $(Mx)_i=0$ so that $Mx=0^n$ corresponds to a maximum matching of size $\frac{3n}{2}$, but $Mx=1^n$ corresponds to a maximum matching of size $n$, which gives the desired separation for constant factor approximation algorithms. 
We use the same reduction from $p$-Noisy BHM, so that $Mx\oplus w=u=0^n$ again corresponds to a maximum matching of size $\frac{3n}{2}$, but $Mx\oplus w=v$ corresponds to a maximum matching of size at most $\frac{6n-np}{4}$ with constant probability. 

We first require $w=0^n$ and $x\in\{0,1\}^{2n}$ has exactly $n$ bits set to $0$ and $n$ bits set to $1$.
\begin{lemma}
\lemlab{lem:bits:balanced}
Even if $w=0^n$ and $x\in\{0,1\}^{2n}$ has exactly $n$ bits set to $0$ and $n$ bits set to $1$, the communication complexity of Noisy Boolean Hidden Matching is $\Omega\left(\sqrt{\frac{n}{p}}\right)$.
\end{lemma}
\begin{proof}
Given an instance $M,x,w,u,v$ of $p$-Noisy Boolean Hidden Matching so that $x\in\{0,1\}^{2n}$ and either $Mx\oplus w=u:=0^n$ or $Mx\oplus w=v$, Alice creates two vertices $a_i$ and $\overline{a_i}$ for each $i\in[2n]$, and sets $a_i=x_i$ and $\overline{a_i}=1-x_i$. 

Now suppose $m_i\in M$ matches $y_i$ and $z_i$. 
If $w_i=0$, Bob creates edges $(a_{y_i}, a_{z_i})$ and $(\overline{a_{y_i}}, \overline{a_{z_i}})$. 
Otherwise if $w_i=1$, Bob creates edges $(a_{y_i}, \overline{a_{z_i}})$ and $(\overline{a_{y_i}}, a_{z_i})$. 
Thus, $x_{y_i}\oplus x_{z_i}\oplus w_i=a_{y_i}\oplus a_{z_i}$. 
Bob sets $w'=0$ and $M'=M\circ M$ to be the natural concatenation of $M$ with itself, for the vertices $a_i$ and then $\overline{a_i}$. 
Similarly, let $u'=u\circ u=0^{2n}$ and $v'=v\circ v$. 

Since $x_{y_i}\oplus x_{z_i}\oplus w_i=a_{y_i}\oplus a_{z_i}$, then $M'a=u'$ if $Mx\oplus w=u$ and $M'a=v'$ if $Mx\oplus w=v$. 
Hence, $\Omega\left(\sqrt{\frac{n}{p}}\right)$ bits of communication are required to distinguish between these cases. 
Moreover, $a$ has $2n$ entries equal to one and $2n$ entries equal to zero, as desired. 
\end{proof}

\begin{figure*}[!htb]
\centering
\begin{subfigure}{0.45\textwidth}
\begin{tikzpicture}[scale=3]
\draw[fill] (0,0) circle (0.03);
\draw[fill] (0.5,0) circle (0.03);
\draw[fill] (0.5,0.5) circle (0.03);
\draw[fill] (0,0.5) circle (0.03);
\draw (0,0.5)--(0,0);
\draw (0.5,0.5)--(0.5,0);
\node[left] at (0,0){$a_{1,2}$};
\node[left] at (0,0.5){$a_{1,1}$};
\node[above] at (0-0.1,0.6){\scriptsize{$x_1=1$}};
\node[left] at (0.5,0){$a_{2,2}$};
\node[left] at (0.5,0.5){$a_{2,1}$};
\node[above] at (0.5-0.1,0.6){\scriptsize{$x_2=1$}};
\draw[dashed,purple] (0,0)to[out=300,in=240](0.5,0);

\draw[fill] (1,0) circle (0.03);
\draw[fill] (1.5,0) circle (0.03);
\draw[fill] (1.5,0.5) circle (0.03);
\draw[fill] (1,0.5) circle (0.03);
\draw[dashed,purple] (1,0)to[out=300,in=240](1.5,0);

\node[left] at (1,0){$a_{3,2}$};
\node[left] at (1,0.5){$a_{3,1}$};
\node[above] at (1-0.1,0.6){\scriptsize{$x_3=0$}};
\node[left] at (1.5,0){$a_{4,2}$};
\node[left] at (1.5,0.5){$a_{4,1}$};
\node[above] at (1.5-0.1,0.6){\scriptsize{$x_4=0$}};

\node[below] at (0.2,-0.2){\scriptsize{$m_1=(1,2)$}};
\node[below] at (1.2,-0.2){\scriptsize{$m_2=(3,4)$}};
\end{tikzpicture}
\caption{$Mx=0$ case.}
\figlab{fig:mm:mxzero}
\end{subfigure}
\begin{subfigure}{0.45\textwidth}
\begin{tikzpicture}[scale=3]
\draw[fill] (0,0) circle (0.03);
\draw[fill] (0.5,0) circle (0.03);
\draw[fill] (0.5,0.5) circle (0.03);
\draw[fill] (0,0.5) circle (0.03);
\draw (0,0.5)--(0,0);
\node[left] at (0,0){$a_{1,2}$};
\node[left] at (0,0.5){$a_{1,1}$};
\node[above] at (0-0.1,0.6){\scriptsize{$x_1=1$}};
\node[left] at (0.5,0){$a_{2,2}$};
\node[left] at (0.5,0.5){$a_{2,1}$};
\node[above] at (0.5-0.1,0.6){\scriptsize{$x_2=0$}};

\draw[fill] (1,0) circle (0.03);
\draw[fill] (1.5,0) circle (0.03);
\draw[fill] (1.5,0.5) circle (0.03);
\draw[fill] (1,0.5) circle (0.03);

\node[left] at (1,0){$a_{3,2}$};
\node[left] at (1,0.5){$a_{3,1}$};
\node[above] at (1-0.1,0.6){\scriptsize{$x_3=0$}};
\node[left] at (1.5,0){$a_{4,2}$};
\node[left] at (1.5,0.5){$a_{4,1}$};
\node[above] at (1.5-0.1,0.6){\scriptsize{$x_4=1$}};
\draw (1.5,0.5)--(1.5,0);

\draw[dashed,purple] (0,0)to[out=320,in=220](1,0);
\draw[dashed,purple] (0.5,0)to[out=320,in=220](1.5,0);
\node[below] at (0.2,-0.2){\scriptsize{$m_1=(1,3)$}};
\node[below] at (1.3,-0.2){\scriptsize{$m_2=(2,4)$}};
\end{tikzpicture}
\caption{$Mx=1$ case.}
\figlab{fig:mm:mxone}
\end{subfigure}
\caption{Example of reduction from Boolean Hidden Matching to Maximum Matching. 
Solid lines added by Alice, dashed purple lines added by Bob. 
Note that $Mx=0$ has maximum matching size $3$ and $Mx=1$ has maximum matching size $2$.
}
\figlab{fig:matching:reduction}
\end{figure*}

\begin{theorem}
For $p\in\left[\frac{64}{n},\frac{1}{2}\right]$, any one-pass streaming algorithm that outputs a $\left(1+\frac{p}{6}\right)$-approximation to the maximum matching with probability at least $\frac{2}{3}$ requires $\Omega\left(\sqrt{\frac{n}{p}}\right)$ space, even for graphs with components of size bounded by $3$. 
\end{theorem}
\begin{proof}
Given an instance $M,x,u,v$ of Noisy Boolean Hidden Matching so that $x\in\{0,1\}^{2n}$ has weight $n$ and either $Mx=u:=0^n$ or $Mx=v$, Alice creates two vertices $a_{i,1}$ and $a_{i,2}$ for each $i\in[2n]$, and adds the edge $(a_{i,1},a_{i,2})$ if and only if $x_i=1$. 
Thus, Alice has inserted $n$ edges in total, since $x$ has exactly $n$ entries that are one. 
 
Bob then adds edges $(a_{y_i,2}, a_{z_i,2})$ for any edge $m_i=(y_i,z_i)$ in the matching $M$. 
Observe that if $x_{y_i}\oplus x_{z_i}=1$, then some edge inserted by Alice must be adjacent to exactly one of the vertices $a_{y_i,2}$ or $a_{z_i,2}$. 
Hence, there are two edges in the connected component corresponding to $m_i$ and the maximum matching has one edge. 
On the other hand, if $x_{y_i}\oplus x_{z_i}=0$, then either there is one edge in the connected component corresponding to $m_i$ and the maximum matching has one edge or there are three edges in the component and the maximum matching has size two. 

Thus $Mx=u=0^n$ implies that all $n$ edges inserted by Alice are contained in the maximum matching, removing $\frac{n}{2}$ adjacent edges inserted by Bob. 
The remaining $\frac{n}{2}$ edges inserted by Bob are also in the maximum matching, which has size $n+\frac{n}{2}=\frac{3n}{2}$.  

If $v$ has weight $k$, then $Mx=v$ implies that $k$ edges of Bob are each adjacent to a separate single edge of Alice, so that the maximum matching has size $\frac{n}{2}+\frac{n-k}{2}=\frac{3n-k}{2}$. 

For $p>\frac{64}{n}$, we have from Chernoff bounds that $k>\frac{np}{2}$ with probability at least $\frac{99}{100}$, in which case we have
\[\frac{3n}{3n-\frac{np}{2}}>\frac{6}{6-p}>1+\frac{p}{6}.\]
Thus, any algorithm that outputs a $\left(1+\frac{p}{6}\right)$-approximation to the maximum matching requires $\Omega\left(\sqrt{\frac{n}{p}}\right)$ space. 
Finally, note that each of the $\O{n}$ connected components has at most three edges. 
\end{proof}

We give a similar argument for $p$-Noisy BHH, showing hardness of approximation of maximum even on graphs on parametrized bounded connected component size. 
\cite{BuryGMMSVZ19} show that the communication complexity of BHH holds even if the input vector $x$ to Alice has the same number of zeros and ones. 
By the same argument as \lemref{lem:bits:balanced}, the communication complexity of $p$-Noisy BHH holds as well. 
\thmmatchbhhparam*
\begin{proof}
We reduce from Noisy Boolean Hidden Hypermatching. 
Suppose Alice is given a binary vector $x$ of length $2n=2kt$ and weight $n$ and Bob is a given a perfect hypermatching $M$ on $2n$ vertices, where each edge contains $t$ vertices, and a vector $w$ of length $2k$. 
To distinguish whether $Mx\oplus w=u$ or $Mx\oplus w=v$ for $u=0^{2k}$ and some $v$ generated from the Bernoulli process with probability $p$, Alice creates two vertices $a_{i,1}$ and $a_{i,2}$ for each $i\in[2n]$, and adds the edge $(a_{i,1},a_{i,2})$ if and only if $x_i=1$. 
Since $x$ has weight $n$, then Alice has inserted exactly $n$ edges in total. 

Bob then adds edges to form a clique between vertices $a_{j_{i,1},2},a_{j_{i,2},2},\ldots,a_{j_{i,t},2}$ for any edge $m_i=(j_{i,1},j_{i,2},\ldots,j_{i,t})$ in the matching $M$. 
Note that it is always possible to match either all or all but one of the vertices of the clique whose corresponding label in $x$ is zero. 
Let $M_j$ be the set of hyperedges with exactly $j$ vertices labeled one and define $\ell_j=|M_j|$, the number of such hyperedges. 

Suppose $t$ is even and $Mx\oplus w=0^{2k}$; for odd $t$, the analysis is symmetric, instead considering $u=1^{2k}$. 
Then the parity of each hyperedge is $0$, so that we have $n=\sum{j=0}^{t/2}2j\cdot\ell_{2j}$ through the total number of vertices labeled one and $\frac{2n}{t}=\sum{j=0}^{t/2}\ell_{2j}$ through the total number of hyperedges. 
For every hyperedge of $M_{2j}$, the size of the maximum matching in the corresponding connected component is exactly $2j+\frac{t-2j}{2}$. 
Thus $Mx\oplus w=0^{2k}$ implies that the maximum matching has size
\[\sum_{j=0}^{t/2}\left(\frac{t}{2}+j\right)\ell_{2j}=\frac{2n}{t}\cdot\frac{t}{2}+\frac{1}{2}\cdot n=\frac{3n}{2}.\]
Equivalently, the maximum matching is realized when all edges of the $n$ edges inserted by Alice are matched and there is a perfect matching of the $n$ edges of Bob's $2n$ edges that are not incident to Alice's edges. 

If $v$ has weight $k$, then $Mx=v$ implies that at least $k$ edges of Bob are adjacent to a separate single edge of Alice in a connected component with an odd number of edges from Alice. 
Thus, the maximum matching has size $n+\frac{n-k}{2}=\frac{3n-k}{2}$. 
For $p>\frac{64t}{n}$, we have from Chernoff bounds that $k>\frac{np}{2t}$ with probability at least $\frac{99}{100}$, in which case we have
\[\frac{3n}{3n-\frac{np}{t}}>\frac{3t}{3t-p}>1+\frac{p}{3t}.\]
Thus, any algorithm that outputs a $\left(1+\frac{p}{3t}\right)$-approximation to the maximum matching requires $\Omega\left(n^{1-1/t}p^{-1/t}\right)$ space. 
Finally, note that each of the $\O{\frac{n}{t}}$ connected components has at most $\O{t}$ edges. 
\end{proof}

\fi 

\subsection{Maximum Acyclic Subgraph}
In this section, we study the hardness of approximation for the maximum acyclic subgraph on insertion-only streams. 
\begin{problem}[Maximum acyclic subgraph]
Given a directed graph $G=(V,E)$, the goal is to output the size of the largest acyclic subgraph of $G$, where the size of a graph is defined to be the number of edges in it. 
\end{problem}
To show hardness of approximation of maximum acyclic subgraph in the streaming model, where the directed edges of $G$ arrive sequentially, we consider the reduction of~\cite{GuruswamiVV17}, who created a subgraph with $8$ edges for each edge $m_i$ in the input matching $M$ from BHM. 
The key property of the reduction is that $(Mx)_i\oplus w_i=1$ corresponds to a subgraph with no cycles but $(Mx)_i\oplus w_i=0$ corresponds to a subgraph with a cycle (see \figref{fig:mas:reduction}). 
Hence, the former case allows $8$ edges in its maximum acyclic subgraph while the latter case only allows for $7$ edges. 
We use the same reduction from $p$-Noisy BHM with $\alpha=1/2$, so that there are $n/4$ matching edges. 

In particular, suppose Alice is given a string $x\in\{0,1\}^{2n}$ and Bob is given a matching of $[2n]$ of size $\alpha n$ with $\alpha=\frac{1}{2}$. 
To distinguish between the two cases:
\begin{itemize}
\item
Alice and Bob consider a graph on $4(2n)=8n$ vertices and associate four vertices $a_i$, $b_i$, $c_i$, and $d_i$ to each $i\in[2n]$, i.e., through a predetermined ordering of the vertices. 
\item
If $x_i=0$, then Alice creates the directed edges $(a_i,b_i)$ and $(d_i,c_i)$, where we use the convention that $(v_1,v_2)$ represents the directed edge $v_1\rightarrow v_2$. 
Otherwise if $x_i=1$, then Alice adds the directed edges $(b_i,a_i)$ and $(c_i,d_i)$.
\item
If $w_i=0$, Bob adds the four directed edges $(b_{y_i},a_{z_i})$, $(b_{z_i},a_{y_i})$, $(d_{y_i},c_{z_i})$, and $(d_{z_i},c_{y_i})$ for each matching edge $m_i=(y_i,z_i)$ of $M$ from the $p$-Noisy Boolean Hidden Matching problem. 
Otherwise if $w_i=1$, then Bob adds the four directed edges $(b_{y_i},b_{z_i})$, $(a_{z_i},a_{y_i})$, $(d_{y_i},d_{z_i})$, and $(c_{z_i},c_{y_i})$. 
\end{itemize}
By design, the subgraph corresponding to $m_i$ has eight edges, but consists of exactly one cycle when $x_{y_i}\oplus x_{z_i}\oplus w_i=0$ and zero cycles when $x_{y_i}\oplus x_{z_i}\oplus w_i=1$. 
Finally, any unmmatched subgraph contributes four edges due to Alice's construction. 

\begin{figure}[!htb]
\centering
\begin{tikzpicture}[scale=2.5]
\draw[fill] (0,0) circle (0.03);
\draw[fill] (0.5,0) circle (0.03);
\draw[fill] (0.5,0.5) circle (0.03);
\draw[fill] (0,0.5) circle (0.03);
\draw[purple, ->] (0,0)--(0,0.4);
\draw[dashed, purple, ->] (0,0.5)--(0.4,0.5);
\draw[purple, ->] (0.5,0.5)--(0.5,0.1);
\draw[dashed, purple, ->] (0.5,0)--(0.1,0);
\node[below left] at (0,0){$a_1$};
\node[above left] at (0,0.5){$b_1$};
\node[below right] at (0.5,0){$a_2$};
\node[above right] at (0.5,0.5){$b_2$};

\draw[fill] (1.25+0,0) circle (0.03);
\draw[fill] (1.25+0.5,0) circle (0.03);
\draw[fill] (1.25+0.5,0.5) circle (0.03);
\draw[fill] (1.25+0,0.5) circle (0.03);
\draw[<-] (1.25+0,0.1)--(1.25+0,0.5);
\draw[dashed,->] (1.25+0,0.5)--(1.25+0.4,0.5);
\draw[<-] (1.25+0.5,0.4)--(1.25+0.5,0);
\draw[dashed,->] (1.25+0.5,0)--(1.25+0.1,0);
\node[below left] at (1.25+0,0){$c_1$};
\node[above left] at (1.25+0,0.5){$d_1$};
\node[below right] at (1.25+0.5,0){$c_2$};
\node[above right] at (1.25+0.5,0.5){$d_2$};
\node[above] at (0.75,0.8){$x_1=0$, $x_2=1$};

\draw[fill] (2.5+0,0) circle (0.03);
\draw[fill] (2.5+0.5,0) circle (0.03);
\draw[fill] (2.5+0.5,0.5) circle (0.03);
\draw[fill] (2.5+0,0.5) circle (0.03);
\draw[->] (2.5+0,0.5)--(2.5+0,0.1);
\draw[dashed,->] (2.5+0,0.5)--(2.5+0.45,0.05);
\draw[->] (2.5+0.5,0)--(2.5+0.5,0.45);
\draw[dashed,->] (2.5+0.5,0.5)--(2.5+0.05,0.05);
\node[below left] at (2.5+0,0){$a_3$};
\node[above left] at (2.5+0,0.5){$b_3$};
\node[below right] at (2.5+0.5,0){$a_4$};
\node[above right] at (2.5+0.5,0.5){$b_4$};

\draw[fill] (3.75+0,0) circle (0.03);
\draw[fill] (3.75+0.5,0) circle (0.03);
\draw[fill] (3.75+0.5,0.5) circle (0.03);
\draw[fill] (3.75+0,0.5) circle (0.03);
\draw[<-] (3.75+0,0.4)--(3.75+0,0);
\draw[dashed,->] (3.75+0,0.5)--(3.75+0.45,0.05);
\draw[<-] (3.75+0.5,0.1)--(3.75+0.5,0.5);
\draw[dashed,->] (3.75+0.5,0.5)--(3.75+0.05,0.05);
\node[below left] at (3.75+0,0){$c_3$};
\node[above left] at (3.75+0,0.5){$d_3$};
\node[below right] at (3.75+0.5,0){$d_4$};
\node[above right] at (3.75+0.5,0.5){$c_4$};
\node[above] at (3.125+0.25,0.8){$x_3=1$, $x_4=0$};

\node [below] at (0.875,-0.3){$w_1=1$};
\node [below] at (2.5+0.875,-0.3){$w_2=0$};
\end{tikzpicture}
\caption{Example of reduction from (Noisy) Boolean Hidden Matching to Maximum Acyclic Subgraph. 
Solid lines added by Alice, dashed lines added by Bob, and purple lines represent a cycle.  
Here we have $m_1=(1,2)$ and $m_2=(3,4)$. 
Note that $x_1\oplus x_2\oplus w_1=0$ has a cycle while $x_3\oplus x_4\oplus w_2=1$ has no cycles.}
\figlab{fig:mas:reduction}
\end{figure}
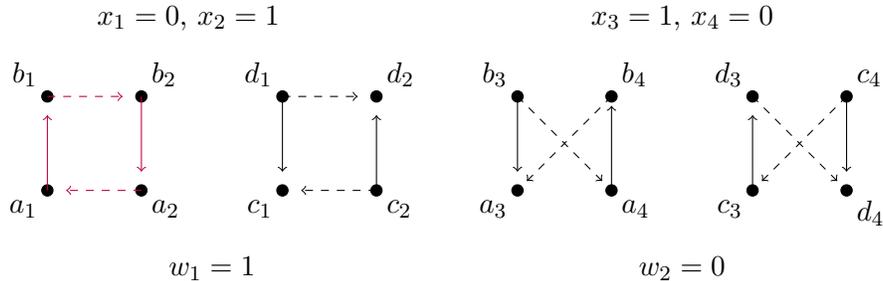

\thmmas*
\begin{proof}
Recall that the key property of the construction was that for each $i\in[2n]$, the subgraph corresponding to $m_i$ has eight edges, but consists of exactly one cycle when $x_{y_i}\oplus x_{z_i}\oplus w_i=0$ and zero cycles when $x_{y_i}\oplus x_{z_i}\oplus w_i=1$. 
Moreover, any subgraph corresponding to an unmatched vertex (of which there are $n$) will contribute two edges due to Alice's construction. 
Thus for the \YES case, when $Mx\oplus w=0^{\alpha n}$, the maximum acyclic subgraph has size 
\[7\cdot\frac{n}{2}+2\cdot n=\frac{11n}{2}.\]
In the \NO case, the probability that at least $\frac{np}{4}$ matching labels $w_i$ are flipped is at least $\frac{99}{100}$, since $p\ge\frac{128}{n}$ 
Therefore, in this case the maximum acyclic subgraph has size 
\[7\cdot\left(\frac{n}{2}-\frac{np}{4}\right)+8\cdot\frac{np}{4}+2\cdot n=\frac{11n}{2}+\frac{np}{4},\]
with probability at least $\frac{99}{100}$. 
For $p\le 1$, it follows that
\[\frac{11n/2+np/4}{11n/2}=1+\frac{p}{22}.\]
Hence, any $\left(1+\frac{p}{22}\right)$-approximation to the size of the maximum acyclic subgraph requires $\Omega\left(\sqrt{\frac{n}{p}}\right)$ space. 
\end{proof}

\subsection{Streaming Binary Tree Classification}
\seclab{sec:apps:tree}
In this section, we study the hardness of tree classification on insertion-only streams. 
In this setting, a stream of edge-insertions in an underlying graph with $n$ vertices arrive sequentially and the task is to classify whether the resulting graph is isomorphic to a complete binary tree on $n$ vertices, or $\delta$-far from one. 
\begin{problem}[Graph Classification]
Given a stream for the set of edges between $n$ vertices, determine whether the resulting graph induced by the stream is isomorphic to a complete binary tree on $n$ vertices. 
Here we assume that the number of vertices is $2^k-1$ for some integer $k$.
\end{problem}

\cite{EsfandiariHLMO18} used the following construction to show hardness of approximation for maximum matching size in the streaming model. 
Given an instance of Boolean Hidden Matching with an input vector $x\in\{0,1\}^{2n}$, a binary string $w\in\{0,1\}^{n}$, and a matching $M=\{(m_1,m'_1),(m_2,m'_2),\ldots,(m_n,m'_n)\}$ of $[2n]$, let $G$ be a graph with $6n$ vertices. 
Each bit $x_i$ is associated with vertices $v_i, v_{i,0}, v_{i,1}$ in $G$. 
Alice connects vertex $v_i$ to $v_{i,x_i}$ in $G$, e.g., if $v_i=0$ then Alice connects $v_i$ to $v_{i,0}$ and if $v_i=1$ then Alice connects $v_i$ to $v_{i,1}$. 
If $w_i=0$, then Bob creates an edge between $v_{m_i,0}$ and $v_{m'_i,1}$, as well as an edge between $v_{m_i,1}$ and $v_{m'_i,0}$. 
Otherwise if $w_i=1$, then Bob creates an edge between $v_{m_i,0}$ and $v_{m'_i,0}$, as well as an edge between $v_{m_i,1}$ and $v_{m'_i,1}$. 
Under this construction, if $x_{m_i}\oplus x_{m'_i}\oplus w_i=1$, then the six vertices $v_{m_i},v_{m_i,0},v_{m_i,1},v_{m'_i},v_{m'_i,0},v_{m'_i,1}$ form a path of length one and a path of length three.
Otherwise, if $x_{m_i}\oplus x_{m'_i}\oplus w_i=0$, then the six vertices form two paths of length two. 
We call this the EHLMO construction. 
See \figref{fig:bhh:paths} for an illustration. 

\begin{figure*}
\centering
\begin{subfigure}[b]{0.4\textwidth}
\begin{tikzpicture}
\draw[fill=black] (0,0) circle (0.1);
\node at (-0.5,0){$v_2$};
\draw[fill=black] (1,1) circle (0.1);
\node at (1,1.5){$v_{2,0}$};
\draw[fill=black] (1,-1) circle (0.1);
\node at (1,-1.5){$v_{2,1}$};
\draw[fill=black] (4,0) circle (0.1);
\node at (4.5,0){$v_4$};
\draw[fill=black] (3,1) circle (0.1);
\node at (3,1.5){$v_{4,0}$};
\draw[fill=black] (3,-1) circle (0.1);
\node at (3,-1.5){$v_{4,1}$};
\draw (0,0)--(1,1);
\draw[dashed] (1,1)--(3,-1);
\draw[dashed] (1,-1)--(3,1);
\draw (4,0)--(3,-1);
\end{tikzpicture}
\caption{Suppose $(m_4,m'_4)=(2,3)$, $x_2=0$, $x_3=1$, and $w_4=0$. Then $x_2\oplus x_3\oplus w_4=1$ and the construction forms a path of length one and a path of length three. Solid lines added by Alice, dashed lines added by Bob.}
\figlab{fig:bhh:zero}
\end{subfigure}
\qquad
\begin{subfigure}[b]{0.4\textwidth}
\begin{tikzpicture}
\draw[fill=black] (0,0) circle (0.1);
\node at (-0.5,0){$v_5$};
\draw[fill=black] (1,1) circle (0.1);
\node at (1,1.5){$v_{5,0}$};
\draw[fill=black] (1,-1) circle (0.1);
\node at (1,-1.5){$v_{5,1}$};
\draw[fill=black] (4,0) circle (0.1);
\node at (4.5,0){$v_7$};
\draw[fill=black] (3,1) circle (0.1);
\node at (3,1.5){$v_{7,0}$};
\draw[fill=black] (3,-1) circle (0.1);
\node at (3,-1.5){$v_{7,1}$};
\draw (0,0)--(1,-1);
\draw[dashed] (1,1)--(3,1);
\draw[dashed] (1,-1)--(3,-1);
\draw (4,0)--(3,1);
\end{tikzpicture}
\caption{Suppose $(m_6,m'_6)=(5,7)$, $x_5=1$, $x_7=0$, and $w_6=1$. Then $x_5\oplus x_7\oplus w_6=0$ and the construction forms two paths of length two. Solid lines added by Alice, dashed lines added by Bob.}
\figlab{fig:bhh:one}
\end{subfigure}
\caption{Construction of graph from instance of Boolean Hidden Matching. 
Observe that in the case $Mx\oplus w=1^n$, $v_{5,1}$ and $v_{7,0}$ can be embedded into the bottom layer of a complete binary tree to form another complete binary tree, while in the $Mx\oplus w=0^n$ case, $v_{2,0}$ and $v_{4,1}$ will induce a cycle in the graph.}
\figlab{fig:bhh:paths}
\end{figure*}
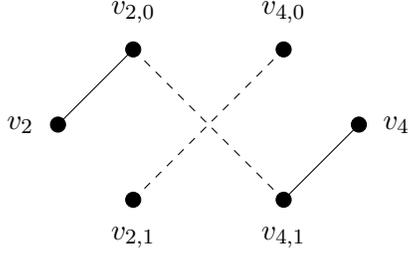
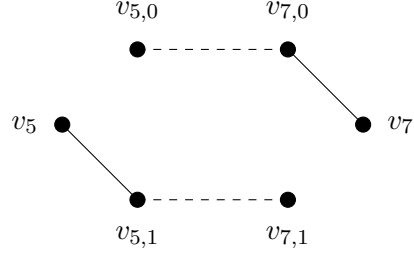

To show hardness of classifying whether an underlying graph is a complete binary tree, we use the EHLMO construction to embed an instance of $p$-Noisy BHM of size $2n$ into the bottom layer of a binary tree with $4n-1$ total nodes. 
For each $i\in[2n]$, the case $(Mx)_i\oplus w_i=0$ creates two paths of length two. 
We use these two paths of length two to extend the binary tree to an additional layer at two different nodes. 
On the other hand, the case $(Mx)_i\oplus w_i=1$ creates a path of length one and a path of length three. 
By using the same construction, the case $(Mx)_i\oplus w_i=1$ thus results in a non-root node in the tree having degree two; hence the resulting graph is a complete binary tree if and only if $Mx\oplus w=0^n$. 

Formally, let $n$ be a power of $2$ and let $T$ be a complete binary tree with $2n$ leaves. 
Consider the EHLMO construction to embed an instance of the above Noisy Boolean Hidden Matching problem with noise $p=\frac{4}{n}$ into the bottom layer of a tree, so that in the case where $Mx\oplus w=0^n$, the resulting graph is a complete binary tree, while in the case $Mx\oplus w\neq0^n$, the resulting graph contains a cycle.
More formally:
\begin{itemize}
\item
Alice creates a complete $6n$ vertices $v_i, v_{i,0}, v_{i,1}$ for each $i\in[2n]$.
\item
Alice creates a complete binary tree with leaves $v_{i,x_i}$ for $i\in[2n]$ (and new non-leaf vertices, distinct from any previously created).
\item
Alice also creates an edge between vertices $v_i$ and $v_{i,x_i}$ for each $i\in[2n]$. 
\item
For each edge $m_i=(y_i,z_i)$ of the matching $M$ with $i\in[n]$, Bob creates an edge between $v_{y_i,0}$ and $v_{z_i,w_i\oplus 1}$ as well as an edge between $v_{y_i,1}$ and $v_{z_i,w_i}$. 
\end{itemize}
\thmtreenoisybhm*
\begin{proof}
We claim that the resulting construction is a complete binary tree in the case where $Mx\oplus w=0^n$ and is not a complete binary tree in the case where $Mx\oplus w\neq0^n$, which happens with high constant probability in the \NO case of the underlying $p$-Noisy BHM instance. 
We first consider the \YES case, i.e., $Mx\oplus w=0^n$, so that $x_{y_i}\oplus x_{z_i}\oplus w_i=0$ for all $i\in[n]$, where we recall that the edge $m_i=(y_i,z_i)$. 
Observe that if $w_i=0$ and $x_{y_i}\oplus x_{z_i}\oplus w_i=0$, then $x_{y_i}\neq x_{z_i}$. 
Since Bob connects $v_{y_i,0}$ with $v_{z_i,0}$ and $v_{y_i,1}$ with $v_{z_i,1}$ if $w_i=1$, then $v_{y_i,x_{y_i}}$ will be the middle node of a path of length two. 
Similarly, $v_{z_i,x_{z_i}}$ will be the middle node of a path of length two. 
Hence, the resulting graph will be a complete binary tree. 

In the \NO case, each $i\in[n]$ has $x_{y_i}\oplus x_{z_i}\oplus w_i=1$ with probability $\frac{4}{n}$. 
Let $\mathcal{E}$ be the event that we are in the \NO case and there exists some $i\in[n]$ such that $x_{y_i}\oplus x_{z_i}\oplus w_i=1$. 
Since we are in the \NO case with probability $\frac{1}{2}$, then 
\[\PPr{\mathcal{E}}=\frac{1}{2}\left(1-\left(1-\frac{4}{n}\right)^n\right)>\frac{1}{2}\left(1-\frac{1}{e^4}\right)\ge\frac{15}{32}.\]
Conditioned on $\mathcal{E}$, Bob connects $v_{y_i,0}$ with $v_{z_i,1}$ and $v_{y_i,1}$ with $v_{z_i,0}$, so there exists an edge between $v_{y_i,x_{y_i}}$ and $v_{z_i,x_{z_i}}$ and the resulting graph contains a cycle. 

Hence, any algorithm $\mathcal{A}$ that classifies whether the underlying graph is a complete binary tree with probability $\frac{3}{4}$ also solves the Noisy Boolean Hidden Matching problem with probability at least $\frac{1}{2}\frac{3}{4}\cdot\frac{1}{2}\cdot\frac{15}{32}=\frac{39}{64}>\frac{2}{3}$, where the first term in the summand represents the probability of success in the \YES case and the second term in the summand represents the probability of success in the \NO case.
Thus by \thmref{thm:pbhh} with $t=2$, $\mathcal{A}$ uses $\Omega\left(\sqrt{\frac{n}{p}}\right)=\Omega(n)$ space. 
\end{proof}

We note that~\thmref{thm:tree:noisy:bhm} also follows by a reduction from the standard INDEX communication problem. The following more general \thmref{thm:tree:noisy:bhm:pt}, however, does not. It follows by the same reduction we use for~\thmref{thm:tree:noisy:bhm:pt} and follows from a nearly identical proof, with the only change being that the variable $p$ is set to $\epsilon$ in the reduction from $p$-Noisy BHM.

\thmtreenoisybhmpt*
\begin{proof}
We use the identical reduction to the proof of~\thmref{thm:tree:noisy:bhm}, with the parameter $p$ set to $\epsilon$ instead of $\tfrac4n$. It is still true that in the \YES case the resulting graph is isomorphic to a complete binary tree. In the \NO case, of the $n$ edge labels, we expect $pn=\epsilon n$ of edge labels to be flipped. In fact, since $\epsilon>\tfrac{64}n$, at least $\epsilon n/2$ edges are flipped with probability at least $99/100$. This results in $\epsilon n/2$ disjoint cycles in the underlying graph. As a result, the underlying graph requires at least $\epsilon n/2$ edge insertions or deletions to be transformed into a complete binary tree (or in fact a tree of any kind). Since $|V|=8n-1$, this means the underlying graph in the \NO case is at least $\epsilon/16$ far from being a complete binary tree.

The statement of the theorem then follows by~\thmref{thm:pbhh}.
\end{proof}

\section*{Acknowledgements}
Michael Kapralov and Jakab Tardos would like to thank support from ERC Starting Grant 759471. 
David P. Woodruff and Samson Zhou would like to thank support from NSF grant No. CCF-181584 and a Simons Investigator Award. 

\def\shortbib{0}
\bibliographystyle{alpha}
\bibliography{references}

\newcommand{\etalchar}[1]{$^{#1}$}
\begin{thebibliography}{BGM{\etalchar{+}}19}

\bibitem[AGM12a]{AhnGM12a}
Kook~Jin Ahn, Sudipto Guha, and Andrew McGregor.
\newblock Analyzing graph structure via linear measurements.
\newblock In {\em Proceedings of the Twenty-Third Annual {ACM-SIAM} Symposium
  on Discrete Algorithms, {SODA}}, pages 459--467, 2012.

\bibitem[AGM12b]{AhnGM12b}
Kook~Jin Ahn, Sudipto Guha, and Andrew McGregor.
\newblock Graph sketches: sparsification, spanners, and subgraphs.
\newblock In {\em Proceedings of the 31st {ACM} {SIGMOD-SIGACT-SIGART}
  Symposium on Principles of Database Systems, {PODS}}, pages 5--14, 2012.

\bibitem[AKL17]{AssadiKL17}
Sepehr Assadi, Sanjeev Khanna, and Yang Li.
\newblock On estimating maximum matching size in graph streams.
\newblock In {\em Proceedings of the Twenty-Eighth Annual {ACM-SIAM} Symposium
  on Discrete Algorithms, {SODA}}, pages 1723--1742, 2017.

\bibitem[AKLY16]{AssadiKLY16}
Sepehr Assadi, Sanjeev Khanna, Yang Li, and Grigory Yaroslavtsev.
\newblock Maximum matchings in dynamic graph streams and the simultaneous
  communication model.
\newblock In {\em Proceedings of the Twenty-Seventh Annual {ACM-SIAM} Symposium
  on Discrete Algorithms, {SODA}}, pages 1345--1364, 2016.

\bibitem[AKSY20]{assadi2020multipass}
Sepehr Assadi, Gillat Kol, Raghuvansh~R. Saxena, and Huacheng Yu.
\newblock Multi-pass graph streaming lower bounds for cycle counting, max-cut,
  matching size, and other problems.
\newblock In {\em 61st {IEEE} Annual Symposium on Foundations of Computer
  Science, {FOCS}}, pages 354--364, 2020.

\bibitem[AN21]{AssadiN21}
Sepehr Assadi and Vishvajeet N.
\newblock Graph streaming lower bounds for parameter estimation and property
  testing via a streaming {XOR} lemma.
\newblock In {\em {STOC}: 53rd Annual {ACM} {SIGACT} Symposium on Theory of
  Computing}, pages 612--625, 2021.

\bibitem[Ass21]{assadi}
Sepehr Assadi.
\newblock Personal communication, 2021.

\bibitem[BGM{\etalchar{+}}19]{BuryGMMSVZ19}
Marc Bury, Elena Grigorescu, Andrew McGregor, Morteza Monemizadeh, Chris
  Schwiegelshohn, Sofya Vorotnikova, and Samson Zhou.
\newblock Structural results on matching estimation with applications to
  streaming.
\newblock {\em Algorithmica}, 81(1):367--392, 2019.

\bibitem[BHNT15]{BhattacharyaHNT15}
Sayan Bhattacharya, Monika Henzinger, Danupon Nanongkai, and Charalampos~E.
  Tsourakakis.
\newblock Space- and time-efficient algorithm for maintaining dense subgraphs
  on one-pass dynamic streams.
\newblock In {\em Proceedings of the Forty-Seventh Annual {ACM} on Symposium on
  Theory of Computing, {STOC}}, pages 173--182, 2015.

\bibitem[BJK08]{Bar-YossefJK08}
Ziv Bar{-}Yossef, T.~S. Jayram, and Iordanis Kerenidis.
\newblock Exponential separation of quantum and classical one-way communication
  complexity.
\newblock {\em {SIAM} J. Comput.}, 38(1):366--384, 2008.

\bibitem[BJKS04]{Bar-YossefJKS04}
Ziv Bar{-}Yossef, T.~S. Jayram, Ravi Kumar, and D.~Sivakumar.
\newblock An information statistics approach to data stream and communication
  complexity.
\newblock {\em J. Comput. Syst. Sci.}, 68(4):702--732, 2004.

\bibitem[BR14]{BravermanR14}
Mark Braverman and Anup Rao.
\newblock Information equals amortized communication.
\newblock {\em {IEEE} Trans. Inf. Theory}, 60(10):6058--6069, 2014.

\bibitem[CCE{\etalchar{+}}16]{ChitnisCEHMMV16}
Rajesh Chitnis, Graham Cormode, Hossein Esfandiari, MohammadTaghi Hajiaghayi,
  Andrew McGregor, Morteza Monemizadeh, and Sofya Vorotnikova.
\newblock Kernelization via sampling with applications to finding matchings and
  related problems in dynamic graph streams.
\newblock In {\em Proceedings of the Twenty-Seventh Annual {ACM-SIAM} Symposium
  on Discrete Algorithms, {SODA}}, pages 1326--1344, 2016.

\bibitem[CCHM15]{ChitnisCHM15}
Rajesh Chitnis, Graham Cormode, Mohammad~Taghi Hajiaghayi, and Morteza
  Monemizadeh.
\newblock Parameterized streaming: Maximal matching and vertex cover.
\newblock In {\em Proceedings of the Twenty-Sixth Annual {ACM-SIAM} Symposium
  on Discrete Algorithms, {SODA}}, pages 1234--1251, 2015.

\bibitem[CFPS20]{CzumajF0S20}
Artur Czumaj, Hendrik Fichtenberger, Pan Peng, and Christian Sohler.
\newblock Testable properties in general graphs and random order streaming.
\newblock In {\em Approximation, Randomization, and Combinatorial Optimization.
  Algorithms and Techniques, {APPROX/RANDOM}}, 2020.

\bibitem[CGSV21]{ChouGSV21}
Chi{-}Ning Chou, Alexander Golovnev, Madhu Sudan, and Santhoshini Velusamy.
\newblock Approximability of all boolean csps in the dynamic streaming setting.
\newblock In {\em 62nd {IEEE} Annual Symposium on Foundations of Computer
  Science, {FOCS}}, 2021.
\newblock (to appear).

\bibitem[CGV20]{CGV20}
Chi-Ning Chou, Sasha Golovnev, and Santhoshini Velusamy.
\newblock Optimal streaming approximations for all boolean max-2csps and
  max-ksat.
\newblock In {\em IEEE 61st Annual Symposium on Foundations of Computer Science
  (FOCS)}, pages 330--341, 2020.

\bibitem[CS14]{CrouchS14}
Michael Crouch and Daniel~S. Stubbs.
\newblock Improved streaming algorithms for weighted matching, via unweighted
  matching.
\newblock In {\em Approximation, Randomization, and Combinatorial Optimization.
  Algorithms and Techniques, {APPROX/RANDOM}}, pages 96--104, 2014.

\bibitem[EHL{\etalchar{+}}18]{EsfandiariHLMO18}
Hossein Esfandiari, MohammadTaghi Hajiaghayi, Vahid Liaghat, Morteza
  Monemizadeh, and Krzysztof Onak.
\newblock Streaming algorithms for estimating the matching size in planar
  graphs and beyond.
\newblock {\em {ACM} Trans. Algorithms}, 14(4):48:1--48:23, 2018.

\bibitem[FKM{\etalchar{+}}05]{FeigenbaumKMSZ05}
Joan Feigenbaum, Sampath Kannan, Andrew McGregor, Siddharth Suri, and Jian
  Zhang.
\newblock On graph problems in a semi-streaming model.
\newblock {\em Theor. Comput. Sci.}, 348(2-3):207--216, 2005.

\bibitem[FKM{\etalchar{+}}08]{FeigenbaumKMSZ08}
Joan Feigenbaum, Sampath Kannan, Andrew McGregor, Siddharth Suri, and Jian
  Zhang.
\newblock Graph distances in the data-stream model.
\newblock {\em {SIAM} J. Comput.}, 38(5):1709--1727, 2008.

\bibitem[GKdW06]{GavinskyKW06}
Dmitry Gavinsky, Julia Kempe, and Ronald de~Wolf.
\newblock Exponential separation of quantum and classical one-way communication
  complexity for a boolean function.
\newblock {\em CoRR}, abs/quant-ph/0607174, 2006.

\bibitem[GKK{\etalchar{+}}08]{GavinskyKKRW08}
Dmitry Gavinsky, Julia Kempe, Iordanis Kerenidis, Ran Raz, and Ronald de~Wolf.
\newblock Exponential separation for one-way quantum communication complexity,
  with applications to cryptography.
\newblock {\em {SIAM} J. Comput.}, 38(5):1695--1708, 2008.

\bibitem[GMT15]{GuhaMT15}
Sudipto Guha, Andrew McGregor, and David Tench.
\newblock Vertex and hyperedge connectivity in dynamic graph streams.
\newblock In {\em Proceedings of the 34th {ACM} Symposium on Principles of
  Database Systems, {PODS}}, pages 241--247, 2015.

\bibitem[GT19]{GuruswamiT19}
Venkatesan Guruswami and Runzhou Tao.
\newblock Streaming hardness of unique games.
\newblock In {\em Approximation, Randomization, and Combinatorial Optimization.
  Algorithms and Techniques, {APPROX/RANDOM}}, pages 5:1--5:12, 2019.

\bibitem[GVV17]{GuruswamiVV17}
Venkatesan Guruswami, Ameya Velingker, and Santhoshini Velusamy.
\newblock Streaming complexity of approximating max 2csp and max acyclic
  subgraph.
\newblock In {\em Approximation, Randomization, and Combinatorial Optimization.
  Algorithms and Techniques, {APPROX/RANDOM}}, volume~81, pages 8:1--8:19,
  2017.

\bibitem[HP19]{HuangP19}
Zengfeng Huang and Pan Peng.
\newblock Dynamic graph stream algorithms in o(n) space.
\newblock {\em Algorithmica}, 81(5):1965--1987, 2019.

\bibitem[JPY12]{JainPY12}
Rahul Jain, Attila Pereszl{\'{e}}nyi, and Penghui Yao.
\newblock A direct product theorem for bounded-round public-coin randomized
  communication complexity.
\newblock {\em CoRR}, abs/1201.1666, 2012.

\bibitem[Kap13]{Kapralov13}
Michael Kapralov.
\newblock Better bounds for matchings in the streaming model.
\newblock In {\em Proceedings of the Twenty-Fourth Annual {ACM-SIAM} Symposium
  on Discrete Algorithms, {SODA}}, pages 1679--1697, 2013.

\bibitem[KK15]{KoganK15}
Dmitry Kogan and Robert Krauthgamer.
\newblock Sketching cuts in graphs and hypergraphs.
\newblock In {\em Proceedings of the 2015 Conference on Innovations in
  Theoretical Computer Science, {ITCS}}, pages 367--376, 2015.

\bibitem[KK19]{KapralovK19}
Michael Kapralov and Dmitry Krachun.
\newblock An optimal space lower bound for approximating {MAX-CUT}.
\newblock In {\em Proceedings of the 51st Annual {ACM} {SIGACT} Symposium on
  Theory of Computing, {STOC} 2019, Phoenix, AZ, USA, June 23-26, 2019}, pages
  277--288, 2019.

\bibitem[KKP18]{KallaugherKP18}
John Kallaugher, Michael Kapralov, and Eric Price.
\newblock The sketching complexity of graph and hypergraph counting.
\newblock In {\em 59th {IEEE} Annual Symposium on Foundations of Computer
  Science, {FOCS} 2018, Paris, France, October 7-9, 2018}, pages 556--567,
  2018.

\bibitem[KKS14]{KapralovKS14}
Michael Kapralov, Sanjeev Khanna, and Madhu Sudan.
\newblock Approximating matching size from random streams.
\newblock In {\em Proceedings of the Twenty-Fifth Annual {ACM-SIAM} Symposium
  on Discrete Algorithms, {SODA}}, pages 734--751, 2014.

\bibitem[KKS15]{KapralovKS15}
Michael Kapralov, Sanjeev Khanna, and Madhu Sudan.
\newblock Streaming lower bounds for approximating {MAX-CUT}.
\newblock In {\em Proceedings of the Twenty-Sixth Annual {ACM-SIAM} Symposium
  on Discrete Algorithms, {SODA}}, pages 1263--1282, 2015.

\bibitem[KKSV17]{KapralovKSV17}
Michael Kapralov, Sanjeev Khanna, Madhu Sudan, and Ameya Velingker.
\newblock $(1+\omega(1))$-approximation to {MAX-CUT} requires linear space.
\newblock In {\em Proceedings of the Twenty-Eighth Annual {ACM-SIAM} Symposium
  on Discrete Algorithms, {SODA}}, pages 1703--1722, 2017.

\bibitem[KLM{\etalchar{+}}17]{KapralovLMMS17}
Michael Kapralov, Yin~Tat Lee, Cameron Musco, Christopher Musco, and Aaron
  Sidford.
\newblock Single pass spectral sparsification in dynamic streams.
\newblock {\em {SIAM} J. Comput.}, 46(1):456--477, 2017.

\bibitem[KP17]{KallaugherP17}
John Kallaugher and Eric Price.
\newblock A hybrid sampling scheme for triangle counting.
\newblock In {\em Proceedings of the Twenty-Eighth Annual {ACM-SIAM} Symposium
  on Discrete Algorithms, {SODA}}, pages 1778--1797, 2017.

\bibitem[KR06]{KerenidisR06}
Iordanis Kerenidis and Ran Raz.
\newblock The one-way communication complexity of the boolean hidden matching
  problem.
\newblock {\em CoRR}, abs/quant-ph/0607173, 2006.

\bibitem[KW14]{KapralovW14}
Michael Kapralov and David~P. Woodruff.
\newblock Spanners and sparsifiers in dynamic streams.
\newblock In {\em {ACM} Symposium on Principles of Distributed Computing,
  {PODC}}, pages 272--281, 2014.

\bibitem[LW16]{LiW16}
Yi~Li and David~P. Woodruff.
\newblock On approximating functions of the singular values in a stream.
\newblock In {\em Proceedings of the 48th Annual {ACM} {SIGACT} Symposium on
  Theory of Computing, {STOC}}, pages 726--739, 2016.

\bibitem[MMPS17]{MonemizadehMPS17}
Morteza Monemizadeh, S.~Muthukrishnan, Pan Peng, and Christian Sohler.
\newblock Testable bounded degree graph properties are random order streamable.
\newblock In {\em 44th International Colloquium on Automata, Languages, and
  Programming, {ICALP}}, pages 131:1--131:14, 2017.

\bibitem[NY19]{NelsonY19}
Jelani Nelson and Huacheng Yu.
\newblock Optimal lower bounds for distributed and streaming spanning forest
  computation.
\newblock In {\em Proceedings of the Thirtieth Annual {ACM-SIAM} Symposium on
  Discrete Algorithms, {SODA}}, pages 1844--1860, 2019.

\bibitem[PS17]{PazS17}
Ami Paz and Gregory Schwartzman.
\newblock A {(2} + $\epsilon$-approximation for maximum weight matching in the
  semi-streaming model.
\newblock In {\em Proceedings of the Twenty-Eighth Annual {ACM-SIAM} Symposium
  on Discrete Algorithms, {SODA}}, pages 2153--2161, 2017.

\bibitem[SW15]{SunW15}
Xiaoming Sun and David~P. Woodruff.
\newblock Tight bounds for graph problems in insertion streams.
\newblock In {\em Approximation, Randomization, and Combinatorial Optimization.
  Algorithms and Techniques, {APPROX/RANDOM}}, pages 435--448, 2015.

\bibitem[VY11]{VerbinY11}
Elad Verbin and Wei Yu.
\newblock The streaming complexity of cycle counting, sorting by reversals, and
  other problems.
\newblock In {\em Proceedings of the Twenty-Second Annual {ACM-SIAM} Symposium
  on Discrete Algorithms, {SODA}}, pages 11--25, 2011.

\end{thebibliography}

\end{document}